\colorlet{shadecolor}{yellow}
\DeclareMathAlphabet{\mathbbb}{U}{bbold}{m}{n}
\theoremstyle{definition}
\newtheorem{theorem}{Theorem}
\newtheorem{lemma}[theorem]{Lemma}
\newtheorem{proposition}[theorem]{Proposition}
\newtheorem{corollary}{Corollary}[theorem]
\theoremstyle{remark}
\newtheorem{example}{Example}
\newtheorem{remark}{Remark}
\newcommand{\hermconj}  {^{\mathsf{H}}}
\newcommand{\trans}     {^{\mathsf{T}}}
\newcommand{\pha}[1]    {\underline{#1}}
\newcommand{\phaconj}[1]{\overline{\underline{#1}}}
\newcommand{\vect}[1]   {\boldsymbol{#1}}
\newcommand{\mat}[1]    {\boldsymbol{#1}}
\definecolor{mygray}{gray}{.9}
\newcommand{\tabitem}{~~\llap{\textbullet}~~}
\begin{document}
\bstctlcite{IEEEexample:BSTcontrol}
    \title{Quantitative Stability Conditions for Grid-Forming Converters With Complex Droop Control}    
    \author{Xiuqiang~He,~\IEEEmembership{Member,~IEEE,}
          Linbin~Huang,~\IEEEmembership{Member,~IEEE,}
          Irina Subotić,\\
          Verena~Häberle,~\IEEEmembership{Graduate Student Member,~IEEE,}
        and~Florian~Dörfler,~\IEEEmembership{Senior Member,~IEEE}
  \thanks{This work was supported by the European Union’s Horizon 2020 research and innovation program under Grant 883985.}
  \thanks{The authors are with the Automatic Control Laboratory, ETH Zürich, 8092 Zürich, Switzerland. Email:\{xiuqhe, linhuang, subotici, verenhae, dorfler\}@ethz.ch.}}

\maketitle


\begin{abstract}
In this paper, we analytically study the transient stability of grid-connected converters with grid-forming complex droop control, also known as dispatchable virtual oscillator control. We prove theoretically that complex droop control, as a state-of-the-art grid-forming control, always possesses steady-state equilibria whereas classical droop control does not. We provide quantitative conditions for complex droop control maintaining \textit{transient stability} (global asymptotic stability) under grid disturbances, which is beyond the well-established local (non-global) stability for classical droop control. For the \textit{transient instability} of complex droop control, we reveal that the unstable trajectories are bounded, manifesting as limit cycle oscillations. Moreover, we extend our stability results from \textit{second-order} grid-forming control dynamics to \textit{full-order} system dynamics that additionally encompass both circuit electromagnetic transients and inner-loop dynamics. Our theoretical results contribute an insightful understanding of the transient stability and instability of complex droop control and offer practical guidelines for parameter tuning and stability guarantees.
\end{abstract}

\begin{IEEEkeywords}
Complex droop control, complex frequency, dispatchable virtual oscillator control (dVOC), grid-forming control, transient stability.
\end{IEEEkeywords}

\section{Introduction}

\IEEEPARstart{T}{HE} ubiquitous penetration of converter-interfaced generation in power grids has been displacing a substantial number of synchronous generators. Grid-forming (GFM) responsibilities initially shouldered by synchronous generators for establishing and regulating grid frequency and voltage are increasingly being transferred to converter-interfaced generation systems. Therefore, grid-forming converters are becoming indispensable assets in modern power systems \cite{rossano2022gridforming}. Under large disturbances, GFM converters need to maintain transient stability and provide continuous services for power grid operation \cite{rosso2021gridforming}. Since transient stability involves a wide range of changes in the system operating state, the nonlinear dynamic characteristics of GFM converters play a significant role. Particularly, in comparison to small-signal local stability, transient stability faces greater challenges in nonlinear analysis and control design \cite{dorfler2023control}.

Concerning early-stage GFM control strategies, e.g., classical p/f and q/v droop control \cite{simpson2013synchronization,dorfler2012synchronization,ainsworth2013structure,schiffer2014conditions,wu2019design,huang2019transient} and virtual synchronous machine (VSM) \cite{shuai2019transient,schiffer2019global,choopani2020newmulti,rokrok2022transient}, their stability in islanded or grid-connected scenarios has been studied extensively, as summarized in Table~\ref{tab:literature-review}(a). In particular, since the development of classical droop control is based on the single-input single-output (SISO) decoupling and linear approximation of the network power flow, it is generally observed that classical droop control schemes exhibit good small-signal/local stability performance \cite{du2020comparative}, and their local stability can be guaranteed theoretically \cite{simpson2013synchronization,dorfler2012synchronization,ainsworth2013structure}. However, the large-signal/transient response behavior of classical droop control is usually inferior, with rigorous guarantees of large-signal/transient/global stability established only under assumptions of inductive networks and/or fixed voltage magnitudes \cite{schiffer2014conditions,schiffer2019global}. The VSM strategy, which follows the same droop philosophy in its building block, further introduces a synthetic inertia response. The resulting second-order control structure, however, may deteriorate transient stability \cite{pan2020transient}, where the region of attraction (ROA) and the critical fault clearing time (CCT) are typically the main concerns \cite{shuai2019transient}.

\begin{table}
\scriptsize
\centering
\caption{Past Studies on Stability of Typical GFM Controls}
\setlength{\tabcolsep}{4pt} 
\renewcommand{\arraystretch}{1.0} 
\begin{threeparttable}
\begin{tabular}{lllllll}
\hline\hline
Ref. & Year & Type & Assumption & Method & Result & Publ.\tnote{1} \\
\hline
\multicolumn{7}{c}{(a) Elementary GFM controls} \\
\hline
\makecell[l]{\cite{simpson2013synchronization}}
& 2012 
& \makecell[l]{1st-order \\ Kuramoto \\ model} 
& \makecell[l]{Volt. fixed, \\ network \\ inductive} 
& \makecell[l]{Linearization} 
& Local
& Ctrl. \\
\arrayrulecolor{black!10}\hline
\makecell[l]{\cite{dorfler2012synchronization}}
& 2012 
& \makecell[l]{2nd-order \\ Kuramoto \\ model} 
& \makecell[l]{Volt. fixed} 
& \makecell[l]{Singular \\ perturbation, \\ contraction} 
& Local
& Ctrl. \\
\hline
\makecell[l]{\cite{ainsworth2013structure}}
& 2013 
& \makecell[l]{p/f droop in \\ structure-\\preserving \\ network} 
& \makecell[l]{Volt. fixed, \\ network \\ inductive} 
& \makecell[l]{Graph-\\theoretic \\ methods, \\ linearization} 
& Local
& PS \\
\hline
\makecell[l]{\cite{schiffer2014conditions}}
& 2014 
& \makecell[l]{p/f and \\ q/v droop} 
& \makecell[l]{Network \\ inductive} 
& \makecell[l]{Lyapunov-\\like analysis} 
& \makecell[l]{Local and \\ global} 
& Ctrl. \\
\hline
\makecell[l]{\cite{wu2019design}}
& 2019 
& \makecell[l]{p/f droop} 
& \makecell[l]{Volt. fixed, \\ network \\ inductive} 
& \makecell[l]{Phase \\ portrait} 
& Global
& PE \\
\hline
\makecell[l]{\cite{huang2019transient}}
& 2019 
& \makecell[l]{Current-\\limited \\ p/f droop} 
& \makecell[l]{Volt. fixed, \\ network \\ inductive} 
& \makecell[l]{Power-angle \\ curve analysis} 
& ROA
& PS \\
\hline
\makecell[l]{\cite{shuai2019transient}}
& 2019 
& \makecell[l]{VSM} 
& \makecell[l]{Static q/v \\ relationship} 
& \makecell[l]{Energy \\ function} 
& ROA
& PS \\
\hline
\makecell[l]{\cite{schiffer2019global}}
& 2019 
& \makecell[l]{VSM}
& \makecell[l]{Volt. fixed, \\ network \\ inductive} 
& \makecell[l]{Multivariable \\ cell structure, \\ Leonov func.} 
& \makecell[l]{Almost \\ global}
& Ctrl. \\
\hline
\makecell[l]{\cite{choopani2020newmulti}}
& 2020
& \makecell[l]{VSM} 
& \makecell[l]{Volt. fixed, \\ network \\ inductive} 
& \makecell[l]{Energy \\ function} 
& ROA
& PS \\
\hline
\makecell[l]{\cite{rokrok2022transient}}
& 2022 
& \makecell[l]{Current-\\ limited \\ VSM} 
& \makecell[l]{Volt. fixed, \\ network \\ inductive} 
& \makecell[l]{Power-angle \\ curve analysis} 
& ROA
& PS \\
\arrayrulecolor{black}\hline
\multicolumn{7}{c}{(b) Advanced GFM controls} \\
\hline
\makecell[l]{\cite{colombino2019global} \\ \cite{gross2019effect} \\ \cite{subotic2021lyapunov} \\ \cite{he2023nonlinear}}
& \makecell{2019\\2019\\2021\\2022} 
& \makecell[l]{dVOC}
& \makecell[l]{Consistent \\ setpoints} 
& \makecell[l]{Lyapunov-\\like analysis} 
& \makecell[l]{Almost \\ global} 
& Ctrl. \\
\arrayrulecolor{black!10}\hline
\makecell[l]{\cite{yu2021comparative}}
& 2021 
& \makecell[l]{dVOC} 
& \makecell[l]{Network \\ inductive} 
& \makecell[l]{2D phase \\ portrait} 
& Global 
& PE \\
\hline
\makecell[l]{\cite{awal2021transient}}
& 2021 
& \makecell[l]{dVOC} 
& \makecell[l]{Network \\ static} 
& \makecell[l]{3D phase \\ portrait} 
& Global 
& PE \\
\hline
\makecell[l]{\cite{arghir2020electronic}}
& 2018
& \makecell[l]{Matching \\ control} 
& \makecell[l]{Grid signal \\ available} 
& \makecell[l]{Hamiltonian \\ analysis} 
& Global 
& PE \\
\hline
\makecell[l]{\cite{tayyebi2022gridforming}}
& 2022 
& \makecell[l]{Hybrid angle\\control} 
& \makecell[l]{Voltage \\ fixed} 
& \makecell[l]{Lyapunov \\ analysis} 
& \makecell[l]{Almost \\ global} 
& Ctrl. \\
\hline
\makecell[l]{\cite{subotic2022dualport}}
& 2022 
& \makecell[l]{Power-bal. \\ dual-port \\ GFM control} 
& \makecell[l]{Volt. fixed, \\ network \\ inductive} 
& \makecell[l]{LaSalle \\ invariance \\ principle} 
& Local 
& Ctrl. \\
\arrayrulecolor{black}
\hline\hline
\end{tabular}
\label{tab:literature-review}
 \begin{tablenotes}
        \scriptsize
        \item[1] Publication communities include control community (Ctrl.), power electronics community (PE), and power systems community (PS).
 \end{tablenotes}
\end{threeparttable}
\end{table}

In recent years, advanced GFM controls have been widely explored, e.g., dispatchable virtual oscillator control (dVOC) \cite{colombino2019global,gross2019effect,subotic2021lyapunov,he2023nonlinear,yu2021comparative,awal2021transient},  matching control \cite{arghir2020electronic}, hybrid angle control \cite{tayyebi2022gridforming}, and dual-port GFM control \cite{subotic2022dualport}, see Table~\ref{tab:literature-review}(b). These controls are well-designed to address various control objectives and ensure stability, particularly global stability. Among these controls, dVOC, developed in \cite{colombino2019global}, has attracted increasing attention due to its unique capabilities in synchronization and voltage stabilization. Our recent work \cite{he2022complex} has reformulated dVOC equivalently as complex-power complex-frequency droop control (i.e., \textit{complex droop control}), based on a novel concept complex frequency \cite{milano2022complex}. Complex frequency is a unifying/compact representation of angular frequency and the rate of change of voltage. In this sense, complex droop control links active and reactive power inputs to frequency and rate-of-change-of-voltage outputs with a non-trivial coupling. As a consequence, the control is structurally multivariable and also intrinsically nonlinear, in contrast to SISO linear loops in classical droop control or VSM. In particular, complex droop control can effectively handle the inherent coupling and nonlinearity in the network's active and reactive power flows. In \cite{colombino2019global,gross2019effect,subotic2021lyapunov,he2023nonlinear}, a more general global stability guarantee has been established for dVOC, which goes beyond the local stability of classical droop control and beyond the scenarios of inductive networks and/or fixed voltages. The stability results have been experimentally validated and often reproduced \cite{seo2019dispatchable,lu2022virtual}.

While the stability of dVOC (i.e., complex droop control) in converter-based \textit{islanded} systems has been well studied in \cite{colombino2019global,gross2019effect,subotic2021lyapunov,he2023nonlinear}, the stability in \textit{grid connection} remains unaddressed theoretically. In this regard, the grid voltage appears as a forced input to the nonlinear system dynamics, altering both the dynamic response and the steady-state solution, presenting different challenges in nonlinear stability analysis. As can be seen in Table~\ref{tab:literature-review}(b), the phase-portrait method has been applied in \cite{yu2021comparative,awal2021transient} to graphically observe the transient behavior of dVOC under grid disturbances. However, this method relies on numerically solving the transient trajectories, and it cannot yield analytical results. To the best of our knowledge, the analytical results of transient stability for dVOC in a grid-connected context remain absent in the literature. The absence of theoretical results hinders the application of dVOC in the control of grid-connected converters. Likewise, there is limited attention on the \textit{transient instability} behavior analysis for dVOC. Another big research gap exists when it comes to the analytical exploration of \textit{transient stability of full-order dynamics} in GFM converters in grid connection. Namely, many prior studies as in Table~\ref{tab:literature-review} have focused solely on the dynamics of GFM controls while overlooking both circuit electromagnetic transients and inner-loop dynamics. Established small-signal stability conditions that consider full-order dynamics, e.g., \cite{huanghinf2020,saidistributed2020,deypassivity2023}, cannot ensure transient stability due to their dependency on equilibrium points and limitation to the small-signal regime. To summarize, we still lack quantitative conditions for transient stability guarantees.

In this paper, our objectives are threefold: firstly, to explore the transient stability condition of complex droop control (i.e., dVOC) in a reduced-order grid-connected converter system; secondly, to investigate how the transient instability of complex droop control behaves; and thirdly, to examine the transient stability condition for the full-order system dynamics. The contributions of this study are summarized as follows:
\begin{itemize}
    \item We prove that complex droop control always possesses steady-state equilibrium points whereas classical droop control and VSM do not. Additionally, complex droop control maintains transient stability under our provided quantitative parametric stability conditions.
    \item We prove that the trajectories of complex droop control are bounded and the transient instability of complex droop control manifests as limit cycle oscillations.
    \item We prove that the full-order system dynamics, which encompasses GFM control dynamics, circuit electromagnetic dynamics, and inner-loop dynamics, maintain transient stability within a wide region of attraction, and we provide quantitative parametric conditions for the transient stability of the full-order system dynamics.
\end{itemize}

The remainder of this paper proceeds as follows. The description of complex droop control and the comparison with classical droop control are presented in Section II. Full-order and reduced-order system models are formulated in Section III, where the stability problem statement is also included. Section IV presents the transient stability and instability analysis for the reduced-order system, and Section V extends the stability results to the full-order system. Case studies are presented in Section VI, and Section VII concludes this paper.

\section{Preliminaries}


\subsection{Definitions of Relevant Complex Variables}

\textit{1) Voltage Vector and Complex Voltage:}
Under three-phase balanced conditions, the voltage at the converter terminal node can be expressed in $\alpha\beta$ rectangular coordinates as
\begin{equation}
\label{eq:voltage-vector-expression}
    \vect{v} \coloneqq [ v_{\alpha}\ \ v_{\beta} ]\trans \coloneqq [ v\cos\theta \ \ v\sin\theta ]\trans,
\end{equation}
where $v$ denotes the voltage amplitude and $\theta$ denotes the rotational phase-angle.
We define the complex voltage vector $\pha{v}$ in the associated complex vector coordinate as
\begin{equation}
\label{eq:complex-voltage}
    \pha{v} \coloneqq v_\alpha + jv_\beta = v \left(\cos\theta + j\sin\theta\right) = {v}{e^{j{\theta}}},
\end{equation}
where the underline indicates complex-valued variables.

\textit{2) Complex Angle:}
Consider the complex voltage vector in \eqref{eq:complex-voltage} with $v > 0$. We define the complex angle vector $\pha{\vartheta}$ \cite{milano2022complex} as
\begin{equation}
\label{eq:complex-angle}
    \pha{\vartheta} \coloneqq {u} + j{\theta} \coloneqq \ln{v} + j{\theta}, \quad \Rightarrow \quad \pha{v} = e^{\pha{\vartheta}},
\end{equation}
where $u \coloneqq \ln{v}$ is defined as voltage logarithm. The complex-angle $\pha{\vartheta}$ contains two-dimensional information on the amplitude and angle of the voltage\footnote{Concerning the physical unit of $\pha{\vartheta}$, we consider $v$ in per-unit and $\theta$ in radian so that $\pha{\vartheta}$ is dimensionless or with a unit similar to per-unit/radian.}. From the definition of the complex angle in \eqref{eq:complex-angle}, it follows that $\pha{v} = e^{\pha{\vartheta}}$, which serves as a transformation between complex-voltage coordinates and complex-angle coordinates.

\textit{3) Complex Frequency:}
Based on the time derivative of ${\pha{\vartheta}}$, we define the associated complex frequency $\pha{\varpi}$ \cite{milano2022complex} as
\begin{equation}
\label{eq:complex-frequency}
    \pha{\varpi} \coloneqq \dot {\pha{\vartheta}} = {\dot v}/v + j{\dot \theta} = {\dot{\pha v}}/\pha v =: {\varepsilon} + j{\omega},
\end{equation}
where $ \varepsilon = \dot{v}/{v}$ is the rate of change of voltage, and $\omega$ denotes the angular frequency. More precisely, $\varepsilon$ denotes the radial frequency that represents the change in the amplitude while $\omega$ represents the standard angular frequency. This outlines a geometrical interpretation of the complex frequency.

\textit{4) Normalized Complex Power:}
In conventional power systems, the balance of active power indicates frequency synchronization. To indicate ``complex-frequency" synchronization, we define a ``complex version" of power. As in \cite{he2022complex}, the complex, conjugated, and normalized power $\phaconj{\varsigma}$ is defined as
\begin{equation}
\label{eq:complex-power}
    \phaconj{\varsigma} \coloneqq \phaconj{s}/v^2 = (p  - jq)/v^2 = {\phaconj{v}}\,{\pha{i}}/v^2 = {\pha{i}}/{\pha{v}},
\end{equation}
where the overline indicates conjugate variables, $\pha{s}$ is standard complex power with $p$ and $q$ being active and reactive power, and $\pha{i}$ represents output current. For brevity of notation, we further define normalized active and reactive power as $\rho \coloneqq p/v^2 = \Re{\phaconj{\varsigma}},\, \sigma \coloneqq q/v^2 = -\Im{\phaconj{\varsigma}}$, which will be employed in complex droop control in the following.

\begin{table*}
\centering
\caption{Comparison Between Classical Droop Control and Complex Droop Control (for Dominantly Inductive Grids).}
\begin{tabular}{ccc}
\hline\hline
 & Classical droop control & Complex droop control (dVOC) \\
\hline
Controller &
$\begin{aligned}
    \dot v &= \eta \left(q^{\star} - q \right) + \eta \alpha \left(v^{\star} - v \right), \\
    \dot{{\theta}} &= {\omega}_0 + \eta \left(p^{\star} - p\right).
\end{aligned}$&
$\begin{aligned}
    \frac{\dot v}{v} &= \eta \left(\frac{q^{\star}}{v^{\star 2}} - \frac{q}{v^2} \right) + \eta \alpha \frac{v^{\star} - v}{v^{\star}}, \text{or} + \eta \alpha \frac{v^{\star 2} - v^2}{v^{\star 2}},\\
    \dot \theta &= \omega_0  + \eta \left(\frac{p^{\star}}{v^{\star 2}} - \frac{p}{v^2} \right).
\end{aligned}$ \\
\arrayrulecolor{black!10}\hline
Assumption & Operation around the nominal point. & None. \\
\hline
Design philosophy & \makecell[c]{Heuristic from synchronous machines or linear \\ proportional control concerning linearized power flow.} & \makecell[c]{Inspiration from consensus synchronization (for dVOC); \\ Complex-frequency reformulation as complex droop control.} \\
\hline
Differences &
\makecell[l]{
\tabitem Standard $v$ and $\theta$ polar coordinates; \\
\tabitem Standard $p$ and $q$ power feedback. } &
\makecell[l]{
\tabitem Complex-angle $(\ln v + j\theta)$ coordinate; \\
\tabitem Normalized $p/v^2$ and $q/v^2$ power feedback. } \\
\hline
Stability & \makecell[c]{
Small-signal stability is guaranteed but transient stability \\ not guaranteed (operation far away from the nominal point).} & \makecell[c]{
Both small-signal stability and transient stability \\ are theoretically guaranteed.} \\
\makecell[c]{Power sharing} & Yes. & Yes. \\
\arrayrulecolor{black}\hline \hline
\end{tabular}
\label{tab:comparison-droop-control}
\end{table*}

\subsection{From Classical Droop Control to Complex Droop Control}
\label{sec:complex-droop-control}

\textit{1) Classical Droop Control:}
We review classical droop control, which is seen as an elementary GFM control\footnote{There exist many variants of droop control, e.g., static q/v relationship \cite{pan2020transient}, first-order low-pass filter in active and reactive power feedback \cite{schiffer2014conditions}, or even VSM, which, however, share the same steady-state equilibrium.},
\begin{subequations}
\label{eq:classical-droop}
    \begin{align}
    \label{eq:classical-qv-droop}
        \dot v &= \eta \left(q_{\varphi}^{\star} - q_{\varphi} \right) + \eta \alpha \left(v^{\star} - v \right), \\
    \label{eq:classical-pf-droop}
        \dot{{\theta}} &= {\omega}_0 + \eta \left(p_{\varphi}^{\star} - p_{\varphi}\right),
    \end{align}
\end{subequations}
where $\eta \in \mathbb{R}_{> 0}$ is the power droop gain, $\alpha \in \mathbb{R}_{\geq 0}$ is the voltage droop gain, $\omega_0$ is the nominal frequency, and the variables with superscript $^\star$ denote the associated setpoints. It is important to note that the power setpoints and feedback can be rotated by a tunable angle $\varphi$ to adapt to the network impedance characteristics \cite{juan2009adaptive}. Namely,
\begin{subequations}
\label{eq:rotated-power}
\begin{align}
\label{eq:rotated-power-setpoint}
    p_{\varphi}^{\star} + jq_{\varphi}^{\star} &\coloneqq e^{j(\pi/2 - \varphi)} (p^{\star} + jq^{\star}),\\
\label{eq:rotated-power-meas}
    p_{\varphi} + jq_{\varphi} &\coloneqq e^{j(\pi/2 - \varphi)} (p + jq),
\end{align}    
\end{subequations}
where the rotation angle ${\varphi}$ is chosen typically as the network impedance angle ${\varphi} = \arctan(\omega_0\ell/r)\in [0, \pi/2]$. For the dominantly inductive network case, ${\varphi} = \pi/2$ results in the standard decoupled p/f and q/v droop control.

\textit{2) Complex Droop Control:}
We augment the p/f droop control in \eqref{eq:classical-pf-droop} into a complex-power complex-frequency multivariable droop control (\textit{complex droop control} for short) as
\begin{equation}
\label{eq:complex-droop}
    \dot{\pha{\vartheta}} = \pha{\varpi}_0 + \eta e^{j\varphi} \left( \phaconj{\varsigma}^{\star} - \phaconj{\varsigma}\right),
\end{equation}
where $\pha{\varpi}_0 \coloneqq j{\omega}_0$ denotes the nominal complex frequency, $\eta \in \mathbb{R}_{> 0}$, $e^{j\varphi}$ denotes the rotation operator to adapt to the network impedance characteristics, and moreover, the complex power setpoint is given by $\phaconj{\varsigma}^{\star} \coloneqq \rho^{\star} - j\sigma^{\star} \coloneqq (p^{\star} - jq^{\star})/v^{\star 2}$. The second term in \eqref{eq:complex-droop} denotes the droop gain multiplied by the imbalance of complex power, which is to regulate the complex-frequency dynamics. From angular frequency to complex frequency, this complex droop control can be seen as an advancement of the elementary p/f droop control.

The complex droop control in \eqref{eq:complex-droop} is only able to regulate the rate of change of voltage (as well as angular frequency). To directly regulate the voltage amplitude, we introduce an amplitude regulation-enabled complex droop control \cite{he2023nonlinear} as
\begin{equation}
\label{eq:complex-droop-v}
    \dot{\pha{\vartheta}} = \pha{\varpi}_0 + \eta e^{j\varphi} \left( \phaconj{\varsigma}^{\star} - \phaconj{\varsigma}\right) + \eta \alpha \tfrac{v^{\star 2} - v^2}{v^{\star 2}},
\end{equation}
where $\alpha \in \mathbb{R}_{\geq 0}$ is the voltage regulation gain, $v^{\star}$ denotes the voltage setpoint, $v = e^{\Re{\pha{\vartheta}}}$ denotes the voltage amplitude, and $(v^{\star 2} - v^2)$ represents the voltage regulation deviation (another alternative is $(v^{\star} - v)$ \cite{colombino2019global}). We rewrite \eqref{eq:complex-droop-v} in real-valued polar coordinates as
\begin{subequations}
\label{eq:complex-droop-polar}
    \begin{align}
    {\dot v}/{v} &= \eta \left(\sigma_{\varphi}^{\star} - \sigma_{\varphi} \right) + \eta \alpha \left({v^{\star 2} - v^2}\right)/{v^{\star 2}},\\
    \dot \theta &= \omega_0  + \eta \left(\rho_{\varphi}^{\star} - \rho_{\varphi} \right),
    \end{align}
\end{subequations}
where the subscript $\varphi$ indicates the rotated power, i.e., 
\begin{subequations}
\label{eq:rotated-power-norm}
\begin{alignat}{2}
\label{eq:rotated-power-norm-setpoint}
    \sigma_{\varphi}^{\star} + j\rho_{\varphi}^{\star} &\coloneqq e^{j\varphi} \phaconj{\varsigma}^{\star} &&= (q_{\varphi}^{\star} + jp_{\varphi}^{\star})/{v^{\star 2}},\\
\label{eq:rotated-power-norm-meas}
    \sigma_{\varphi} + j\rho_{\varphi} &\coloneqq e^{j\varphi} \phaconj{\varsigma} &&= (q_{\varphi} + jp_{\varphi})/{v^{2}}.
\end{alignat}
\end{subequations}
The second equalities in \eqref{eq:rotated-power-norm-setpoint} and \eqref{eq:rotated-power-norm-meas} follow from the preceding related definitions.

The complex droop control in \eqref{eq:complex-droop-v} provides two control modes \cite{awal2021unified}, i.e., either with or without the voltage regulation term. The voltage-amplitude-following mode is given by $\alpha = 0$. The operation of this mode must rely on other voltage sources since the voltage amplitude is not controlled (though the frequency is controlled in a droop fashion). The voltage-forming mode, also termed grid-forming mode \cite{awal2021unified}, is given by $\alpha \neq 0$, where both the frequency and the voltage amplitude are controlled in a droop fashion.

\begin{remark}[Comparison between classical and complex droop controls]
The complex droop control in \eqref{eq:complex-droop-polar} differs from the classical droop control in \eqref{eq:classical-droop}. The comparison is summarized in Table~\ref{tab:comparison-droop-control}, where two relevant differences are identified:
\begin{itemize}
    \item The derivative of $v$ differs from that of $\ln v$, i.e., $\dot v/v$.
    \item The standard power feedback is different from the normalized power feedback.
\end{itemize}
The differences have a minor impact on the steady-state droop performance in the vicinity of $v \approx v^{\star} \approx 1.0$ pu, but they exert a significant impact on the \textit{transient stability property} due to the \textit{nonlinear coupling} from $p_{\varphi}$ and $q_{\varphi}$ to $v$ and $\theta$ in the large-signal regime, see \eqref{eq:complex-droop-polar} and \eqref{eq:rotated-power-norm}. These differences allow us to establish global asymptotic stability for complex droop control in this study, a result that remains unestablished in the general case for classical droop control \cite{schiffer2014conditions,schiffer2019global}. Furthermore, we have further established passivity and decentralized stability results for dVOC in \cite{he2023passivity}. The theoretically guaranteed passivity and stability properties motivate the upgrading of classical droop control with more advanced complex droop control in future grid-forming product development. Moreover, we note that since complex droop control allows an implementation in $v$ and $\theta$ coordinates as in \eqref{eq:complex-droop-polar}, similar to the implementation of classical droop control, established strategies concerning classical droop control, such as black-start, pre-synchronization, transition between islanded and grid-connected modes, and secondary control, can be readily applied and adapted to complex droop control. Additionally, complex droop control can be improved to provide richer dynamic ancillary services such as inertial response \cite{roger2023}.
\end{remark}

\begin{remark}[Equivalence to dVOC \cite{he2022complex}]
We recall the mathematical equivalence between complex droop control and dVOC \cite{he2022complex}, as illustrated in Fig.~\ref{fig:complex-droop}. Considering ${\pha{\dot v}}/{\pha{v}} = \pha{\dot \vartheta}$ following from the coordinate transformation $\pha{v} = e^{\pha{\vartheta}}$, we can rewrite the complex droop control in \eqref{eq:complex-droop-v}, from complex-angle coordinates to complex-voltage coordinates, as follows,
\begin{equation}
\label{eq:dvoc}
    \dot{\pha{v}} = \pha{\varpi}_0\pha{v} + \eta e^{j\varphi} \left( \phaconj{\varsigma}^{\star}\pha{v} - \pha{i} \right) + \eta \alpha \tfrac{v^{\star 2} - v^2}{v^{\star 2}}\pha{v},
\end{equation}
where $\pha{i} = \phaconj{\varsigma}\,\pha{v}$ follows from \eqref{eq:complex-power}.
The controller in \eqref{eq:dvoc} is the complex-variable version of the dVOC developed in \cite{gross2019effect}. In consideration of their equivalence, the terms ``dVOC" and ``complex droop control" can be used equivalently.
\end{remark}

\section{Modeling and Stability Problem Statement}
\label{sec:modeling}

\begin{figure}
  \begin{center}
  \includegraphics{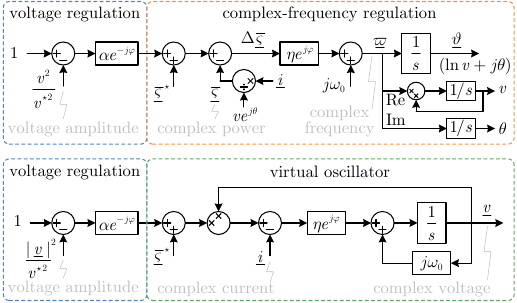}
  \caption{Complex droop control in complex-angle coordinates and equivalent dVOC in complex-voltage rectangular coordinates.}
  \label{fig:complex-droop}
  \end{center}
\end{figure}

\begin{figure}
  \begin{center}
  \includegraphics{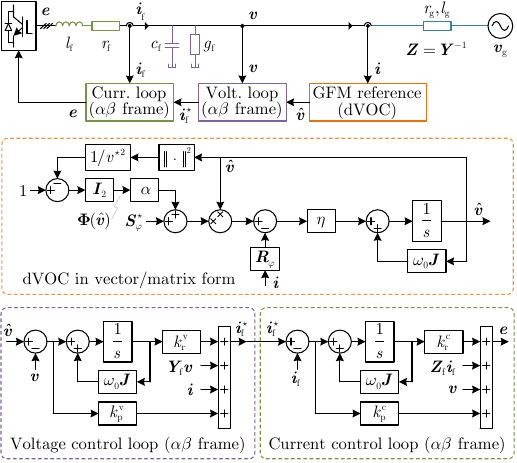}
  \caption{The block diagram of a typical grid-connected converter system, where dVOC (i.e., complex droop control) serves as a GFM control, and the inner voltage and current controls are implemented in the $\alpha\beta$ coordinate frame (also possible in the $dq$ coordinate frame).}
  \label{fig:system-detailed}
  \end{center}
\end{figure}

Consider a grid-connected converter system with complex droop control and cascaded voltage and current controls, as shown in vector/matrix form in Fig.~\ref{fig:system-detailed}. We first derive its full-order model, which includes the GFM control dynamics, the network electromagnetic dynamics, the filter dynamics, the voltage loop dynamics, and the current loop dynamics. For ease of modeling and analysis, we adopt the \textit{grid synchronous reference frame} oriented by the grid voltage vector angle $\theta_{\rm g}$ and consider that \textit{all AC signals are already transformed to the grid synchronous reference frame}. We define the main variables and parameters utilized throughout this paper in Table~\ref{tab:variables}. In the following, we use vectors $[a \enspace b]\trans$ and matrices $[\begin{smallmatrix}a & -b\\b & a \end{smallmatrix}]$ to equivalently represent the associated complex variables $a + jb$ for ease of derivation and understanding.

\begin{table}
\fontsize{7.8pt}{8pt}\selectfont
\centering
\caption{Definitions of Symbols, Variables, and Parameters}
\begin{tabular}{ll|ll}
\hline \hline
Symbol & Description & Symbol & Description\\
\hline
$v$ & Voltage amplitude & $\eta > 0$ & Power droop gain\\
$\theta$ & Voltage angle & $\alpha \geq 0$ & Voltage droop gain\\
$\pha v \coloneqq ve^{j \theta}$ & Complex voltage & $r_{\rm g}$ & Grid resistance\\
$\pha \vartheta \coloneqq \ln v + j \theta $ & Complex angle & $l_{\rm g}$ & Grid inductance\\
$\pha \varpi \coloneqq \dot{\pha \vartheta}$ & Complex frequency & $\pha z $& $r_{\rm g} + j \omega _{\rm g} l_{\rm g}$\\
$\varepsilon \coloneqq \dot v/v$ & \makecell[l]{Rate of change \\ of voltage} & $\pha y$ & $\pha z^{-1}$\\
$\omega \coloneqq \dot \theta$ & Angular frequency & $\mat Z$ & $[\begin{smallmatrix} r_{\rm g} & - \omega_{\rm g} l_{\rm g} \\ \omega_{\rm g} l_{\rm g} & r_{\rm g} \end{smallmatrix}]$ \\
$\pha{i}$ & Complex current & $\mat Y $ & $\mat Z^{-1}$ \\
$\pha{s} \coloneqq p + jq$ & Complex power & $r_{\rm f}$ & Filter resistance \\
$\pha{\varsigma} \coloneqq \pha{s}/v^2 $ & Normalized power & $l_{\rm f}$ & Filter inductance \\
$\rho \coloneqq p/v^2$ & Normalized $p$ & $g_{\rm f}$ & Filter conductance \\
$\sigma \coloneqq q/v^2$ & Normalized $q$ & $c_{\rm f}$ & Filter capacitance \\
$\varphi$ & Rotation angle & $\mat Y_{\rm f}$ & $ [\begin{smallmatrix} g_{\rm f} & - \omega_{\rm g} c_{\rm f} \\ \omega_{\rm g} c_{\rm f} & g_{\rm f} \end{smallmatrix}]$ \\
$p_{\varphi} + j q_{\varphi}$ & Rotated by $\varphi$ & $\mat Z_{\rm f}$ & $[\begin{smallmatrix} r_{\rm f} & - \omega_{\rm g} l_{\rm f} \\ \omega_{\rm g} l_{\rm f} & r_{\rm f} \end{smallmatrix}]$ \\
$\rho_{\varphi} + j \sigma_{\varphi}$ & Rotated by $\varphi$ & $\vect{i}_{\rm f}$ & Inductor current \\
$\vect{v}_{\rm g} \coloneqq [ v_{\rm g} \enspace 0 ]\trans$ & Grid voltage & $\vect{i}_{\rm f}^{\star}$ & Reference for $\vect{i}_{\rm f}$ \\
$\hat{\vect{v}} \coloneqq [ \hat v_{\rm d} \enspace \hat v_{\rm q} ]\trans $ & Reference voltage & $\vect e$ & Internal voltage \\
$\vect v \coloneqq [ v_{\rm d} \enspace v_{\rm q} ]\trans$ & Capacitor voltage & $\mat I_2$ & $[\begin{smallmatrix} 1 & 0 \\ 0 & 1 \end{smallmatrix}]$ \\
$\vect v_{\rm s}$ & Steady-state voltage & $\mat J$ & $[\begin{smallmatrix} 0 & -1 \\ 1 & 0 \end{smallmatrix}]$ \\
$\vect i$ & Output current & $\mat R_{\varphi}$ & $[\begin{smallmatrix} \cos \varphi & -\sin \varphi \\ \sin \varphi & \cos \varphi \end{smallmatrix}]$ \\
$\Re{}$ & Real part & $\mat S_{\varphi}^{\star} $ & $[\begin{smallmatrix} \sigma_{\varphi}^{\star} & -\rho_{\varphi}^{\star} \\ \rho_{\varphi}^{\star} & \sigma_{\varphi}^{\star} \end{smallmatrix}]$ \\
$\Im{}$ & Imaginary part & $\mat \Phi(\hat{\vect{v}})$ & $\frac{v^{\star 2} - \norm{\hat{\vect{v}}}^2}{v^{\star 2}} \mat I _2$ \\
$()\trans$ & Transpose & $\omega_{\rm g}, \theta_{\rm g}$ & Grid freq., angle \\
$()\hermconj$ & {Hermitian transpose} & $\omega_{\Delta} $ & $\omega_0 - \omega_{\rm g}$ \\
$()^\star$ & Setpoint & $\mat Y_{\varphi} $ & $\mat R_{\varphi} \mat Y$ \\
$()_{\rm s}$ & Steady state & $[\begin{smallmatrix} \kappa_{\rm r} & -\kappa_{\rm i} \\ \kappa_{\rm i} & \kappa_{\rm r}\end{smallmatrix}]$ & $\mat S_{\varphi}^{\star} - \mat Y_{\varphi} $\\
\hline \hline
\end{tabular}
\label{tab:variables}
\end{table}

\subsection{Full-Order System With Multi-Time-Scale Dynamics}

\textit{GFM Reference Model:}
The complex droop control in \eqref{eq:dvoc} serves as a GFM reference model to generate the reference voltage $\hat{\vect{v}}$ for the inner voltage loop, in vector form, as
\begin{equation}
\label{eq:dvoc-ref}
    \dot{\hat{\vect{v}}} = \omega_{\Delta} \mat J \hat{\vect{v}} + \eta  \mat S_{\varphi}^{\star} \hat{\vect{v}} - \eta \mat R_{\varphi} \vect i + \eta \alpha \mat \Phi(\hat{\vect{v}}) \hat{\vect{v}},
\end{equation}
where $\mat S_{\varphi}^{\star}$, as a matrix, encodes the rotated power setpoint $e^{j\varphi} \phaconj{\varsigma}^{\star}$, $\mat R_{\varphi}$ encodes $e^{j\varphi}$, $\mat \Phi(\hat{\vect{v}}) \coloneqq \frac{v^{\star 2} - \norm{\hat{\vect{v}}}^2}{v^{\star 2}} \mat I _2$, and $\norm{\cdot}$ denotes Euclidean distance. We note that $\omega_{\Delta} \coloneqq \omega_0 - \omega_{\rm g}$ instead of $\omega_0$ is used in the modeling from now on due to the adoption of the grid synchronous reference frame. One can refer to Table~\ref{tab:variables} for a complete description of variables/parameters.

\textit{Network Dynamics:}
The network electromagnetic dynamics are given as
\begin{equation}
\label{eq:network-dynamics}
    l_{\rm g} \dot{\vect{i}} + \mat Z \vect{i} = \vect v - \vect v_{\rm g},
\end{equation}
with the grid voltage $\vect{v}_{\rm g} \coloneqq [ v_{\rm g} \enspace 0 ]\trans$.

\textit{Filter Capacitor and Inductor Dynamics:}
The filter capacitor and inductor dynamics are given as
\begin{subequations}
\label{eq:filter-dynamics}
\begin{align}
\label{eq:capacitor-dynamics}
    c_{\rm f} \dot{\vect{v}} + \mat Y_{\rm f} \vect{v} + \vect{i} &= \vect{i}_{\rm f} ,\\
\label{eq:inductor-dynamics}
    l_{\rm f} \dot{\vect{i}}_{\rm f} + \mat Z_{\rm f} \vect{i}_{\rm f} + \vect v &= \vect e,
\end{align}
\end{subequations}
with the converter-side current $\vect{i}_{\rm f}$ and the internal voltage $\vect e$.

\textit{Voltage Controller:}
Consider a voltage reference tracking controller with feedforward compensation, which can be directly implemented in $\alpha\beta$ coordinates. When transformed into the grid reference frame, it is given as \cite{subotic2021lyapunov}
\begin{subequations}
\label{eq:voltage-controller}
    \begin{align}
        \dot {\vect \zeta} _{\rm v} &= \omega_{\Delta} \mat J \vect \zeta _{\rm v} + \vect v - \hat{\vect v},\\
        \vect i_{\rm f}^{\star} &= - k_{\rm p}^{\rm v} (\vect v - \hat{\vect v}) - k_{\rm r}^{\rm v} \vect \zeta _{\rm v} + \mat Y_{\rm f} \vect v + \vect i,
    \end{align}
\end{subequations}
where ${\vect \zeta} _{\rm v}$ stands for the resonant integrator state, $(\mat Y_{\rm f} \vect v + \vect i)$ denotes the feedforward compensation, and $k_{\rm p}^{\rm v} \in \mathbb{R}_{>0}$ and $k_{\rm r}^{\rm v} \in \mathbb{R}_{>0}$ are the voltage control gains.

\textit{Current Controller:}
Similarly, the current reference tracking controller in the grid reference frame is given as
\begin{subequations}
\label{eq:current-controller}
    \begin{align}
        \dot {\vect \zeta} _{\rm c} &= \omega_{\Delta} \mat J \vect \zeta _{\rm c} + \vect i_{\rm f} - \vect i_{\rm f}^{\star},\\
        \vect e &= - k_{\rm p}^{\rm c} (\vect i_{\rm f} - \vect i_{\rm f}^{\star}) - k_{\rm r}^{\rm c} \vect \zeta _{\rm c} + \mat Z_{\rm f} \vect i_{\rm f} + \vect v,
    \end{align}
\end{subequations}
where ${\vect \zeta} _{\rm c}$ stands for the resonant integrator state, $(\mat Z_{\rm f} \vect i_{\rm f} + \vect v)$ is the feedforward compensation, and $k_{\rm p}^{\rm c} \in \mathbb{R}_{>0}$ and $k_{\rm r}^{\rm c} \in \mathbb{R}_{>0}$ are the current control gains.

From \eqref{eq:dvoc-ref} to \eqref{eq:current-controller}, the full-order (twelfth-order) system is formulated as
\begin{subequations}
\label{eq:full-order-system-0}
    \begin{align}
        \tfrac{\rm d}{{\rm d}t} \underbrace{\hat{\vect{v}}}_{=:\vect x_1} &= \underbrace{\omega_{\Delta} \mat J \hat{\vect{v}} + \eta  \mat S_{\varphi}^{\star} \hat{\vect{v}} - \eta \mat R_{\varphi} \vect i + \eta \alpha \mat \Phi(\hat{\vect{v}}) \hat{\vect{v}}}_{=:\, \vect f_1(\vect x_1,\vect x_2)},\\
        \tfrac{\rm d}{{\rm d}t} \underbrace{\vect{i}}_{=:\vect x_2} &= \underbrace{l_{\rm g}^{-1} (-\mat Z \vect{i} + \vect v - \vect v_{\rm g})}_{=:\, \vect f_2(\vect x_1, \vect x_2, \vect x_3)},\\
        \tfrac{\rm d}{{\rm d}t} \underbrace{\begin{bmatrix} {\vect{v}} \\  {\vect \zeta} _{\rm v} \end{bmatrix}}_{=:\vect x_3} &= \underbrace{\begin{bmatrix} c_{\rm f}^{-1} (- \mat Y_{\rm f} \vect{v} - \vect{i} + \vect{i}_{\rm f}) \\ \omega_{\Delta} \mat J \vect \zeta _{\rm v} + \vect v - \hat{\vect v} \end{bmatrix}}_{=:\, \vect f_3(\vect x_1, \vect x_2, \vect x_3, \vect x_4)},\\
        \tfrac{\rm d}{{\rm d}t} \underbrace{\begin{bmatrix} {\vect{i}}_{\rm f} \\  {\vect \zeta} _{\rm c} \end{bmatrix}}_{=:\vect x_4} &= \underbrace{\begin{bmatrix} l_{\rm f}^{-1} (- k_{\rm p}^{\rm c} (\vect i_{\rm f} - \vect i_{\rm f}^{\star}) - k_{\rm r}^{\rm c} \vect \zeta _{\rm c}) \\ \omega_{\Delta} \mat J \vect \zeta _{\rm c} + \vect i_{\rm f} - \vect i_{\rm f}^{\star} \end{bmatrix}}_{=:\, \vect f_4(\vect x_1, \vect x_2, \vect x_3, \vect x_4)},\\
        \vect i_{\rm f}^{\star} &= - k_{\rm p}^{\rm v} (\vect v - \hat{\vect v}) - k_{\rm r}^{\rm v} \vect \zeta _{\rm v} + \mat Y_{\rm f} \vect v + \vect i,
    \end{align}
\end{subequations}
which is also graphically illustrated in Fig.~\ref{fig:full-reduced} from the slowest to the fastest dynamics in the form of a nested block diagram. We note that $\vect v_{\rm g}$ plays as a forced input, making the model different from that in the islanded case \cite[Sec. VI-C]{subotic2021lyapunov}. In particular, the system in \eqref{eq:full-order-system-0} becomes a linear one if the voltage-amplitude-following mode is used with $\alpha = 0$. This particular case is interesting in that the stability analysis becomes linear.

\subsection{Reduced-Order Systems}

\textit{1) Reduced Second-Order System With GFM Dynamics:} To focus solely on the GFM control dynamics, we can ignore the faster dynamics such that $\hat{\vect{v}} = \vect{v}$ (perfect voltage tracking) and $\tfrac{\rm d}{{\rm d}t} {\vect{i}} = \mathbbb{0}_2$ (static network), resulting in a reduced second-order model. In doing so, the network equation in \eqref{eq:network-dynamics} reduces to a static relationship as 
\begin{equation}
    \vect {i} = \mat Y \left(\vect {v} - \vect {v}_{\rm g}\right).
\end{equation}
The reduced second-order system is derived as
\begin{equation}
\label{eq:reduced-order-system}
    \tfrac{\rm d}{{\rm d}t} {\vect{v}} = [\omega_{\Delta} \mat J + \eta  (\mat S_{\varphi}^{\star} - \mat Y_{\varphi})] {\vect{v}} + \eta \mat Y_{\varphi} \vect v_{\rm g} + \eta \alpha \mat \Phi(\vect{v}) {\vect{v}},
\end{equation}
where $\mat Y_{\varphi} = \mat R_{\varphi} \mat Y$ denotes the rotated network admittance. We illustrate two special cases of \eqref{eq:reduced-order-system} in Appendix \ref{sec:approdix-a}, where their stability results can be readily obtained.

\begin{figure}
  \begin{center}
  \includegraphics{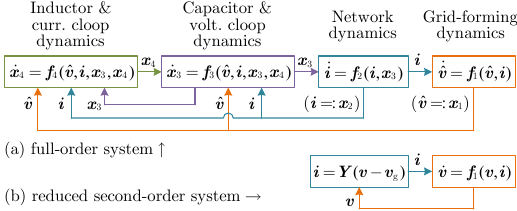}
  \caption{(a) The full-order model in \eqref{eq:full-order-system-0} in a multi-time-scale nested form. (b) The reduced second-order model in \eqref{eq:reduced-order-system} with GFM dynamics solely.}
  \label{fig:full-reduced}
  \end{center}
\end{figure}

\textit{2) Reduced Fourth-Order System:} We ignore the filter inductor dynamics, the current-loop dynamics, the filter capacitor dynamics, and the voltage-loop dynamics, while assuming $\hat{\vect{v}} = \vect{v}$, establishing a reduced fourth-order system, i.e.,
\begin{subequations}
\label{eq:fourth-order-system}
    \begin{align}
        \tfrac{\rm d}{{\rm d}t} \hat{\vect{v}} &= \omega_{\Delta} \mat J \hat{\vect{v}} + \eta  \mat S_{\varphi}^{\star} \hat{\vect{v}} - \eta \mat R_{\varphi} \vect i + \eta \alpha \mat \Phi(\hat{\vect{v}}) \hat{\vect{v}},\\
        \tfrac{\rm d}{{\rm d}t} \vect{i} &= l_{\rm g}^{-1} (-\mat Z \vect{i} + \hat{\vect{v}} - \vect v_{\rm g}).
    \end{align}
\end{subequations}

\textit{3) Reduced Eighth-Order System:} We ignore the filter inductor dynamics and the current-loop dynamics, while assuming $\vect i_{\rm f} = \vect i_{\rm f}^{\star}$, establishing a reduced eighth-order system, i.e.,
\begin{subequations}
\label{eq:eighth-order-system}
    \begin{align}
        \tfrac{\rm d}{{\rm d}t} \hat{\vect{v}} &= \omega_{\Delta} \mat J \hat{\vect{v}} + \eta  \mat S_{\varphi}^{\star} \hat{\vect{v}} - \eta \mat R_{\varphi} \vect i + \eta \alpha \mat \Phi(\hat{\vect{v}}) \hat{\vect{v}},\\
        \tfrac{\rm d}{{\rm d}t} \vect{i} &= l_{\rm g}^{-1} (-\mat Z \vect{i} + \vect v - \vect v_{\rm g}),\\
        \tfrac{\rm d}{{\rm d}t} \begin{bmatrix} {\vect{v}} \\  {\vect \zeta} _{\rm v} \end{bmatrix} &= \begin{bmatrix} c_{\rm f}^{-1} (- k_{\rm p}^{\rm v} (\vect v - \hat{\vect v}) - k_{\rm r}^{\rm v} \vect \zeta _{\rm v}) \\ \omega_{\Delta} \mat J \vect \zeta _{\rm v} + \vect v - \hat{\vect v} \end{bmatrix}.
    \end{align}
\end{subequations}

The full-order system and the different reduced-order systems have different levels of fidelity, and their applicability depends on the focus of the analysis scenario. We will provide the stability results for all these systems.

\begin{remark}[Comparison of modeling coordinates and analysis methods]
\label{rem:comparison}
Classical droop control and VSM are typically represented in polar coordinates as in \eqref{eq:classical-droop}. Correspondingly, their transient stability analysis is performed in polar coordinates, typically using the energy-function method \cite{shuai2019transient,choopani2020newmulti} or the phase-portrait method \cite{wu2019design,pan2020transient}. These methods focus mainly on the angle/frequency dynamics, while the voltage dynamics are not thoroughly considered since either a fixed voltage or a static voltage relationship is assumed \cite{shuai2019transient}. In contrast to polar coordinates, dVOC is developed initially in $\alpha\beta$ rectangular coordinates. The development is inspired by a consensus synchronization perspective (in rectangular coordinates), different from the heuristic design of droop control and VSM based on synchronous machines. The unique formulation of dVOC makes the stability analysis in rectangular coordinates feasible. Therefore, the rectangular-coordinate dVOC model will be employed in the subsequent transient analysis, which is also naturally compatible when further encompassing the inner-loop dynamics and the network dynamics. We will utilize the polar-coordinate model only in steady-state analysis (to reflect the benefit of power normalization). Furthermore, we will directly resort to the standard Lyapunov analysis method instead of the energy-function method to fully consider the dynamics of two rectangular coordinates. In contrast to the numerically solved phase-portrait representation, our analysis provides analytical results with quantitative stability conditions.
\end{remark}

\subsection{Stability Problem Statement}
\label{sec:statement-of-problem}

For the full-order and reduced-order systems given in \eqref{eq:full-order-system-0} to \eqref{eq:eighth-order-system}, we aim to explore the following stability problems.
\begin{enumerate}
    \item Under what conditions do equilibria exist?
    \item Under what conditions is an equilibrium globally stable?
    \item How will transient instability behave when unstable?
\end{enumerate}

These stability problems are closely relevant to the planning, design, analysis, and operation (particularly relevant to fault ride-through and transient operation) of GFM converters, drawing the attention of system planners, system operators, equipment manufacturers, and academic researchers.

\begin{remark}[Current limiting issue]
\label{rem:current-limiting}
The transient stability of GFM converters is also related to their current limiting strategies \cite{fan2022review}. When a grid fault occurs, the grid enters into an abnormal operation stage, during which converters should maintain GFM operation with auxiliary control/protection strategies. Ideally, our aim is for the converter under grid faults to resemble voltage sources and in the same model form as under normal conditions, e.g., arriving at an equivalent circuit model \cite{wu2024design}. This is motivated because, if so, we can apply the same approach in modeling, stability analysis, and system operation for both fault and non-fault conditions. This is indeed the case for conventional power systems. In another parallel work \cite{desai2023saturation}, we have designed a saturation-informed current limiting strategy that applies to various GFM controls, which is able to recover an equivalent normal form from the current-saturated operating condition under grid faults, achieving the above aim. Particularly, the strategy results in an equivalent circuit with a constant virtual impedance and an internal voltage source, and accordingly, the equivalent system conforms to the normal form. In this study, we focus on addressing the stability of the normal system form. Our results have been extended to current-saturated conditions \cite{desai2023saturation}.
\end{remark}

\section{Stability of Second-Order GFM Dynamics}
\label{sec:stability analysis}

We first study the reduced second-order system in \eqref{eq:reduced-order-system} and then extend the results to the full-order system in \eqref{eq:full-order-system-0}. The main results are outlined in this and the next section to rapidly grasp the whole picture, while technical details are provided comprehensively in Appendices~\ref{sec:approdix-a} and~\ref{sec:appendix-b}. The technical details not only theoretically consolidate the stability analysis of dVOC but also inspire the stability analysis of other GFM controls. The advantage of our analysis method over previous energy-function or phase-portrait methods lies in the ability to fully account for the GFM dynamics and the higher-order dynamics and to gain analytical results.

\subsection{Existence of Equilibrium Points}
\label{sec:existence-of-ep}

The existence of equilibria is necessary for achieving stability in a steady state. We prove that equilibrium points of complex droop control always exist, see Proposition~\ref{prop:existence-of-ep} in Appendix~\ref{sec:approdix-a} for details. In contrast, there exist scenarios in which classical droop control does not admit a steady state, and a counterexample is illustrated in Example~\ref{exp:conunterexample} in Appendix~\ref{sec:approdix-a}. In respect thereof, classical droop control can suffer from transient instability due to the absence of equilibria\footnote{This fact has been partially documented in \cite{simpson2013synchronization,wu2019design}, where only the solution for phase angles is considered while voltages are assumed to be fixed. The solution for both phase angles and voltages is considered in this work.}. Unlike classical droop control, complex droop control does not suffer from this shortage. From the argument in the proof of Proposition~\ref{prop:existence-of-ep}, we reveal one reason for this is that the power feedback is normalized by the voltage amplitude square. Another reason is that a fixed setpoint for the normalized power is employed\footnote{A dVOC variant in \cite{awal2021transient} has been found to lose its equilibria in some cases, where $(p^{\star} - jq^{\star})/v^{2}$ serving as the setpoint for the normalized power is not fixed due to time-varying $v$.}, i.e., $\phaconj{\varsigma}^{\star} = (p^{\star} - jq^{\star})/v^{\star 2}$.

\subsection{Transient Stability of Complex Droop Control}
\label{sec:global-stability}

For a unique equilibrium $\vect v_{\rm s}$, a sufficient condition for global stability is identified by Theorem~\ref{thm:globally-stable} in Appendix~\ref{sec:approdix-a} as
\begin{equation}
\label{eq:global-stability-condition}
    \Re \bigl\{e^{j\varphi} \tfrac{p^{\star} - jq^{\star}}{v^{\star 2}} \bigr\} + \alpha < \tfrac{1}{2} \tfrac{\alpha}{v^{\star 2}} \norm {\vect v_{\rm s}}^2 + \Re \bigl\{e^{j\varphi} \pha{y} \bigr\}.
\end{equation}

\begin{table}
\footnotesize
\centering
\caption{Parametric Impact on the Stability of Complex Droop Control}
\begin{threeparttable}
\begin{tabular}{lll}
\hline \hline
Parameter change & Implication & Impact\\
\hline
$\abs{p^{\star} - jq^{\star}}$ $\uparrow$ & \makecell[l]{Enlarge the magnitude of \\ the power setpoint.} & Negative \\
\arrayrulecolor{black!10}\hline
$\lvert {\pha{y}} \rvert $ $\uparrow$ & \makecell[l]{Reduce the magnitude of \\the grid impedance\tnote{1}.} & Positive \\
\hline
$\alpha$ $\uparrow$ & \makecell[l]{Increase the gain of \\ the voltage amplitude regulation.} & \makecell[l]{Mostly \\ negative\tnote{2}} \\
\hline
$v_{\rm s}$ or $v_{\rm g}$ $\uparrow$ & \makecell[l]{Raise the voltage level \\ of the system.} & Positive \\
\hline
$\cos\bigl(\varphi - \angle \pha{z} \bigr)$ $\uparrow$ & \makecell[l]{Match the rotation angle with \\the grid impedance angle.} & Positive \\
\hline
$\cos\bigl(\varphi - \arctan\frac{q^{\star}}{p^{\star}}\bigr)$ $\uparrow$ & \makecell[l]{Match the rotation angle with \\ the power factor angle setpoint.} & Negative \\
\arrayrulecolor{black}\hline \hline
\end{tabular}
\label{tab:parametric-impact}
 \begin{tablenotes}
        \scriptsize
        \item[1] The grid impedance should incorporate the virtual impedance appearing for current limiting, implying that the virtual impedance compromises the transient stability.
        \item[2] $\alpha$ appears on both sides of \eqref{eq:global-stability-condition}, where the steady-state voltage value is also involved.
 \end{tablenotes}
\end{threeparttable}
\end{table}

The condition in \eqref{eq:global-stability-condition} relies on the equilibrium point information, which may be unknown. From \eqref{eq:global-stability-condition}, a tighter condition for global stability can be directly provided without using the equilibrium point information as
\begin{equation}
\label{eq:stability-condition-without-ep}
    \Re \bigl\{e^{j\varphi} \tfrac{p^{\star} - jq^{\star}}{v^{\star 2}} \bigr\} + \alpha < \Re \bigl\{e^{j\varphi} \pha{y} \bigr\}.
\end{equation}

\begin{remark}[Interpretation of the stability conditions]
\label{rem:physical-insights}
The stability conditions in \eqref{eq:global-stability-condition} and \eqref{eq:stability-condition-without-ep} provide quantitative insights into parametric impacts on stability, which are summarized in Table~\ref{tab:parametric-impact}. The first four pieces of insights are in line with well-known engineering experience. We provide an explanation for the last one. As seen from the model in \eqref{eq:reduced-order-system}, the real part of $e^{j \varphi} \phaconj{\varsigma}^{\star}$ regulates the $dq$-axis voltages with positive feedback. The match of the rotation angle ($\varphi$) with the power factor angle setpoint ($\arctan\tfrac{q^{\star}}{p^{\star}}$) leads to a large regulation gain, therefore not being conducive to stability. Moreover, we notice that the stability condition in \eqref{eq:global-stability-condition} for the grid-connected case resembles that in \cite[Condition 3]{he2023nonlinear} for the microgrid case.
\end{remark}

\begin{remark}[Grid strength of inductive-resistive networks]
Concerning a dominantly inductive grid impedance with $\pha y = -jb$ in p.u., the grid strength is typically represented by the short-circuit ratio (SCR), i.e., $\mathrm{SCR} \coloneqq b$ \cite{dong2019small}. We reveal from the stability condition in \eqref{eq:global-stability-condition} that for an inductive-resistive grid impedance with $\pha y = g-jb$ in p.u., the classical SCR definition may be extended as 
\begin{equation}
\label{eq:scr}
    \mathrm{SCR}_{\theta} \coloneqq \Re \bigl\{e^{j\varphi} \pha{y} \bigr\} = g \cos\theta + b\sin \theta,
\end{equation}
where $\theta$ is a control parameter reflecting the grid impedance angle \cite{juan2009adaptive}, see \eqref{eq:classical-droop} and \eqref{eq:complex-droop}. The inductive case, with $\theta = \pi/2$, is a special case of \eqref{eq:scr}. More generally, the results in \cite{dong2019small} and \cite{he2023passivity} provide further insights into the strength of a multi-node network and its relationship with small-signal and transient stability of grid-connected converter systems.
\end{remark}

\subsection{Transient Instability as Limit Cycle Oscillations}

We prove that the trajectories of complex droop control are always bounded, see Proposition~\ref{prop:bounded}. The trajectory boundedness implies that transient instability will manifest as limit cycles (periodic orbits) inside the bound. More precisely, if the trajectories of the system in \eqref{eq:reduced-order-system} cannot converge to equilibrium points, then they must converge to limit cycles, cf. Theorem~\ref{thm:transient-instability} in Appendix~\ref{sec:approdix-a}. Moreover, if there is only one equilibrium point and it is unstable, then all the initial states, except for the equilibrium point, converge to limit cycles, cf. Corollary~\ref{cor:limit-cycle}. The condition for the uniqueness of equilibrium points can be found in Proposition~\ref{prop:uniqueness-ep}, and the condition for an equilibrium point being unstable can be found in Proposition~\ref{prop:unstable-ep}. Namely, Propositions~\ref{prop:uniqueness-ep} and \ref{prop:unstable-ep} provide sufficient parametric conditions for transient instability.

\begin{remark}[Upper bound of the voltage profile]
\label{remark-upper-bound}
Under grid disturbances, there exists an upper bound for the converter voltage profile (even in the case of transient instability). The bound, denoted as $v_{\rm m}$, is identified in Proposition~\ref{prop:bounded} as
\begin{equation*}
    v_{\rm m} = \max \Bigl\{v_{\rm g},\, v^{\star} \Bigl(1 + \tfrac{\Re \bigl\{e^{j\varphi} \tfrac{p^{\star} - jq^{\star}}{v^{\star 2}} \bigr\} - \Re \{e^{j\varphi} \pha{y} \} + \lvert \pha{y} \rvert}{\alpha} \Bigr)^{1/2} \Bigr\}.
\end{equation*}
We remark that the existence of the upper bound can bring useful insights into the overvoltage protection of converters.
\end{remark}

\section{Stability of Full-Order System Dynamics}
\label{sec:extension}

This section extends our results from the reduced second-order system to the full-order system. The main focus is exploring the second problem stated in Section~\ref{sec:statement-of-problem}, i.e., under what conditions the equilibrium of the full-order system is asymptotically stable either globally or within a wide region of attraction. Concerning the other two problems, the full-order system shares the same equilibria as the reduced-order systems. Moreover, the transient instability of the full-order system, if dominated by the GFM dynamics, still manifests itself as limit cycle oscillations, as shown later in case studies.

To explore the transient stability of the full-order system, we resort to the nested singular perturbation approach developed in \cite{subotic2021lyapunov}. The approach aims to establish a stability guarantee for a nested interconnection of nonlinear dynamic systems ordered from slow to fast, where more than two time-scales can be considered, beyond classical singularly perturbed systems \cite[Theorem 11.3]{khalil2002nonlinear}. The approach has been applied to study converter-based islanded systems \cite{subotic2021lyapunov}. In this work, we apply it to the grid-connected case, where the grid voltage appears as a forced input, making the stability analysis largely different.

By applying the nested singular perturbation analysis (see Appendix~\ref{sec:appendix-b} for details), the following conditions are identified, which bound the controller set-points ($p^{\star}$, $q^{\star}$, $v^{\star}$), grid parameters ($\pha y$, $l_{\rm g}$, $r_{\rm g}$), and control parameters ($\alpha$, $\eta$, $\varphi$, ${k_{\rm p}^{\rm v}}$, ${k_{\rm r}^{\rm v}}$, ${k_{\rm p}^{\rm c}}$, ${k_{\rm r}^{\rm c}}$) to ensure the stability of the full-order system, i.e., 
\begin{subequations}
\label{eq:condition-for-full-order}
    \begin{align}
        \label{eq:condition-a}
        & \Re \bigl\{e^{j\varphi} \tfrac{p^{\star} - jq^{\star}}{v^{\star 2}} \bigr\} + \alpha + c_1 < \tfrac{1}{2} \tfrac{\alpha \norm {\vect v_{\rm s}}^2}{v^{\star 2}} + \Re \bigl\{e^{j\varphi} \pha{y} \bigr\},\\
        \label{eq:condition-b}
        &0 < \eta < \frac{c_1}{\frac{l_{\rm g}}{r_{\rm g}} \norm{\mat Y} (c_1 + c_{\epsilon})},\\
        \label{eq:condition-c}
        &0 < \frac{1 + {k_{\rm r}^{\rm v}}/{k_{\rm p}^{\rm v}}}{{k_{\rm r}^{\rm v}}/{c_{\rm f}} - 1} < \frac{4 c_2}{ \frac{1}{r_{\rm g}} \eta (c_{\epsilon}^2 + 4)},\\
        \label{eq:condition-d}
        &0 < \frac{1 + {k_{\rm r}^{\rm c}}/{k_{\rm p}^{\rm c}}}{{k_{\rm r}^{\rm c}}/{l_{\rm f}} - 1} < \frac{4 c_3}{\frac{\beta_{34}}{\Tilde{\beta}_{43}}(\Tilde{\beta}_{41}^2 + \Tilde{\beta}_{42}^2 + 4\Tilde{\beta}_{43}^2) + c_3\Tilde{\gamma}_4},
    \end{align}
\end{subequations}
where $c_1$, $c_2$, and $c_3$ quantify the stability margins concerning the unfavorable interaction appearing between adjacent time-scales of dynamics, and $c_{\epsilon}$ is a tunable quantity related to another tunable parameter $\epsilon$. The definition of all of these parameters/variables can be found in Appendix~\ref{sec:appendix-b}.

As a result of the grid voltage forced input in the grid-connected case, the stability conditions in \eqref{eq:condition-for-full-order} (particularly the first three) differ from those in \cite{subotic2021lyapunov} for islanded systems. However, they maintain similar physical interpretations. Broadly, the first inequality guarantees that the GFM dynamics alone are stable, the second one requires the GFM dynamics to be slow enough compared to the network dynamics, the third one implies that the voltage tracking should be sufficiently fast compared to the network dynamics, and finally, the last one necessitates that the current tracking should be faster than the voltage tracking (see \cite[Sec. VI-E]{subotic2021lyapunov} for further discussions). The above result can also be extended to multi-converter grid-connected systems, e.g., \eqref{eq:condition-a} has been extended as decentralized stability conditions using a passivity analysis \cite{he2023passivity}.

Concerning the reduced fourth-order and eighth-order systems in \eqref{eq:fourth-order-system} and \eqref{eq:eighth-order-system}, their stability conditions can be straightforwardly extracted from those of the full-order system. More precisely, the stability conditions of the former comprise \eqref{eq:condition-a} and \eqref{eq:condition-b}, while those of the latter comprise \eqref{eq:condition-a} to \eqref{eq:condition-c}.

\begin{remark}[Conservatism and practicality of the stability conditions]
The conservatism of the stability condition in \eqref{eq:condition-a} can be reflected by one of the \textit{instability} conditions given in Proposition \ref{prop:unstable-ep}, which reads as $\Re \bigl\{e^{j\varphi} \tfrac{p^{\star} - jq^{\star}}{v^{\star 2}} \bigr\} + \alpha > 2 \tfrac{\alpha}{v^{\star 2}} \norm {\vect v_{\rm s}}^2 + \Re \bigl\{e^{j\varphi} \pha{y} \bigr\}$. The only difference between the coefficients $1/2$ and $2$ reveals that a smaller steady-state voltage $\norm {\vect v_{\rm s}}$ implies less conservatism. The conservatism of \eqref{eq:condition-b} is illustrated in Fig.~\ref{fig:case-b1} in case studies, where the condition for $\eta$ is conservative around $1/2$ to $1/10$, affected by $\alpha$. Moreover, we have verified that \eqref{eq:condition-c} is easily satisfied. However, \eqref{eq:condition-d} is considerably conservative, requiring unrealistically high current control gains. Despite their conservatism, the quantitative conditions provide valuable parameter tuning guidelines by directing product research and development engineers on the relative order and direction of tuning of the cascaded loops. In doing so, the grid impedance information may take the rough value in the worst case, which may be given by system operators according to historical data. System operators can use the quantitative conditions to assess transient stability or dispatch system operation, where the equivalent grid impedance in the worst case can be screened out by considering different fault impedances and locations.
\end{remark}

\begin{remark}[Region of attraction]
Different from the global stability guarantee for the second-order system, we only obtain a non-global stability guarantee for the high-order systems\footnote{The stability condition derivations are not valid globally but in a certain region. This depends on the chosen Lyapunov function candidates. One may obtain global stability results with alternative Lyapunov function candidates.}. However, the region of attraction can still be guaranteed to be wide enough. Specifically, the region of attraction is given as $\{(\vect x_1,\vect x_2,\vect x_3,\vect x_4) \, \big\vert\, \norm{\vect x_1 - \vect v_{\rm s}} < r\}$, reflecting the largest sublevel set of the Lyapunov function $V_1$ on the largest neighbourhood $B_r \coloneqq \{\vect x_1 \, \big\vert\, \norm{\vect x_1 - \vect v_{\rm s}} < r\}$, where $r > 0$ is the positive real root of $\frac{(1+r/\norm{\vect v_{\rm s}})^3-1}{1/\norm{\vect v_{\rm s}}} = \epsilon$ (see Lemma~\ref{lem:neighborhood} and illustrative Example~\ref{exp:roa} in Appendix~\ref{sec:appendix-b} for further details). Hence, a larger $\epsilon$ implies a wider region of attraction. On the other hand, a large $\epsilon$ will downsize the control gains in \eqref{eq:condition-for-full-order} and accordingly slow down the whole dynamic response.
\end{remark}
\section{Simulation and Experimental Validations}
\label{sec:case-studies}

We present case studies to validate our previous theoretical results. First of all, we provide a comparative illustration of transient responses among the systems of different orders in \eqref{eq:full-order-system-0} to \eqref{eq:eighth-order-system}. Afterwards, we compare the analytical parametric stability range and associated simulation results. Thirdly, we validate that the transient instability of complex droop control manifests as limit cycle oscillations. Last, we validate that complex droop control achieves robust stability performance also in the case of non-ideal grid conditions. We remark that the full-order system encompasses all dynamics and thus serves as a high-fidelity model. The system parameters that are identically adopted in all simulation studies are given in the following: $S_N = 2$ MVA, $V_N = 690$ V, $f_N = 50$ Hz, $\omega_{\rm g} = 1.0$ pu, $v_{\rm g} = 1.0$ pu, $l_{\rm f} = c_{\rm f} = 0.05/\omega_0$ s/rad\footnote{The unit of s/rad is from the fact that the model as in \eqref{eq:full-order-system-0} is based on the per-unit system, where $l_{\rm g}$, $l_{\rm f}$, $c_{\rm f}$ [s/rad] contain $\omega_0$ (e.g., $\omega_0 = 100\pi$) as a divisor while $\eta$ [rad/s] contains $\omega_0$ as a factor.}, $r_{\rm f} = g_{\rm f} = 0.05/30$ pu, $k_{\rm p}^{\rm v} = 1$ pu, $k_{\rm r}^{\rm v} = 10$ pu/s, $k_{\rm p}^{\rm c} = 2$ pu, $k_{\rm r}^{\rm c} = 20$ pu/s. The other parameters are adjusted differently in each case study and therefore provided in their respective subsections. Beyond the simulations, the main results are also validated by experimental results based on real hardware converters.

\subsection{Case Study I: Transient Response for Different Orders}
\label{sec:case-study-a}

The following parameters are considered in this case study: $r_{\rm g} = 0.08$ pu, $l_{\rm g} = 0.2/\omega_0$ s/rad, $\varphi = \arctan\frac{l_{\rm g}}{r_{\rm g}}$, $p^{\star} = 0.5$ pu, $q^{\star} = 0.2$ pu, $v^{\star} = 1.0$ pu, $v_{\rm g,dip} = 0.5$ pu, and $\alpha = 1$ pu. The transient response of the four systems with different orders under a grid voltage dip is depicted in Fig.~\ref{fig:case-a}. In Fig.~\ref{fig:case-a}(a), a smaller value of $\eta = 0.02\omega_0$ is employed, which meets the stability condition in \eqref{eq:condition-b}. It can be observed that the full-order, eighth-order, and fourth-order systems exhibit similar transient responses. They, however, show deviations from the second-order system during the short period following the grid voltage dip. The deviation occurs due to the transient behaviors of circuit electromagnetic transients and/or inner-loop dynamics. From the result of decay rates, it is seen that the four different time-scales of dynamics in the full-order system roughly exhibit a time-scale separation from fast to slow. While the separation among the fast dynamics is not particularly significant, all of them operate at a faster rate compared to the dVOC GFM dynamics. Provided that circuit electromagnetic transients and inner-loop dynamics decay rapidly compared to the GFM dynamics, one may choose to use the second-order system, which only includes the dVOC GFM dynamics, for transient stability analysis.

In Fig.~\ref{fig:case-a}(b), a larger value of $\eta = 0.06\omega_0$ is utilized, which accelerates the dVOC GFM dynamics, violating the stability condition in \eqref{eq:condition-b}. Consequently, the GFM dynamics may interfere with the other dynamics. This interference is verified by the decay rate results, where the dVOC dynamics are even faster than the other dynamics. As a result, the electromagnetic dynamics exhibit poor transient characteristics, leading to noticeable oscillations, which take longer to decay. Therefore, in situations of severe interference, the second-order system may inadequately capture the transient response, necessitating the use of higher-order models. We note that similar results have been reported in prior studies pertaining to various GFM control strategies, such as in \cite{petrhighfidelity2018} concerning the interference between classical droop control and network dynamics. Our presented quantitative stability conditions in \eqref{eq:condition-for-full-order} provide valuable insights for mitigating such interference and offer guidance to engineers for the appropriate tuning of control gains in dVOC-controlled multi-time-scale converter systems. The following case study demonstrates that there exists an upper bound for $\eta$ to prevent the adverse interference between GFM dynamics and network dynamics.

\begin{figure}
  \begin{center}
  \includegraphics{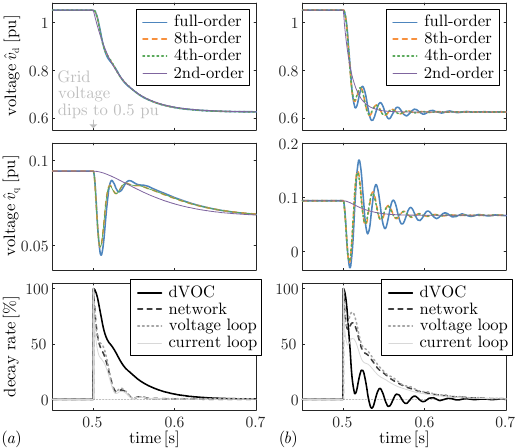}
  \caption{Transient response of systems of different orders. (a) $\eta = 0.02\omega_0$ rad/s, (b) $\eta = 0.06\omega_0$ rad/s. The decay rate of different time-scales of dynamics in the full-order system is represented by the error coordinates (normalized by their maximums): $\norm{\hat{\vect{v}} - \vect{v}_{\rm s}}/\norm{\hat{\vect{v}} - \vect{v}_{\rm s}}_{\max}$, $\norm{\vect{y}_2}/\norm{\vect{y}_2}_{\max}$, $\norm{\vect{y}_3}/\norm{\vect{y}_3}_{\max}$, and $\norm{\vect{y}_4}/\norm{\vect{y}_4}_{\max}$, see \eqref{eq:error-coodinates} in Appendix~\ref{sec:appendix-b} for the definition of the error coordinates $\vect{y}_i$, $i \in \{2,3,4\}$.}
  \label{fig:case-a}
  \end{center}
\end{figure}

\subsection{Case Study II: Parametric Stability Range}
\label{sec:case-study-b}

\begin{figure}
  \begin{center}
  \includegraphics{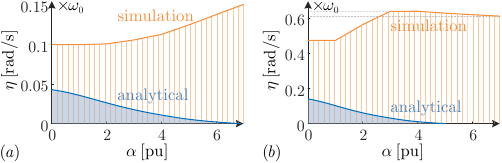}
  \caption{Comparison of the parametric stability ranges between the analytical result in \eqref{eq:condition-a} and \eqref{eq:condition-b} and the precise result by simulations traversing the parameter space. (a) $r_{\rm g} = 0.08$ pu and (b) $r_{\rm g} = 0.20$ pu.}
  \label{fig:case-b1}
  \end{center}
\end{figure}

\begin{figure}
  \begin{center}
  \includegraphics{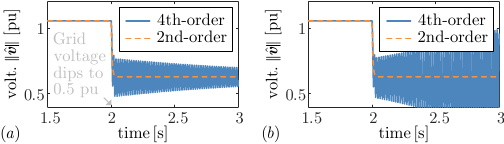}
  \caption{Concerning the critical stable case $(\alpha = 1, \eta = 0.100\omega_0)$ in Fig.~\ref{fig:case-b1}(a), the 4th-order system in \eqref{eq:fourth-order-system} is stable with $\eta = 0.099\omega_0$ in (a), whereas it is unstable with $\eta = 0.101\omega_0$ in (b).}
  \label{fig:case-b2}
  \end{center}
\end{figure}

We consider the reduced fourth-order system with stability conditions given in \eqref{eq:condition-a} and \eqref{eq:condition-b}, where two control gains of complex droop control, i.e., $\eta$ and $\alpha$, are constrained. We use the same parameters as in Section~\ref{sec:case-study-a} to exemplify the parametric stability range in terms of $\eta$ and $\alpha$.

The boundary of the analytical stability conditions in \eqref{eq:condition-a} and \eqref{eq:condition-b} is plotted by fixing a value of $\alpha$, solving the equilibrium, and finding the upper bound of $\eta$, as shown in Fig.~\ref{fig:case-b1}. The precise boundary is identified via simulations by traversing the parameter space. It can be seen that the analytical boundary is located strictly within the precise boundary as expected since the analytical results are sufficient but not necessary. By comparison between Fig.~\ref{fig:case-b1}(a) and (b), we remark that a larger $r_{\rm g}$ leads to a smaller time constant (faster decay) of the network dynamics, ${l_{\rm g}}/{r_{\rm g}}$, which allows a larger $\eta$, i.e., faster GFM dynamics. Moreover, it is verified by the precise boundary in Fig.~\ref{fig:case-b1}(b) that the impact of $\alpha$ on stability is not entirely negative, cf. the result in Table~\ref{tab:parametric-impact}, since the boundary is non-monotonically growing.

Consider a critical stable case, where $\alpha = 1$, $\eta = 0.100 \omega_0$ in Fig.~\ref{fig:case-b1}(a). It is observed from the simulation result in Fig.~\ref{fig:case-b2} that a larger $\eta$ immediately results in instability of the fourth-order system whereas a smaller $\eta$ can maintain stability. The second-order system, however, remains stable in both results. This is due to the fact that $\eta$ is not bounded in the stability condition \eqref{eq:global-stability-condition} for the second-order system. This suggests again that $\eta$ should not be overly large (the GFM dynamics should not be too fast) in the presence of network dynamics.

\subsection{Case Study III: Transient Instability as Limit Cycles}
\label{sec:case-study-c}

\begin{figure}
  \begin{center}
  \includegraphics{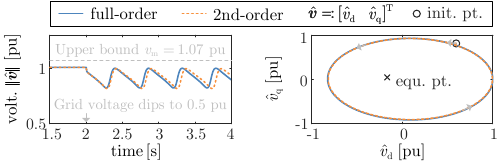}
  \caption{Transient instability of the complex-droop-controlled converter system behaves as a time-domain periodic oscillation or a phase-plane limit cycle.}
  \label{fig:case-c1}
  \end{center}
\end{figure}

\begin{figure}
  \begin{center}
  \includegraphics{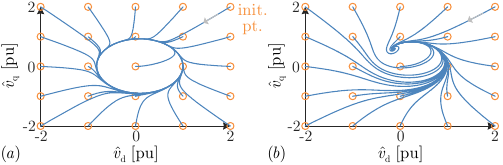}
  \caption{The complex-droop-controlled converter system (2nd-order) converges either to a limit cycle in (a), where $\alpha = 3$ pu, or to an equilibrium point in (b), where $\alpha = 1$ pu, from any non-equilibrium initial points.}
  \label{fig:case-c2}
  \end{center}
\end{figure}

The transient instability of complex droop control, if dominated by GFM dynamics, will manifest as periodic oscillations, since the second-order GFM dynamic trajectories are proved to be bounded. We consider the following parameters to illustrate the transient instability response: $r_{\rm g} = 0.8$ pu, $l_{\rm g} = 0.8/\omega_0$ s/rad, $\varphi = \arctan\frac{l_{\rm g}}{r_{\rm g}}$, $\eta = 0.08\omega_0$ rad/s, $\alpha = 3$ pu, $p^{\star} = 0.8$ pu, $q^{\star} = -0.2$ pu, $v^{\star} = 1.0$ pu, $v_{\rm g,dip} = 0.5$ pu. By Proposition~\ref{prop:uniqueness-ep}, it is identified that the system has a unique equilibrium. By Proposition~\ref{prop:unstable-ep}, it is identified that the equilibrium is unstable. Therefore, it follows from Corollary~\ref{cor:limit-cycle} that all the initial states of the GFM dynamics, except for the equilibrium point, converge to a limit cycle.

As seen in Fig.~\ref{fig:case-c1}, both the full- and second-order systems result in transient instability display limit cycle oscillations. The full-order system response is very close to the second-order one, implying that the instability is dominated by the GFM dynamics. Moreover, it is observed that the transient trajectory lies under the upper bound we identified, cf. Remark \ref{remark-upper-bound}, which suggests that the converter is free of overvoltage beyond the upper bound. To further observe the transient response of the GFM dynamics initiating from different initial points, we provide the phase portraits in Fig.~\ref{fig:case-c2}. In Fig.~\ref{fig:case-c2}(a), all the trajectories converge to the limit cycle. If we reduce $\alpha$ from $3$ to $1$, then a stable equilibrium point will be generated. Fig.~\ref{fig:case-c2}(b) shows that the GFM dynamics are stable in this case, as opposed to the transient instability in Fig.~\ref{fig:case-c2}(a), and the trajectories from all the initial points converge to the equilibrium point. For this stable case, however, we verified that the sufficient stability condition in \eqref{eq:global-stability-condition} is not satisfied. This can be justified because the system admits a larger parametric stability range than our analytical result, as illustrated in Fig.~\ref{fig:case-b1}.

\subsection{Case Study IV: Transient Stability Under Current Limiting}
\label{sec:case-study-d}

When a GFM converter suffers from current saturation under a grid fault, it is possible to represent it in an equivalent normal form based on an equivalent circuit \cite{desai2023saturation}, provided that a proper current limiting strategy is configured, as noted in Remark~\ref{rem:current-limiting}. This case study aims to validate the transient stability of complex droop control configured with the recent saturation-informed current limiting strategy under current limiting \cite{desai2023saturation}. We use a prototypical system as shown in Fig.~\ref{fig:case-d-system}, where the grid bus is represented in a stiff bus or more realistically a center-of-inertia model. The system parameters are given as $r_{\rm c} = 0.06$ pu, $l_{\rm c} = 0.30$ pu, $r_{\rm g} = 0.01$ pu, $l_{\rm g} = 0.05$ pu, $r_{\rm load} = 0.8$ pu, $l_{\rm load} = 1.2$ pu, $r_{\rm fault} = 0.01\ \Omega$, $\varphi = \arctan\frac{l_{\rm c}}{r_{\rm c}}$, $\eta = 0.03\omega_0$ rad/s, $\alpha = 3$ pu, $p^{\star} = 0.6$ pu, $q^{\star} = 0.2$ pu, $v^{\star} = 1.0$ pu. The generator capacity is the same as the converter, with the main parameters as $x_{\rm d}^{\prime} = 0.17$ pu, $r_{\rm s} = 0.038$ pu, $E = 1.1$ pu, $T_J = 4$ s, $D = 50$, $p_{\rm m} = 0.3$ pu.

\begin{figure}
  \begin{center}
  \includegraphics{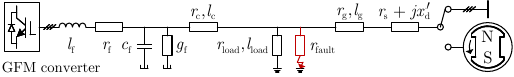}
  \caption{Simulation system for case study IV, where a GFM converter is connected to a stiff or center-of-inertia dynamic grid.}
  \label{fig:case-d-system}
  \end{center}
\end{figure}

\begin{figure}
  \begin{center}
  \includegraphics{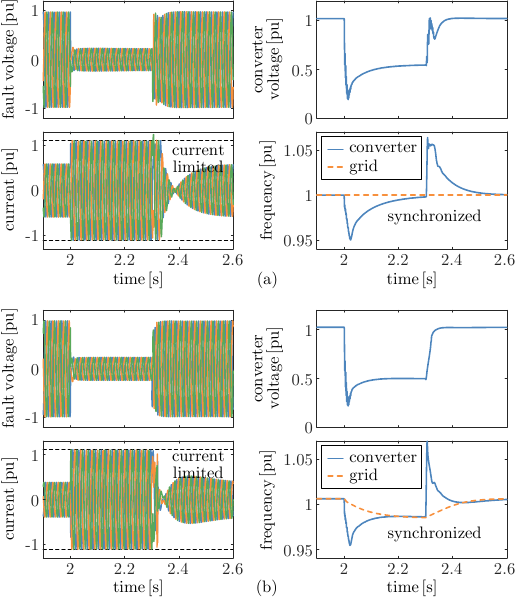}
  \caption{Under a grid fault, both transient stability and current limiting can be achieved using the saturation-informed current limiting strategy \cite{desai2023saturation}. (a) A stiff grid bus. (b) A center-of-inertia dynamic grid.}
  \label{fig:case-d}
  \end{center}
\end{figure}

A short-circuit fault occurs on the load bus at $2$~s, and the converter immediately reaches the current limit, as shown in Fig.~\ref{fig:case-d}. The saturation-informed current limiting strategy manages to limit the current at $1.1$ pu while maintaining dVOC's synchronizing and regulating capabilities. Consequently, the converter achieves synchronization with the grid during the fault. After the fault is cleared at $2.3$ s, the overcurrent exits, and the system maintains synchronization and transient stability, slowly settling down in the original steady state. In the case of a dynamic grid in Fig.~\ref{fig:case-d}(b), the generator frequency (i.e., the center-of-inertia dynamic grid frequency) drops down slowly since the grid supplies high active power to the fault resistance. Likewise, the converter achieves synchronization with the grid during the fault and after the fault clearance.

\subsection{Experimental Results}

\begin{figure}
  \begin{center}
  \includegraphics{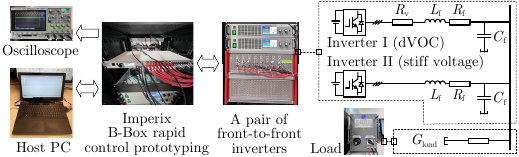}
  \caption{Experimental setup, where the nominal power capacity is $2$ kVA, the phase-to-phase RMS nominal voltage is $300$ V, and the filter parameters are as follows: $L_\mathrm{f} = 1.5$ mH, $R_\mathrm{f} = 1.0$ $\Omega$, and $C_\mathrm{f} = 3.5$ $\mu$F.}
  \label{fig:exp-setup}
  \end{center}
\end{figure}

\begin{figure*}
  \begin{center}
  \includegraphics{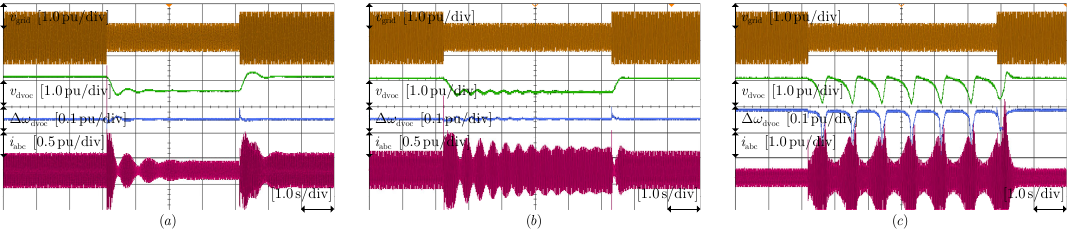}
  \caption{Experimental results indicate that the stability of complex droop control deteriorates with the increase of the voltage regulation gain $\alpha$, and the instability of complex droop control exhibits periodic oscillation. (a) $\alpha = 0$. (b) $\alpha = 1$. (c) $\alpha = 2$. The signal $v_\mathrm{grid}$ represents the grid voltage, $v_\mathrm{dvoc}$ and $\omega_\mathrm{dvoc}$ represent the control output of dVOC, and $i_\mathrm{abc}$ represent the output current of inverter I.}
  \label{fig:exp-results}
  \end{center}
\end{figure*}

The experimental setup depicted in Fig.~\ref{fig:exp-setup} is used to further validate the theoretical results. It involves the interconnection of two inverters in a front-to-front configuration, with one employing complex droop control and the other emulating a stiff grid. A virtual resistive line impedance $R_\mathrm{v}$ is implemented in inverter I (although it may also be implemented in inverter II) to enable the manipulation of the grid connection strength. The power generated by the inverters is then directed to a resistive load. Both inverters are controlled by an Imperix B-Box rapid control prototyping (RCP) system, which employs a dual-core ARM processor for the execution of control algorithms. The control code is compiled from a Simulink control model residing on the host computer. The RCP system supports real-time debugging and monitoring, allowing us to adjust control parameters online as well as capture signal waveforms through an oscilloscope. The PWM switching frequency is $32$ kHz, and the sampling and control frequency is $8$ kHz. It is noteworthy that the complex droop control in polar coordinates, instead of the dVOC in rectangular coordinates, is implemented in the control algorithm. In the case of choosing the dVOC, a higher control frequency may be required to ensure control accuracy, given the involvement of the integration of sinusoidal signals.

The experiments incorporate the following main parameters: $R_{\rm v} = 0.2$ pu, $G_\mathrm{load} = 0.56$ pu, $\varphi = \pi/2$, $p^{\star} = 0.3$ pu, $q^{\star} = 0.0$ pu, $v^{\star} = 1.0$ pu, $v_{\rm g,dip} = 0.5$ pu, and $\alpha \in \{0,1,2\}$ pu. The transient response of the system under the grid voltage dip to $v_{\rm g,dip}$ is shown in Fig.~\ref{fig:exp-results}. Notably, the system approaches marginal stability during the grid voltage dip, primarily due to the mismatch between the setting of $\varphi$ and the predominantly resistive network characteristic. Specifically, the system maintains transient stability with $\alpha = 0$ or $1$ while instability is observed with $\alpha = 2$. This observation is consistent with the theoretical results, indicating that the mismatch of $\varphi$ with the network impedance angle as well as the increase in $\alpha$ contributes to the deterioration of stability. Moreover, it is observed from Fig.~\ref{fig:exp-results}(c) that the transient instability manifests as periodic oscillation, consistent with the theoretical results. We notice that the inverter output current exhibits high spikes during transients or instability, while both inverters still remain operating, which is because the nominal capacity is reduced for safety compared to the actual capacity. In our upcoming studies, we will comprehensively assess the current limiting performance and transient stability of grid-forming controls under grid faults \cite{desai2023saturation}.

\section{Conclusion}

We have explored the transient stability of complex droop control in GFM converters connected to power grids. We prove that complex droop control achieves transient stability explicitly under the quantitative stability conditions we derived, demonstrating its superiority over classical droop control. The transient stability of complex droop control, in the multi-time-scale full-order dynamic system, also remains guaranteed within a wide region of attraction. The transient instability of complex droop control is shown to be bounded, manifesting itself as limit cycle oscillations. Our derived analytical stability/instability conditions and identified trajectory upper bound provide theoretical insights for practical parameter tuning, stability certificating, and stable operation of GFM converters. In another parallel work, we have explored a saturation-informed feedback control to address the current limiting issue in GFM converters, where the stability results obtained from the normal system form have been extended to guarantee transient stability under current saturation. Additionally, how a stable DC-bus voltage can be ensured for a GFM converter under the impact of AC-side GFM dynamics requires further investigation. Moreover, dynamic complex-frequency control of GFM converters for providing richer dynamic ancillary services is part of our future research.

\appendices

\section{Stability Results of the Second-Order System}
\label{sec:approdix-a}

We first provide two degenerated examples of the reduced second-order system and present their stability results.

\begin{example}[Voltage-amplitude-following mode]
\label{exp:voltage-following}
As mentioned in Section~\ref{sec:complex-droop-control}, if $\alpha = 0$ is employed, the GFM control will degenerate into a voltage-amplitude-following one. Then, the model in \eqref{eq:reduced-order-system} reduces to
\begin{equation}
\label{eq:model-grid-following}
    \dot{{\vect{v}}} = [\omega_{\Delta} \mat J + \eta  (\mat S_{\varphi}^{\star} - \mat Y_{\varphi})] {\vect{v}} + \eta \mat Y_{\varphi} \vect v_{\rm g}.
\end{equation}
The dynamics in \eqref{eq:model-grid-following} are \textit{linear}, and the stability analysis, in this case, is trivial, and provided in Proposition~\ref{prop:grid-follow-case}. We highlight that this voltage-amplitude-following mode proves to be globally stable and also naturally supports the grid frequency in a droop fashion, by contrast to classical grid-following control by phase-locked loops (PLLs) \cite{he2020transient}. Moreover, it is linear and thus tractable from the analysis point of view.
\end{example}

\begin{proposition}
\label{prop:grid-follow-case}
Consider the linear system in \eqref{eq:model-grid-following} (i.e., $\alpha = 0$). The system is globally asymptotically stable if and only if
\begin{equation}
\label{condi:grid-following}
    \Re \bigl\{e^{j\varphi} \tfrac{p^{\star} - jq^{\star}}{v^{\star 2}} \bigr\} < \Re \bigl\{e^{j\varphi} \pha{y} \bigr\}.
\end{equation}
\end{proposition}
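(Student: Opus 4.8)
The plan is to reduce everything to a two-dimensional eigenvalue computation by exploiting the common block structure $[\begin{smallmatrix} a & -b\\ b & a\end{smallmatrix}] = a\mat I_2 + b\mat J$ shared by all matrices in \eqref{eq:model-grid-following}. Such a matrix is the image of the complex scalar $a+jb$ under the standard ring embedding $\mathbb{C}\hookrightarrow\mathbb{R}^{2\times 2}$ (which respects sums, products, and inverses), hence its eigenvalues are $a\pm jb$. Using the correspondences recorded in Table~\ref{tab:variables} --- $\mat J\leftrightarrow j$, $\mat S_{\varphi}^{\star}\leftrightarrow e^{j\varphi}\phaconj{\varsigma}^{\star}$ with $\phaconj{\varsigma}^{\star}=(p^{\star}-jq^{\star})/v^{\star 2}$, and $\mat Y_{\varphi}=\mat R_{\varphi}\mat Y\leftrightarrow e^{j\varphi}\pha{y}$ --- the system matrix of \eqref{eq:model-grid-following},
\[
\mat A\coloneqq\omega_{\Delta}\mat J+\eta\bigl(\mat S_{\varphi}^{\star}-\mat Y_{\varphi}\bigr)=\eta\kappa_{\rm r}\mat I_2+(\omega_{\Delta}+\eta\kappa_{\rm i})\mat J,
\]
corresponds to the complex number $j\omega_{\Delta}+\eta(\kappa_{\rm r}+j\kappa_{\rm i})$, where $\kappa_{\rm r}+j\kappa_{\rm i}=e^{j\varphi}\phaconj{\varsigma}^{\star}-e^{j\varphi}\pha{y}$ exactly as in Table~\ref{tab:variables}. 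Thus $\mat A$ has the eigenvalue pair $\eta\kappa_{\rm r}\pm j(\omega_{\Delta}+\eta\kappa_{\rm i})$.

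Next I would invoke the standard fact that the affine system $\dot{\vect v}=\mat A\vect v+\eta\mat Y_{\varphi}\vect v_{\rm g}$ with constant forcing is globally asymptotically stable if and only if $\mat A$ is Hurwitz. For sufficiency: if $\mat A$ is Hurwitz it is invertible, so the unique equilibrium $\vect v_{\rm s}=-\eta\mat A^{-1}\mat Y_{\varphi}\vect v_{\rm g}$ exists, and $\vect v-\vect v_{\rm s}$ obeys the homogeneous equation $\tfrac{\rm d}{{\rm d}t}(\vect v-\vect v_{\rm s})=\mat A(\vect v-\vect v_{\rm s})$, so $\vect v\to\vect v_{\rm s}$ globally (indeed exponentially). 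By the eigenvalue formula, $\mat A$ is Hurwitz $\iff\eta\kappa_{\rm r}<0\iff\kappa_{\rm r}<0$ (since $\eta>0$), and writing $\kappa_{\rm r}=\Re\{e^{j\varphi}\phaconj{\varsigma}^{\star}\}-\Re\{e^{j\varphi}\pha{y}\}$ this is precisely \eqref{condi:grid-following}.

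For necessity I would exclude $\kappa_{\rm r}\ge 0$: if $\kappa_{\rm r}>0$ the homogeneous part has an unstable mode and trajectories diverge; if $\kappa_{\rm r}=0$ the eigenvalues are the purely imaginary pair $\pm j(\omega_{\Delta}+\eta\kappa_{\rm i})$, so either the shifted dynamics form a center (bounded periodic orbits that do not converge to any point), or, when $\omega_{\Delta}+\eta\kappa_{\rm i}=0$ as well, $\mat A=\mathbbb{0}_2$ and $\vect v(t)$ drifts linearly to infinity because $\mat Y_{\varphi}\vect v_{\rm g}\neq\mathbbb{0}_2$. In each subcase global asymptotic stability fails, so the strict inequality in \eqref{condi:grid-following} is necessary.

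I do not anticipate any real obstacle: once the block-circulant structure is recognized, the whole argument is a textbook LTI computation. The only delicate point is the boundary case $\kappa_{\rm r}=0$ in the necessity direction, which is why the center and zero-matrix subcases are treated separately above; everything else is routine.
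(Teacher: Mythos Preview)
Your proposal is correct and follows essentially the same approach as the paper: the paper's proof is the one-line observation that the system matrix $\omega_{\Delta}\mat J+\eta(\mat S_{\varphi}^{\star}-\mat Y_{\varphi})$ is Hurwitz if and only if $\Re\{e^{j\varphi}(\phaconj{\varsigma}^{\star}-\pha{y})\}<0$. You have simply fleshed out the eigenvalue computation via the $\mathbb{C}\hookrightarrow\mathbb{R}^{2\times2}$ embedding and treated the affine-to-homogeneous shift and the boundary necessity cases explicitly, which the paper leaves implicit.
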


\begin{proof}
The system matrix in \eqref{eq:model-grid-following}, i.e., $[\omega_{\Delta} \mat J + \eta  (\mat S_{\varphi}^{\star} - \mat Y_{\varphi})]$, is Hurwitz if and only if $\Re{e^{j \varphi}(\phaconj{\varsigma}^{\star} - \pha{y})} < 0$.
\end{proof}

\begin{example}[Off-grid mode]
\label{exp:off-grid}
Consider the other particular case where $\vect v_{\rm g} = \mathbbb{0}_2$ while $\alpha \neq 0$. This case means that the converter is off-grid/islanded to feed a constant impedance load. Accordingly, the model in \eqref{eq:reduced-order-system} reduces to, in the stationary coordinate frame,
\begin{equation}
\label{eq:model-offgrid}
    \dot{{\vect{v}}} = [\omega_0 \mat J + \eta  (\mat S_{\varphi}^{\star} - \mat Y_{\varphi}) + \eta \alpha \mat I_2] {\vect{v}} - \tfrac{\eta \alpha}{v^{\star 2}} \norm{\vect v}^2 {\vect{v}}.
\end{equation}
It is interesting to notice that the off-grid GFM dynamics in \eqref{eq:model-offgrid} follow the norm form of the Andronov-Hopf oscillator (also known as the Stuart-Landau oscillator) \cite{panteley2015stability}, i.e.,
\begin{equation}
\label{eq:adronov-Hopf-osc}
    \dot {\pha{z}} = \pha{\mu} \pha{z} - \pha{\xi} \abs{\pha{z}}^2 \pha{z},
\end{equation}
with gains $\pha{\xi}$, $\pha{\mu} \in \mathbb{C}$ and the state of the oscillator $\pha{z} \in \mathbb{C}$. The stability analysis of a single Andronov-Hopf oscillator is trivial and already available in \cite{panteley2015stability}. The stability result of \eqref{eq:model-offgrid} is directly provided in the following.
\end{example}

\begin{proposition}
\label{prop-off-grid-case}
The system in \eqref{eq:model-offgrid} (i.e., $\vect v_{\rm g} = \mathbbb{0}_2$) is globally asymptotically stable w.r.t the origin if and only if
\begin{equation}
\label{condi:offgrid}
    \Re \bigl\{e^{j\varphi} \tfrac{p^{\star} - jq^{\star}}{v^{\star 2}} \bigr\} + \alpha \leq \Re \bigl\{e^{j\varphi} \pha{y} \bigr\}.
\end{equation}
Otherwise, there exists a circular limit cycle, the amplitude square of which is $\norm{\vect v}^2 = \tfrac{v^{\star 2} }{\alpha} \Bigl(\Re \bigl\{e^{j\varphi} \tfrac{p^{\star} - jq^{\star}}{v^{\star 2}} - e^{j\varphi} \pha{y} \bigr\} + \alpha \Bigr)$.
\end{proposition}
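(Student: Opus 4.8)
The plan is to reduce the claim to the textbook phase portrait of the Andronov--Hopf (Stuart--Landau) oscillator \eqref{eq:adronov-Hopf-osc}, whose single-oscillator analysis is already available in \cite{panteley2015stability}. First I would pass \eqref{eq:model-offgrid} to complex-variable form using the correspondence between $[\begin{smallmatrix} a & -b\\ b & a\end{smallmatrix}]$ and $a+jb$: since $\mat J \leftrightarrow j$, $\mat S_{\varphi}^{\star} \leftrightarrow e^{j\varphi}\phaconj{\varsigma}^{\star}$, $\mat Y_{\varphi} = \mat R_{\varphi}\mat Y \leftrightarrow e^{j\varphi}\pha{y}$, $\mat I_2 \leftrightarrow 1$, and $\norm{\vect v}^2 = \abs{\pha v}^2$, the dynamics \eqref{eq:model-offgrid} become exactly \eqref{eq:adronov-Hopf-osc} with $\pha z = \pha v$ and
\[
    \pha{\mu} = j\omega_0 + \eta e^{j\varphi}\bigl(\tfrac{p^{\star} - jq^{\star}}{v^{\star 2}} - \pha{y}\bigr) + \eta\alpha, \qquad \pha{\xi} = \tfrac{\eta\alpha}{v^{\star 2}}.
\]
Note that $\pha{\xi}$ is real and strictly positive (as $\eta > 0$, $\alpha \neq 0$), which is the standard supercritical-Hopf regime.

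Next I would carry out (or simply invoke) the elementary polar-coordinate analysis of \eqref{eq:adronov-Hopf-osc}. Writing $\pha v = r e^{j\phi}$ with $r \geq 0$ decouples the dynamics into the radial equation $\dot r = \Re\{\pha{\mu}\}\,r - \Re\{\pha{\xi}\}\,r^3$ and the phase equation $\dot\phi = \Im\{\pha{\mu}\} - \Im\{\pha{\xi}\}\,r^2$. The radial equation is an autonomous scalar cubic, which I would analyze globally on $r \in [0,\infty)$: if $\Re\{\pha{\mu}\} \leq 0$, then $\dot r \leq -\Re\{\pha{\xi}\}\,r^3 < 0$ for all $r > 0$, so $r(t) \to 0$ from every initial condition and the origin is globally asymptotically stable (the borderline $\Re\{\pha{\mu}\} = 0$ is covered here; only the decay rate drops from exponential to polynomial); if $\Re\{\pha{\mu}\} > 0$, the origin is an unstable equilibrium of the radial equation and $r(t) \to r^{\star} \coloneqq \sqrt{\Re\{\pha{\mu}\}/\Re\{\pha{\xi}\}}$ for every $r(0) > 0$, so every non-origin trajectory converges to the circle $\abs{\pha v} = r^{\star}$, which is the asserted limit cycle. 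If one prefers to avoid polar coordinates, the same dichotomy follows from $W(\pha v) = \tfrac12\abs{\pha v}^2$, whose derivative along \eqref{eq:adronov-Hopf-osc} is $\dot W = \Re\{\pha{\mu}\}\abs{\pha v}^2 - \Re\{\pha{\xi}\}\abs{\pha v}^4$.

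Finally I would translate the dichotomy back into the stated parameters. Since $\Re\{\pha{\mu}\} = \eta\bigl(\Re\{e^{j\varphi}\tfrac{p^{\star} - jq^{\star}}{v^{\star 2}}\} + \alpha - \Re\{e^{j\varphi}\pha{y}\}\bigr)$ and $\eta > 0$, the stable case $\Re\{\pha{\mu}\} \leq 0$ is exactly \eqref{condi:offgrid}; otherwise, substituting $\Re\{\pha{\xi}\} = \eta\alpha/v^{\star 2}$ gives
\[
    \norm{\vect v}^2 = (r^{\star})^2 = \frac{\Re\{\pha{\mu}\}}{\Re\{\pha{\xi}\}} = \frac{v^{\star 2}}{\alpha}\Bigl(\Re\bigl\{e^{j\varphi}\tfrac{p^{\star} - jq^{\star}}{v^{\star 2}} - e^{j\varphi}\pha{y}\bigr\} + \alpha\Bigr),
\]
the claimed amplitude square. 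I do not expect a serious obstacle: the two points needing mild care are (i) verifying the matrix/complex correspondence so that $\pha{\mu}$ and $\pha{\xi}$ are extracted correctly (in particular that the rotation matrices $\mat R_{\varphi}$, $\mat S_{\varphi}^{\star}$, $\mat Y_{\varphi}$ compose as the corresponding complex numbers), and (ii) placing the borderline $\Re\{\pha{\mu}\} = 0$ inside the stable branch, which is what justifies the non-strict inequality in \eqref{condi:offgrid}.
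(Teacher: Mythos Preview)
Your proposal is correct and follows essentially the same route as the paper: both recognize that \eqref{eq:model-offgrid} is the Stuart--Landau/Andronov--Hopf oscillator \eqref{eq:adronov-Hopf-osc}, invoke the dichotomy from \cite[Theorem 1]{panteley2015stability} on the sign of $\Re\{\pha{\mu}\}$, and then read off the condition and limit-cycle radius from the identified $\pha{\mu}$ and $\pha{\xi}$. Your additional polar-coordinate sketch and explicit extraction of $\pha{\mu}$, $\pha{\xi}$ are merely elaborations of what the paper leaves implicit.
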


\begin{proof}
    Consider the norm form of the Stuart-Landau oscillator in \eqref{eq:adronov-Hopf-osc}. It has been shown in \cite[Theorem 1]{panteley2015stability} that if $\Re \{\pha{\mu}\} \leq 0$, then the origin is globally asymptotically stable, while if $\Re \{\pha{\mu} \} > 0$, then the limit cycle $ \bigl\{ \lvert \pha{z} \rvert^2 = \Re \{\pha{\mu}\}/\Re \{\pha{\xi}\} \bigr\}$ is globally asymptotically stable for all initial states except for the origin. We apply this result to the system in \eqref{eq:model-offgrid}, which completes the proof.
\end{proof}

This limit-cycle convergence physically implies that the oscillator reaches a stable self-excited oscillatory state. The limit cycle is practically meaningful for the off-grid case since the AC voltage should remain in a stable oscillation.

From now on, we consider the general case for the second-order system in \eqref{eq:reduced-order-system}, where $\alpha \neq 0$ and $\vect v_{\rm g} \neq \mathbbb{0}_2$. In the general case, since $\vect v_{\rm g}$ plays as a forced input, the stability analysis becomes non-trivial. The stability analysis for this forced system is one of our theoretical contributions.

\subsection{Existence, Uniqueness, and Local Stability of Equilibria}

\begin{proposition}
\label{prop:existence-of-ep}
The GFM converter system in \eqref{eq:reduced-order-system} with complex droop control always has equilibrium points. In contrast, the GFM converter system with the classical droop control in \eqref{eq:classical-droop} or with VSM does not always have an equilibrium point.
\end{proposition}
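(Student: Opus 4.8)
The plan is to recast the equilibrium condition of \eqref{eq:reduced-order-system} as a single scalar equation in the voltage magnitude and apply the intermediate value theorem for the existence part, and for the non-existence part to exhibit an explicit parameter regime in which the classical-droop power-balance equations have no solution. First I would rewrite \eqref{eq:reduced-order-system} in complex-voltage coordinates using the correspondences $\mat{J}\leftrightarrow j$, $\mat{S}_{\varphi}^{\star}\leftrightarrow e^{j\varphi}\phaconj{\varsigma}^{\star}$, $\mat{Y}_{\varphi}\leftrightarrow e^{j\varphi}\pha{y}$, and $\mat{\Phi}(\vect{v})\leftrightarrow(v^{\star 2}-|\pha{v}|^2)/v^{\star 2}$, so that $\dot{\pha{v}}=0$ becomes $(\pha{a}-\beta\,|\pha{v}|^2)\,\pha{v}=\pha{c}$, with $\pha{a}\coloneqq j\omega_{\Delta}+\eta e^{j\varphi}(\phaconj{\varsigma}^{\star}-\pha{y})+\eta\alpha$, $\beta\coloneqq\eta\alpha/v^{\star 2}>0$, and $\pha{c}\coloneqq-\eta e^{j\varphi}\pha{y}\,\pha{v}_{\rm g}$. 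In the generic grid-connected case (nonzero grid voltage, finite nonzero grid impedance) one has $\pha{c}\neq 0$; the degenerate cases $\vect{v}_{\rm g}=\mathbbb{0}_2$ and $\alpha=0$ are already subsumed by Examples~\ref{exp:voltage-following} and~\ref{exp:off-grid}. Substituting $\pha{v}=r e^{j\theta}$ with $r=|\pha{v}|>0$ splits the condition into the magnitude equation $g(r)\coloneqq r\,\lvert\pha{a}-\beta r^2\rvert=|\pha{c}|$ and the phase equation $e^{j\theta}=\pha{c}/\bigl(r(\pha{a}-\beta r^2)\bigr)$.

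The existence claim then follows from elementary analysis: $g$ is continuous on $[0,\infty)$ with $g(0)=0$ and $g(r)=\beta r^3 + O(r)\to\infty$ as $r\to\infty$, so its image is all of $[0,\infty)$ and there is some $r^{\star}>0$ with $g(r^{\star})=|\pha{c}|$. At any such $r^{\star}$ one has $\pha{a}-\beta (r^{\star})^2\neq 0$ (otherwise $g(r^{\star})=0\neq|\pha{c}|$), so the phase equation returns a well-defined unit-modulus $e^{j\theta^{\star}}$ and $\pha{v}_{\rm s}\coloneqq r^{\star}e^{j\theta^{\star}}$ is an equilibrium with $v_{\rm s}=r^{\star}>0$; uniqueness, which is not needed here, is handled in Proposition~\ref{prop:uniqueness-ep}. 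Structurally, it is precisely the voltage-square normalization of the power feedback that makes the $\pha{v}$-equilibrium locus scale-exhaustive, so the cubic growth of $g$ forces a crossing for every right-hand side.

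For the second claim I would reproduce the counterexample of Example~\ref{exp:conunterexample}. Imposing $\dot{v}=0$ and $\dot{\theta}=0$ in \eqref{eq:classical-droop} with a static inductive network, $\pha{i}=\pha{y}(\pha{v}-\pha{v}_{\rm g})$, $\pha{y}=-jb$, $\varphi=\pi/2$, yields the two scalar equations $p_{\varphi}^{\star}+\omega_{\Delta}/\eta=b\,v\,v_{\rm g}\sin\theta$ and $q_{\varphi}^{\star}+\alpha(v^{\star}-v)=b\,v\,(v-v_{\rm g}\cos\theta)$. The first forces every equilibrium to satisfy $v\geq v_{\min}\coloneqq(p_{\varphi}^{\star}+\omega_{\Delta}/\eta)/(b\,v_{\rm g})$; if the active-power setpoint is large enough (equivalently, the grid is weak or deeply sagged) that $v_{\min}>\max\{v_{\rm g},\,v^{\star}+q_{\varphi}^{\star}/\alpha\}$, then on the admissible set $\{v\geq v_{\min}\}$ the right-hand side of the second equation is bounded below by $b\,v(v-v_{\rm g})>0$ while its left-hand side is strictly negative, so no equilibrium can exist. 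A VSM and the usual low-pass-filtered droop variants carry inertia or filter states whose contributions vanish in steady state, hence they share these same equilibrium conditions and inherit the same counterexample. I expect this non-existence argument to be the main obstacle, since it must rule out \emph{all} pairs $(v,\theta)$ simultaneously --- a single failed substitution does not suffice --- and this requires the monotonicity and sign estimates above together with the precise parameter inequalities under which inconsistency is guaranteed; the existence part, by contrast, is essentially immediate once the problem is written in magnitude/phase form.
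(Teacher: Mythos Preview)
Your proposal is correct, and the existence argument is essentially the paper's own: both reduce the equilibrium of \eqref{eq:reduced-order-system} to a single scalar equation in the voltage magnitude and invoke the intermediate value theorem via the cubic growth at infinity. The paper works in polar coordinates, eliminates $\delta_{\rm s}$ to obtain the equation $f(x)=g(x)$ in $x=v_{\rm s}^2$ with $f$ a parabola and $g$ a hyperbola, and argues $f-g$ changes sign on $(0,\infty)$; your complex-coordinate formulation $(\pha{a}-\beta r^2)\pha{v}=\pha{c}$ with $g(r)=r\lvert\pha{a}-\beta r^2\rvert$ is the same reduction in slightly more compact dress, and the phase-recovery step matches the paper's observation that $\delta_{\rm s}$ is determined once $v_{\rm s}$ is.

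Where you differ is the non-existence half. The paper eliminates the angle to reach the quartic \eqref{eq:steady-state-classical-droop} and then simply exhibits the numerical parameter set of Example~\ref{exp:conunterexample} (which has $p_\varphi^\star=0$, $\omega_\Delta=0$, and a resistive--inductive line), to be checked by direct evaluation. Your argument is instead analytical: in the purely inductive case you bound $v$ from below via the active-power balance and then show a sign contradiction in the reactive-power balance whenever $v_{\min}>\max\{v_{\rm g},\,v^\star+q_\varphi^\star/\alpha\}$. This is a genuine alternative and arguably stronger, since it carves out an explicit open region of parameters rather than a single point; but note that it is \emph{not} a reproduction of Example~\ref{exp:conunterexample} as you state---the paper's counterexample lives in a different regime (zero $p_\varphi^\star$ and nontrivial $r_{\rm g}$) to which your inequality does not apply. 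Either counterexample suffices for the proposition; just adjust the wording so you are not claiming to recover Example~\ref{exp:conunterexample}.
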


\begin{proof}
To render the steady-state equation for complex droop control comparable to that for classical droop control, we consider both models in polar coordinates. For complex droop control in a steady state, we have, cf. \eqref{eq:complex-droop-polar}, that
\begin{subequations}
\label{eq:complex-droop-polar-ss}
    \begin{align}
    0 &= \eta \left(\sigma_{\varphi}^{\star} - \sigma_{\varphi\rm s} \right) + \eta \alpha \left({v^{\star 2} - v_{\rm s}^2}\right)/{v^{\star 2}}, \\
    0 &= \omega_{\Delta}   + \eta \left(\rho_{\varphi}^{\star} - \rho_{\varphi\rm s} \right),
    \end{align}
\end{subequations}
where the subscript ${\rm s}$ denotes steady-state variables (the same hereinafter). From \eqref{eq:rotated-power-norm-meas}, the rotated power is given as $\sigma_{\varphi\rm s} + j\rho_{\varphi\rm s} = e^{j\varphi} \phaconj{\varsigma}_{\rm s} = e^{j\varphi}(p_{\rm s} - jq_{\rm s})/{v_{\rm s}^{2}}$. Given that
\begin{subequations}
\label{eq:power-ss}
\begin{align}
    p_{\rm s} &= v_{\rm s}^2\lvert \pha{y} \rvert\cos{(\angle{\pha{z}})} - v_{\rm g} v_{\rm s} \lvert \pha{y} \rvert \cos \left(\delta_{\rm s} + \angle{\pha{z}} \right), \\
    q_{\rm s} &= v_{\rm s}^2 \lvert \pha{y} \rvert\sin{(\angle{\pha{z}})} - v_{\rm g} v_{\rm s} \lvert \pha{y} \rvert \sin \left(\delta_{\rm s} + \angle{\pha{z}} \right),
\end{align}
\end{subequations}
it follows that
\begin{subequations}
\label{eq:rotated-power-ss}
\begin{align}
    \sigma_{\varphi\rm s} &= \lvert \pha{y} \rvert\cos{\phi} - v_{\rm g} \lvert \pha{y} \rvert \cos \left(\delta_{\rm s} + \phi \right) /v_{\rm s}, \\
    \rho_{\varphi\rm s} &= - \lvert \pha{y} \rvert\sin{\phi} + v_{\rm g} \lvert \pha{y} \rvert \sin \left(\delta_{\rm s} + \phi \right) /v_{\rm s}.
\end{align}
\end{subequations}
where $\phi \coloneqq \angle{\pha{z}} - \varphi$ denotes the rotated grid-impedance angle, and $\delta_{\rm s}$ denotes the steady state of $\delta \coloneqq \theta - \theta_{\rm g}$.

Then, it follows by substituting \eqref{eq:rotated-power-ss} into \eqref{eq:complex-droop-polar-ss} that
\begin{subequations}
\label{eq:complex-droop-steady-state}
\begin{align}
    \sigma_{\varphi}^{\star} + \alpha - \alpha \tfrac{v_{\rm s}^2}{v^{\star 2}} &= \lvert \pha{y} \rvert\cos{\phi} - v_{\rm g} \lvert \pha{y} \rvert \cos \left(\delta_{\rm s} + \phi \right) /v_{\rm s}, \\
    \rho_{\varphi}^{\star} + \tfrac{\omega_{\Delta}}{\eta} &= - \lvert \pha{y} \rvert\sin{\phi} + v_{\rm g} \lvert \pha{y} \rvert \sin \left(\delta_{\rm s} + \phi \right) /v_{\rm s},
\end{align}
\end{subequations}
where there are two unknown variables $v_{\rm s}$ and $\delta_{\rm s}$. By eliminating $\delta_{\rm s}$, we obtain that
\begin{equation}
\label{eq:steady-voltage-sqr}
    \bigl[\sigma_{\varphi}^{\star} + \alpha - \alpha \tfrac{v_{\rm s}^2}{v^{\star 2}} - \lvert \pha{y} \rvert\cos{\phi} \bigr]^2 + \bigl[ \rho_{\varphi}^{\star} + \tfrac{\omega_{\Delta}}{\eta} + \lvert \pha{y} \rvert\sin{\phi} \bigr]^2 = \tfrac{v_{\rm g}^2 \lvert \pha{y} \rvert^2}{v_{\rm s}^2}.
\end{equation}

To look into the \textit{positive real root} of \eqref{eq:steady-voltage-sqr} for $v_{\rm s}^2$, we substitute $x \coloneqq v_{\rm s}^2 \in \mathbb{R}_{>0}$ and rewrite \eqref{eq:steady-voltage-sqr} as
\begin{multline}
\label{eq:cubic-equation}
    \underbrace{\bigl[\sigma_{\varphi}^{\star} + \alpha - \alpha \tfrac{x}{v^{\star 2}} - \lvert \pha{y} \rvert\cos{\phi} \bigr]^2 + \bigl[ \rho_{\varphi}^{\star} + \tfrac{\omega_{\Delta}}{\eta} + \lvert \pha{y} \rvert\sin{\phi} \bigr]^2}_{f(x),\, x \in \mathbb{R}_{>0}} \\
    = \underbrace{\tfrac{v_{\rm g}^2 \lvert \pha{y} \rvert^2}{x}}_{g(x),\, x \in \mathbb{R}_{>0}}.
\end{multline}
The left-hand side represents a parabola $f(x)$ opening upwards while the right-hand side represents a hyperbola branch $g(x)$ in the first quadrant. Since $f(x) \geq 0$, the equation \eqref{eq:cubic-equation} can only have \textit{positive} real roots. As geometrically shown in Fig.~\ref{fig:graphic-proof}, in the first quadrant, there must exist at least one intersection between $f(x)$ and $g(x)$. This is true because 
$\lim_{x \to 0} f(x) - g(x) = -\infty$ while $\lim_{x \to \infty} f(x) - g(x) = \infty$. Once a root for $x$ is solved from \eqref{eq:cubic-equation} (the root can be non-unique), we then choose $v_{\rm s} = \sqrt{x}$, since a positive voltage value is proper.

Given a solution of $v_{\rm s}$, we then use it to solve for $\delta_{\rm s}$ via \eqref{eq:complex-droop-steady-state}. Both the values of $\cos \left(\delta_{\rm s} + \phi \right)$ and $\sin \left(\delta_{\rm s} + \phi \right)$ are determined by applying the value of $v_{\rm s}$. Thus, there is only one unique solution for $\delta_{\rm s} + \phi$ within a periodic range $[\phi,\, 2\pi + \phi)$. Namely, one solution of $v_{\rm s}$ corresponds to one solution of $\delta_{\rm s}$ within $[0,\, 2\pi)$. The proof that complex droop control always has equilibrium points is now completed.

For classical droop control, VSM, or their variants with the same steady-state map, in a steady state, it holds that
\begin{subequations}
\label{eq:classical-droop-sss}
    \begin{align}
    0 &= \eta \left(q_{\varphi}^{\star} - q_{\varphi\rm s} \right) + \eta \alpha \left({v^{\star} - v_{\rm s}}\right), \\
    0 &= \omega_{\Delta} + \eta \left(p_{\varphi}^{\star} - p_{\varphi\rm s} \right).
    \end{align}
\end{subequations}
where the rotated power is referenced from \eqref{eq:rotated-power-ss} as
\begin{subequations}
\label{eq:rotated-power-sss}
\begin{align}
    q_{\varphi\rm s} &= \sigma_{\varphi\rm s}v_{\rm s}^2 = v_{\rm s}^2 \lvert \pha{y} \rvert\cos{\phi} - v_{\rm s}v_{\rm g} \lvert \pha{y} \rvert \cos \left(\delta_{\rm s} + \phi \right), \\
    p_{\varphi\rm s} &= \rho_{\varphi\rm s}v_{\rm s}^2 = -v_{\rm s}^2 \lvert \pha{y} \rvert\sin{\phi} + v_{\rm s}v_{\rm g} \lvert \pha{y} \rvert \sin \left(\delta_{\rm s} + \phi \right).
\end{align}
\end{subequations}
Similarly, it follows by substituting \eqref{eq:rotated-power-sss} into \eqref{eq:classical-droop-sss} and further eliminating $\delta_{\rm s}$ that
\begin{multline}
\label{eq:steady-state-classical-droop}
    \bigl[q_{\varphi}^{\star} + \alpha v^{\star} - \alpha v_{\rm s} - v_{\rm s}^2 \lvert \pha{y} \rvert\cos{\phi} \bigr]^2 + \\
    \bigl[ p_{\varphi}^{\star} + \tfrac{\omega_{\Delta}}{\eta} + v_{\rm s}^2 \lvert \pha{y} \rvert\sin{\phi} \bigr]^2 = v_{\rm g}^2 \lvert \pha{y} \rvert^2 v_{\rm s}^2,
\end{multline}
which is a quartic equation regarding $v_{\rm s}$. An arbitrary quartic equation does not necessarily have real roots, not to mention that we require \textit{positive} real roots for $v_{\rm s}$. It is not hard to find a counterexample, where \eqref{eq:steady-state-classical-droop} has no positive real roots, see Example~\ref{exp:conunterexample}. This completes the proof.
\end{proof}

\begin{figure}
  \begin{center}
  \includegraphics{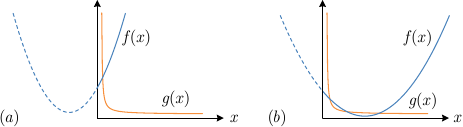}
  \caption{Equation \eqref{eq:cubic-equation} always has positive real roots, i.e., intersections between $f(x)$ and $g(x)$ in the first quadrant. (a) A unique positive real root. (b) At most three real positive roots.}
  \label{fig:graphic-proof}
  \end{center}
\end{figure}

\begin{example}
[Classical droop control lacks an equilibrium point]
\label{exp:conunterexample}
The steady-state equation in \eqref{eq:steady-state-classical-droop} has no real roots for $v_{\rm s}$ in the following case: $q_{\varphi}^{\star} = 0$ pu, $\alpha = 1$ pu, $v^{\star} = 1$ pu, $\varphi = \pi/2$, $p_{\varphi}^{\star} = 0$, $\omega_{\Delta} = 0$, $\eta = 0.08\omega_0$ rad/s, $v_{\rm g} = 0.1$ pu, $r_{\rm g} = 0.4$ pu, and $l_{\rm g} = 0.4$ pu, which can be verified numerically. This example implies that classical droop control may lose the existence of equilibrium points when the grid voltage is far away from $1.0$ pu. However, this violates the default assumption in classical droop control design, that is, it is only expected to operate around the nominal operating point. In comparison, complex droop control addresses this limitation by power normalization. We verify this example in simulation, where two independent grid-connected converter systems are considered, which are respectively controlled by classical droop control and complex droop control with the foregoing parameters. The simulation result of the two systems (both in full order) is shown in Fig.~\ref{fig:case-exp3}. The classical droop control cannot converge to a steady state under the severe grid voltage dip, while the complex droop control maintains transient stability and stabilizes at a new steady state. To theoretically guarantee the existence of equilibrium points for classical droop control, one can identify the analytical conditions from the quartic equation \eqref{eq:steady-state-classical-droop}. Alternatively, one can resort to using complex droop control, where the power feedback is scaled by the voltage square so that the existence of equilibria is guaranteed.

\begin{figure}
  \begin{center}
  \includegraphics{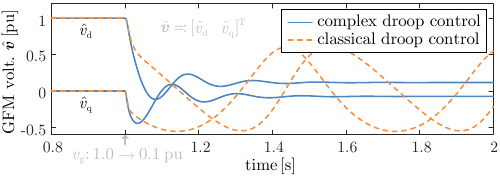}
  \caption{Classical droop control is unstable whereas complex droop control remains stable in Example~\ref{exp:conunterexample}.}
  \label{fig:case-exp3}
  \end{center}
\end{figure}

\end{example}

There exists at most three equilibria in the complex-droop-controlled converter system in \eqref{eq:reduced-order-system}, as graphically demonstrated in Fig.~\ref{fig:graphic-proof}. To yield global asymptotic stability, we require equilibria to be unique. The conditions for the uniqueness of equilibria are provided in the following.

\begin{proposition}
\label{prop:uniqueness-ep}
The equilibrium point of the system \eqref{eq:reduced-order-system} is unique if and only if the discriminant of the cubic equation from \eqref{eq:cubic-equation}, i.e., $a x^3 + b x^2 + c x + d = 0$, satisfies
\begin{equation}
\label{eq:delta}
    \Delta \coloneqq b^2c^2 - 4ac^3 - 4db^3 - 27a^2d^2 + 18abcd < 0,
\end{equation}
where the coefficients are given as 
\begin{equation*}
    \begin{aligned}
        a &\coloneqq {\alpha^2}/{v^{\star 4}},\\
        b &\coloneqq -2 {\alpha } \bigl(\sigma_{\varphi}^{\star} + \alpha - \lvert \pha{y} \rvert\cos{\phi} \bigr)/{v^{\star 2}},\\
        c &\coloneqq \bigl[\sigma_{\varphi}^{\star} + \alpha - \lvert \pha{y} \rvert\cos{\phi} \bigr]^2 + \bigl[ \rho_{\varphi}^{\star} + \tfrac{\omega_{\Delta}}{\eta} + \lvert \pha{y} \rvert\sin{\phi} \bigr]^2,\\
        d &\coloneqq -v_{\rm g}^2 \lvert \pha{y} \rvert^2.
    \end{aligned}
\end{equation*}
Moreover, a sufficient condition for the uniqueness of the equilibrium point of \eqref{eq:reduced-order-system} is $b > 0$, i.e.,
\begin{equation}
\label{condi:uniqueness}
    \Re \bigl\{e^{j\varphi} \tfrac{p^{\star} - jq^{\star}}{v^{\star 2}} \bigr\} + \alpha < \Re \bigl\{e^{j\varphi} \pha{y} \bigr\}.
\end{equation}

\end{proposition}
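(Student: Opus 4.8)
The plan is to reduce the steady-state relation \eqref{eq:steady-voltage-sqr} to a scalar cubic in $x \coloneqq v_{\rm s}^2$ and then count equilibria by counting the admissible roots. Multiplying \eqref{eq:steady-voltage-sqr} by $x>0$ and collecting powers of $x$ produces $h(x) \coloneqq a x^3 + b x^2 + c x + d = 0$; I would check by a direct expansion that the coefficients are exactly those displayed in the statement, i.e., $a = \alpha^2/v^{\star 4} > 0$, $b = -2\alpha(\sigma_\varphi^\star + \alpha - \lvert\pha{y}\rvert\cos\phi)/v^{\star 2}$, $c = (\sigma_\varphi^\star + \alpha - \lvert\pha{y}\rvert\cos\phi)^2 + (\rho_\varphi^\star + \omega_\Delta/\eta + \lvert\pha{y}\rvert\sin\phi)^2 \ge 0$, and $d = -v_{\rm g}^2\lvert\pha{y}\rvert^2 < 0$, where the strict signs use $\alpha \neq 0$ and $\vect v_{\rm g} \neq \mathbbb{0}_2$.

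The crucial observation is that \emph{every real root of $h$ is strictly positive}. Writing $f(x)$ for the left-hand side of \eqref{eq:cubic-equation} (a sum of two squares, hence $f(x) \ge 0$), the cubic is precisely $h(x) = x f(x) - v_{\rm g}^2 \lvert\pha{y}\rvert^2$, so $h(x) \le -v_{\rm g}^2\lvert\pha{y}\rvert^2 < 0$ for every $x \le 0$; thus $h$ has no root in $(-\infty,0]$, and since its non-real roots lie off the real axis anyway, all real roots lie in $(0,\infty)$. Combining this with the fact established in the proof of Proposition~\ref{prop:existence-of-ep} — each positive root $x$ determines, via $v_{\rm s} = \sqrt{x}$, exactly one $\delta_{\rm s} \in [0,2\pi)$, and every equilibrium yields such a positive root — gives a bijection between equilibria and distinct positive real roots of $h$, equivalently between equilibria and distinct real roots of $h$. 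The claimed equivalence then follows from the classical trichotomy for a real cubic: $h$ has a single real root iff $\Delta < 0$, three distinct real roots iff $\Delta > 0$, and a repeated real root iff $\Delta = 0$. Hence the equilibrium is unique exactly when $\Delta < 0$; the borderline $\Delta = 0$ generically yields a double root plus a simple root (two equilibria), and the non-generic triple-root case I would dispose of with a one-line remark.

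For the sufficient condition I would first recast $b>0$ in coordinate-free form. Since $\alpha>0$ and $v^{\star}>0$, $b>0 \iff \sigma_\varphi^\star + \alpha < \lvert\pha{y}\rvert\cos\phi$; using $\sigma_\varphi^\star = \Re\{e^{j\varphi}\phaconj{\varsigma}^\star\} = \Re\{e^{j\varphi}(p^\star - jq^\star)/v^{\star 2}\}$ together with the identity $\lvert\pha{y}\rvert\cos\phi = \Re\{e^{j\varphi}\pha{y}\}$, which follows from $\phi = \angle\pha{z} - \varphi$ and $\pha{y} = \pha{z}^{-1}$, this is exactly \eqref{condi:uniqueness}. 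Finally, when $b>0$ we have $h'(x) = 3ax^2 + 2bx + c > 0$ for all $x>0$ (all three summands are nonnegative and $3ax^2>0$), so $h$ is strictly increasing on $(0,\infty)$; with $h(0) = d < 0$ and $h(x) \to +\infty$ this forces exactly one positive root, which by the positivity observation is the only real root, whence $\Delta < 0$ and the equilibrium is unique.

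I expect the main obstacle to be conceptual rather than computational: recognizing that the positivity of \emph{all} real roots of $h$ is precisely what allows the cubic discriminant to report the equilibrium count directly (the sign bookkeeping in assembling $h$ and in the identity $\lvert\pha{y}\rvert\cos\phi = \Re\{e^{j\varphi}\pha{y}\}$ is routine). A secondary delicate point is the $\Delta = 0$ boundary, handled by the generic double-root remark above.
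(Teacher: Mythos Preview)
Your argument is correct and follows the same overall route as the paper: reduce to the cubic $h(x)=ax^3+bx^2+cx+d$, observe that all real roots are automatically positive (the paper records this in the proof of Proposition~\ref{prop:existence-of-ep}, you re-derive it via $h(x)=xf(x)+d\le d<0$ for $x\le0$), and then invoke the discriminant trichotomy together with the bijection between positive roots and equilibria.

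The one place where you and the paper diverge is in handling the sufficient condition $b>0$. The paper shows $b>0\Rightarrow\Delta<0$ by a direct algebraic estimate on the five terms of $\Delta$, exploiting the structural inequality $b^2\le 4ac$ (which holds here because $c$ dominates $(\sigma_\varphi^\star+\alpha-\lvert\pha y\rvert\cos\phi)^2$), and then offers as an alternative the geometric observation that $b>0$ places the axis of symmetry of the parabola $f(x)$ in the left half-line, forcing a single intersection with the hyperbola $g(x)$ in the first quadrant. Your monotonicity argument on $h'(x)=3ax^2+2bx+c>0$ for $x>0$ is a third, cleaner variant: it bypasses both the discriminant algebra and the parabola/hyperbola picture, and yields the single positive root directly. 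All three are equally valid; yours is arguably the most elementary. Your remark on the $\Delta=0$ triple-root boundary is also apt---strictly speaking a triple root would give a unique equilibrium with $\Delta=0$, so the ``if and only if'' in the statement tacitly excludes this non-generic case, as does the paper's proof.
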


\begin{proof}
A cubic equation has a unique real root if and only if $\Delta < 0$ \cite{blinn2000polynomial}. Since we have shown in the proof of Proposition~\ref{prop:existence-of-ep} that the equation \eqref{eq:cubic-equation} has only \textit{positive} real roots, $\Delta < 0$ serves as a necessary and sufficient condition for the uniqueness of a \textit{positive} real root.

We then prove that $b > 0$ is a sufficient condition for $\Delta < 0$. Given the coefficients, it holds that $a > 0$, $c \geq 0$, $d < 0$, and $b^2 \leq 4ac$. It follows that $b^2c^2 - 4ac^3 \leq 0$ and $4b^2 \leq 16ac \leq 18 ac$. If $b > 0$, then $db < 0$ and further $- 4db^3 + 18abcd \leq 0$. Lastly, since $-27a^2d^2 < 0$, it follows that $\Delta = b^2c^2 - 4ac^3 - 4db^3 - 27a^2d^2 + 18abcd < 0$. Therefore, the real root of the cubic equation is unique. Namely, the equilibrium point of \eqref{eq:reduced-order-system} is unique.

Alternatively, we can directly observe from Fig.~\ref{fig:graphic-proof} that if the axis of symmetry of $f(x)$, $x_{\mathrm{sym}} \coloneqq \frac{v^{\star 2}}{\alpha} (\sigma_{\varphi}^{\star} + \alpha - \lvert \pha{y} \rvert\cos{\phi})$, satisfies $x_{\mathrm{sym}} < 0$, then there is only one intersection between $f(x)$ and $g(x)$ in the first quadrant, i.e., the cubic equation has a unique real root. The condition $x_{\mathrm{sym}} < 0$, equivalent to $b > 0$, serves as a sufficient condition for the uniqueness of the equilibrium point.
\end{proof}

\begin{proposition}
\label{prop:local-stability}
Consider an equilibrium point $\vect v_{\rm s}$ of the system \eqref{eq:reduced-order-system}. The equilibrium point is locally asymptotically stable if it holds that
\begin{equation}
\label{condi:locally-stable}
    \Re \bigl\{e^{j\varphi} \tfrac{p^{\star} - jq^{\star}}{v^{\star 2}} \bigr\} + \alpha < \tfrac{\alpha}{v^{\star 2}} \norm {\vect v_{\rm s}}^2 + \Re \bigl\{e^{j\varphi} \pha{y} \bigr\}.
\end{equation}
\end{proposition}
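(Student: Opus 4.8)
\emph{Proof idea.} The plan is to invoke Lyapunov's indirect method: linearize the reduced second-order dynamics \eqref{eq:reduced-order-system} at the equilibrium $\vect v_{\rm s}$ and show that the resulting Jacobian is Hurwitz precisely when \eqref{condi:locally-stable} holds. First I would differentiate the right-hand side of \eqref{eq:reduced-order-system}. The affine part $[\omega_{\Delta}\mat J + \eta(\mat S_{\varphi}^{\star} - \mat Y_{\varphi})]\vect v + \eta\mat Y_{\varphi}\vect v_{\rm g}$ contributes the constant matrix $\omega_{\Delta}\mat J + \eta(\mat S_{\varphi}^{\star} - \mat Y_{\varphi})$, whereas for the cubic term, writing $\eta\alpha\mat \Phi(\vect v)\vect v = \eta\alpha\vect v - \tfrac{\eta\alpha}{v^{\star 2}}\norm{\vect v}^2\vect v$, the product rule gives the Jacobian $\eta\alpha\mat \Phi(\vect v) - \tfrac{2\eta\alpha}{v^{\star 2}}\vect v\vect v\trans$. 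Evaluating at $\vect v_{\rm s}$, the linearization is $\mat A_{\rm s} = \mat M_0 - \tfrac{2\eta\alpha}{v^{\star 2}}\vect v_{\rm s}\vect v_{\rm s}\trans$ with $\mat M_0 \coloneqq \omega_{\Delta}\mat J + \eta(\mat S_{\varphi}^{\star} - \mat Y_{\varphi}) + \eta\alpha\mat \Phi(\vect v_{\rm s})$.

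The key structural observation is that $\mat M_0$ has the ``complex-number'' form $a\mat I_2 + b\mat J$: indeed $\mat S_{\varphi}^{\star} - \mat Y_{\varphi}$ encodes $e^{j\varphi}(\phaconj{\varsigma}^{\star} - \pha{y}) = e^{j\varphi}\tfrac{p^{\star}-jq^{\star}}{v^{\star 2}} - e^{j\varphi}\pha{y}$, $\omega_{\Delta}\mat J$ encodes $j\omega_{\Delta}$, and $\mat \Phi(\vect v_{\rm s})$ is a real scalar multiple of $\mat I_2$. Reading off the real part yields $a = \eta\bigl[\Re\bigl\{e^{j\varphi}\tfrac{p^{\star}-jq^{\star}}{v^{\star 2}}\bigr\} + \alpha - \tfrac{\alpha}{v^{\star 2}}\norm{\vect v_{\rm s}}^2 - \Re\bigl\{e^{j\varphi}\pha{y}\bigr\}\bigr]$, which is strictly negative exactly under \eqref{condi:locally-stable}. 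It then remains to check that the rank-one, negative semidefinite perturbation $-\tfrac{2\eta\alpha}{v^{\star 2}}\vect v_{\rm s}\vect v_{\rm s}\trans$ does not destroy stability of the $2\times 2$ matrix $\mat A_{\rm s}$, i.e., that $\mathrm{tr}\,\mat A_{\rm s}<0$ and $\det\mat A_{\rm s}>0$. The trace is $2a - \tfrac{2\eta\alpha}{v^{\star 2}}\norm{\vect v_{\rm s}}^2 < 0$ since $a<0$ and $\eta,\alpha\geq 0$. For the determinant, $\det\mat M_0 = a^2+b^2>0$ and $\mat M_0^{-1} = (a^2+b^2)^{-1}(a\mat I_2 - b\mat J)$; applying the matrix determinant lemma to the rank-one update and using the skew-symmetry identity $\vect v_{\rm s}\trans\mat J\vect v_{\rm s}=0$ gives $\det\mat A_{\rm s} = (a^2+b^2) - \tfrac{2\eta\alpha a}{v^{\star 2}}\norm{\vect v_{\rm s}}^2 > a^2+b^2 > 0$, again using $a<0$. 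Hence $\mat A_{\rm s}$ is Hurwitz and local asymptotic stability of $\vect v_{\rm s}$ follows.

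I do not anticipate a serious obstacle here — the argument is an essentially routine linearization. The one place requiring care is the determinant step: a priori the negative semidefinite rank-one update could have pushed $\det\mat A_{\rm s}$ to zero, but the identity $\vect v_{\rm s}\trans\mat J\vect v_{\rm s}=0$ combined with $a<0$ shows the update in fact \emph{increases} the determinant, so no condition beyond \eqref{condi:locally-stable} is needed. It is worth noting, as a sanity check, that the threshold in \eqref{condi:locally-stable} carries coefficient $1$ on $\tfrac{\alpha}{v^{\star 2}}\norm{\vect v_{\rm s}}^2$, lying between the coefficient $\tfrac12$ of the (sufficient) global-stability condition \eqref{eq:global-stability-condition} and the coefficient $2$ appearing in the instability condition of Proposition~\ref{prop:unstable-ep}, which is the expected ordering; the case $\alpha=0$ reduces this argument to that of Proposition~\ref{prop:grid-follow-case}, consistent with \eqref{condi:grid-following}.
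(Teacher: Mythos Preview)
Your proposal is correct and follows the same core route as the paper: linearize \eqref{eq:reduced-order-system} at $\vect v_{\rm s}$ and verify the $2\times 2$ Routh--Hurwitz conditions $\mathrm{tr}\,\mat A_{\rm s}<0$ and $\det\mat A_{\rm s}>0$ under \eqref{condi:locally-stable}. The paper writes out the Jacobian entrywise in the coordinates $v_{\rm ds},v_{\rm qs}$, forms the characteristic polynomial, and then shows that $\kappa_{\rm r}+\alpha<\tfrac{\alpha}{v^{\star 2}}\norm{\vect v_{\rm s}}^2$ is sufficient for both coefficients to be positive; your decomposition $\mat A_{\rm s}=(a\mat I_2+b\mat J)-\tfrac{2\eta\alpha}{v^{\star 2}}\vect v_{\rm s}\vect v_{\rm s}\trans$ together with the matrix determinant lemma and $\vect v_{\rm s}\trans\mat J\vect v_{\rm s}=0$ reaches the identical trace and determinant expressions more structurally, without ever unpacking the components of $\vect v_{\rm s}$. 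The two computations are equivalent (your $\det\mat A_{\rm s}=a^2+b^2-\tfrac{2\eta\alpha a}{v^{\star 2}}\norm{\vect v_{\rm s}}^2$ matches the paper's constant term after normalizing by $\eta^2$), so there is no substantive difference in approach, only in bookkeeping.
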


\begin{proof}
The Jacobian matrix of the system \eqref{eq:reduced-order-system} at the equilibrium point $\vect v_{\rm s} \coloneqq [v_{\rm ds}\enspace v_{\rm qs}]\trans$ is derived as
\begin{equation*}
    \mat{A} \coloneqq
    \begin{bmatrix}
        \eta \kappa_{\rm r} + \eta\alpha - \eta\alpha\tfrac{3v_{\rm ds}^2 + v_{\rm qs}^2}{v^{\star 2}}  & - \omega_{\Delta} - \eta \kappa_{\rm i} - \eta\alpha \tfrac{2v_{\rm ds}v_{\rm qs}}{v^{\star 2}} \\
        \omega_{\Delta} + \eta \kappa_{\rm i} - \eta\alpha \tfrac{2v_{\rm ds}v_{\rm qs}}{v^{\star 2}} & \eta \kappa_{\rm r} + \eta\alpha - \eta\alpha\tfrac{3v_{\rm qs}^2 + v_{\rm ds}^2}{v^{\star 2}}
    \end{bmatrix},
\end{equation*}
with auxiliary variables $\kappa_{\rm r} \coloneqq \Re \bigl\{e^{j\varphi} \tfrac{p^{\star} - jq^{\star}}{v^{\star 2}} \bigr\} - \Re \bigl\{e^{j\varphi} \pha{y} \bigr\}$ and $\kappa_{\rm i} \coloneqq \Im \bigl\{e^{j\varphi} \tfrac{p^{\star} - jq^{\star}}{v^{\star 2}} \bigr\} - \Im \bigl\{e^{j\varphi} \pha{y} \bigr\}$, see the definition in Table~\ref{tab:variables}. The characteristic equation is then given as
\begin{equation}
\begin{aligned}
    &{\lambda}^2 + 2\eta \bigl(\tfrac{2 \alpha \norm {\vect v_{\rm s}}^2}{v^{\star 2}} - \alpha - \kappa_{\rm r} \bigr) {\lambda} \, + \\
    & \qquad \bigl( \kappa_{\rm r} + \alpha - \tfrac{2 \alpha \norm {\vect v_{\rm s}}^2}{v^{\star 2}} \bigr)^2 - \bigl(\tfrac{\alpha \norm {\vect v_{\rm s}}^2}{v^{\star 2}} \bigr)^2 + \bigl(\tfrac{\omega_{\Delta}}{\eta} + \kappa_{\rm i} \bigr)^ 2 = 0.
\end{aligned}
\end{equation}
The equilibrium point is asymptotically stable if it holds that
\begin{equation}
\label{eq:local-stability}
    \left \{
    \begin{aligned}
        & \kappa_{\rm r} + \alpha < \tfrac{2 \alpha \norm {\vect v_{\rm s}}^2}{v^{\star 2}}, \\
        & \bigl(\kappa_{\rm r} + \alpha  - \tfrac{2\alpha \norm {\vect v_{\rm s}}^2}{v^{\star 2}} \bigr)^2 + \bigl(\tfrac{\omega_{\Delta}}{\eta} + \kappa_{\rm i}\bigr)^2 > \bigl(\tfrac{\alpha \norm {\vect v_{\rm s}}^2}{v^{\star 2}}\bigr)^2.
    \end{aligned} \right.
\end{equation}
A sufficient condition for \eqref{eq:local-stability} is $\kappa_{\rm r} + \alpha < \tfrac{\alpha}{v^{\star 2}} \norm {\vect v_{\rm s}}^2$, that is $\Re \bigl\{e^{j\varphi} \tfrac{p^{\star} - jq^{\star}}{v^{\star 2}} \bigr\} + \alpha < \tfrac{\alpha}{v^{\star 2}} \norm {\vect v_{\rm s}}^2 + \Re \bigl\{e^{j\varphi} \pha{y} \bigr\}$, by recalling the definition of $\kappa_{\rm r}$. This completes the proof.
\end{proof}

\begin{proposition}
\label{prop:unstable-ep}
Consider an equilibrium point $\vect v_{\rm s}$ of the system \eqref{eq:reduced-order-system}. The equilibrium point is unstable if either of the following two conditions is satisfied,
\begin{subequations}
    \begin{align}
        \kappa_{\rm r} + \alpha &> \tfrac{2 \alpha \norm {\vect v_{\rm s}}^2}{v^{\star 2}}\ \mathrm{or} \\
        \bigl(\kappa_{\rm r} + \alpha  - \tfrac{2\alpha \norm {\vect v_{\rm s}}^2}{v^{\star 2}} \bigr)^2 + \bigl(\tfrac{\omega_{\Delta}}{\eta} + \kappa_{\rm i}\bigr)^2 &< \bigl(\tfrac{\alpha \norm {\vect v_{\rm s}}^2}{v^{\star 2}}\bigr)^2.
    \end{align}
\end{subequations}
\end{proposition}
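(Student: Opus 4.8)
The plan is to read the instability off directly from the linearization that was already computed in the proof of Proposition~\ref{prop:local-stability}, rather than building anything new. Recall that the Jacobian $\mat A$ of \eqref{eq:reduced-order-system} at an equilibrium $\vect v_{\rm s}$ has characteristic polynomial
\begin{equation*}
  \lambda^2 + a_1 \lambda + a_0 = 0,
\end{equation*}
with $a_1 \coloneqq 2\eta\bigl(\tfrac{2\alpha\norm{\vect v_{\rm s}}^2}{v^{\star 2}} - \alpha - \kappa_{\rm r}\bigr)$ and $a_0 \coloneqq \bigl(\kappa_{\rm r} + \alpha - \tfrac{2\alpha\norm{\vect v_{\rm s}}^2}{v^{\star 2}}\bigr)^2 - \bigl(\tfrac{\alpha\norm{\vect v_{\rm s}}^2}{v^{\star 2}}\bigr)^2 + \bigl(\tfrac{\omega_{\Delta}}{\eta} + \kappa_{\rm i}\bigr)^2$. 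Since $2\eta > 0$, the first hypothesis of the proposition is precisely $a_1 < 0$, and the second is precisely $a_0 < 0$; together with the Routh--Hurwitz pair $a_1>0,\ a_0>0$ used in \eqref{eq:local-stability}, this shows the two instability conditions are exactly the strict violations of the exact (necessary and sufficient) stability conditions for $\mat A$.

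First I would treat the case $a_1 < 0$. By Vieta's formulas the two eigenvalues of $\mat A$ sum to $-a_1 > 0$, so the sum of their real parts is positive and hence at least one eigenvalue lies in the open right half-plane. Next I would treat the case $a_0 < 0$. The product of the eigenvalues equals $a_0 < 0$; a genuinely complex conjugate pair would have product $\lvert\lambda\rvert^2 \ge 0$, a contradiction, so both eigenvalues are real with opposite signs, and in particular one is strictly positive. In either case $\mat A$ admits an eigenvalue with strictly positive real part.

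Finally I would invoke Lyapunov's indirect (first) method: for the smooth vector field \eqref{eq:reduced-order-system}, an equilibrium whose linearization has an eigenvalue with positive real part is unstable. This closes the argument. I do not expect a genuine obstacle: the proof is short Routh--Hurwitz/Vieta bookkeeping followed by a standard linearization theorem. The only points needing a little care are (i) noting that a \emph{strictly} positive-real-part eigenvalue already forces instability of the nonlinear equilibrium regardless of the location of the other eigenvalue, so no extra non-degeneracy assumption is required, and (ii) using $2\eta>0$ to pass freely between the sign of $a_1$ and the sign of $\tfrac{2\alpha\norm{\vect v_{\rm s}}^2}{v^{\star 2}} - \alpha - \kappa_{\rm r}$.
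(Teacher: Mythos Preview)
Your proposal is correct and follows essentially the same approach as the paper: the paper's entire proof is the single sentence ``This proposition holds by directly referring to \eqref{eq:local-stability},'' i.e., it just points back to the Routh--Hurwitz pair derived in Proposition~\ref{prop:local-stability}. You simply make explicit what the paper leaves implicit---the Vieta argument that $a_1<0$ or $a_0<0$ forces an eigenvalue with positive real part, and the appeal to Lyapunov's indirect method---so your write-up is a fleshed-out version of the same idea rather than a different route.
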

This proposition holds by directly referring to \eqref{eq:local-stability}.

\subsection{Global Asymptotic Stability of an Equilibrium Point}

\begin{theorem}
\label{thm:globally-stable}
Assume that the system \eqref{eq:reduced-order-system} has a unique equilibrium point $\vect v_{\rm s}$. The equilibrium point is globally asymptotically stable if it holds that
\begin{equation}
\label{condi:globally-stable}
    \Re \bigl\{e^{j\varphi} \tfrac{p^{\star} - jq^{\star}}{v^{\star 2}} \bigr\} + \alpha < \tfrac{1}{2} \tfrac{\alpha}{v^{\star 2}} \norm {\vect v_{\rm s}}^2 + \Re \bigl\{e^{j\varphi} \pha{y} \bigr\}.
\end{equation}
\end{theorem}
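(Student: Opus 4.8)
The plan is to certify global asymptotic stability with a single quadratic Lyapunov function centered at the (assumed unique) equilibrium, exploiting that the complex-droop vector field decomposes into a rotation (annihilated by skew-symmetry), a scalar linear contraction, and a cubic ``Stuart--Landau''-type nonlinearity whose cross-term is the only obstruction.

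First I would pass to error coordinates $\vect e \coloneqq \vect v - \vect v_{\rm s}$. By Propositions~\ref{prop:existence-of-ep} and~\ref{prop:uniqueness-ep}, the equilibrium $\vect v_{\rm s}$ exists and, under the standing hypothesis, is unique; it satisfies $\mathbbb{0}_2 = [\omega_{\Delta}\mat J + \eta(\mat S_{\varphi}^{\star} - \mat Y_{\varphi})]\vect v_{\rm s} + \eta\mat Y_{\varphi}\vect v_{\rm g} + \eta\alpha\mat \Phi(\vect v_{\rm s})\vect v_{\rm s}$. Subtracting this identity from \eqref{eq:reduced-order-system} gives $\dot{\vect v} = [\omega_{\Delta}\mat J + \eta(\mat S_{\varphi}^{\star} - \mat Y_{\varphi})]\vect e + \eta\alpha[\mat \Phi(\vect v)\vect v - \mat \Phi(\vect v_{\rm s})\vect v_{\rm s}]$. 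I would then use $V(\vect v) \coloneqq \tfrac12\norm{\vect e}^2$, which is positive definite with respect to $\vect v_{\rm s}$ and radially unbounded on $\bbR^2$. Differentiating along trajectories, $\dot V = \vect e\trans\dot{\vect v}$: the term $\omega_{\Delta}\vect e\trans\mat J\vect e$ vanishes by skew-symmetry of $\mat J$; since the symmetric part of $\mat S_{\varphi}^{\star} - \mat Y_{\varphi}$ is $\kappa_{\rm r}\mat I_2$ with $\kappa_{\rm r} = \Re \bigl\{e^{j\varphi}\tfrac{p^{\star} - jq^{\star}}{v^{\star 2}}\bigr\} - \Re \bigl\{e^{j\varphi}\pha{y}\bigr\}$, the linear part contributes $\eta\kappa_{\rm r}\norm{\vect e}^2$; and, using $\mat \Phi(\vect v)\vect v = \vect v - \tfrac{\norm{\vect v}^2}{v^{\star 2}}\vect v$, the cubic part contributes $\eta\alpha\norm{\vect e}^2 - \tfrac{\eta\alpha}{v^{\star 2}}\,\vect e\trans(\norm{\vect v}^2\vect v - \norm{\vect v_{\rm s}}^2\vect v_{\rm s})$.

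The crux, and the step I expect to be the main obstacle, is a sharp lower bound on that cubic cross-term. Expanding about $\vect v_{\rm s}$ gives $\vect e\trans(\norm{\vect v}^2\vect v - \norm{\vect v_{\rm s}}^2\vect v_{\rm s}) = \norm{\vect v_{\rm s}}^2\norm{\vect e}^2 + 2(\vect v_{\rm s}\trans\vect e)^2 + 3(\vect v_{\rm s}\trans\vect e)\norm{\vect e}^2 + \norm{\vect e}^4$, in which only the term $3(\vect v_{\rm s}\trans\vect e)\norm{\vect e}^2$ is sign-indefinite. Setting $a \coloneqq \vect v_{\rm s}\trans\vect e$ and $t \coloneqq \norm{\vect e}$, subject to the Cauchy--Schwarz constraint $|a| \le \norm{\vect v_{\rm s}}\,t$, the right-hand side reads $\norm{\vect v_{\rm s}}^2 t^2 + 2a^2 + 3a t^2 + t^4$; a short constrained minimization (completing the square in $a$, then handling the boundary case $|a| = \norm{\vect v_{\rm s}}\,t$ as a quadratic in $t$) shows this quantity is at least $\tfrac12\norm{\vect v_{\rm s}}^2 t^2$. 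Substituting back, $\dot V \le \eta\norm{\vect e}^2\bigl(\kappa_{\rm r} + \alpha - \tfrac12\tfrac{\alpha}{v^{\star 2}}\norm{\vect v_{\rm s}}^2\bigr)$, and by the definition of $\kappa_{\rm r}$ the bracket equals the left-hand minus right-hand side of \eqref{condi:globally-stable}, hence is strictly negative under the hypothesis.

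With $\dot V < 0$ for every $\vect v \neq \vect v_{\rm s}$, $V$ radially unbounded, and $\vect v_{\rm s}$ the only equilibrium, Lyapunov's global asymptotic stability theorem closes the argument. Everything besides the cross-term estimate — the skew-symmetry cancellation, extraction of $\kappa_{\rm r}$, the polynomial expansion, and radial unboundedness — is routine; the delicate point is arranging the completion of squares so that the indefinite cubic term is absorbed with the clean coefficient $\tfrac12$, which is precisely the factor by which \eqref{condi:globally-stable} tightens the local condition of Proposition~\ref{prop:local-stability}.
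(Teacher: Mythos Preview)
Your proof is correct and follows essentially the same approach as the paper: the same quadratic Lyapunov function centered at $\vect v_{\rm s}$, the same cancellation of the skew-symmetric part to isolate $\kappa_{\rm r}$, and the same key lower bound $\vect e\trans(\norm{\vect v}^2\vect v - \norm{\vect v_{\rm s}}^2\vect v_{\rm s}) \ge \tfrac{1}{2}\norm{\vect v_{\rm s}}^2\norm{\vect e}^2$. The only cosmetic difference is that the paper proves that inequality as a standalone result (Proposition~\ref{prop:inequality}) via the identity $(\vect x - \vect y)\trans(\norm{\vect x}^2\vect x - \norm{\vect y}^2\vect y) = \norm{\vect x}^4 + \norm{\vect y}^4 - \vect x\trans\vect y(\norm{\vect x}^2 + \norm{\vect y}^2) \ge \tfrac{1}{2}(\norm{\vect x}^2 + \norm{\vect y}^2)\norm{\vect x - \vect y}^2$, whereas you expand in error coordinates and do a constrained minimization; both routes yield the same sharp coefficient~$\tfrac{1}{2}$.
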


\begin{proof}
We define a positive-definite and radially unbounded Lyapunov candidate function w.r.t. $\vect v_{\rm s}$ as
\begin{equation}
\label{eq:lyapunov-func}
    V \coloneqq \tfrac{1}{2 \eta} \left(\vect v - \vect v_{\rm s}\right)\trans \left(\vect v - \vect v_{\rm s}\right),
\end{equation}
where $\vect v_{\rm s}$, as a unique equilibrium point, exclusively satisfies the steady-state equation of \eqref{eq:reduced-order-system} as
\begin{equation}
\label{eq:auxiliary-steady-state-equation}
    \mat Y_{\varphi} \vect v_{\rm g} = - [\tfrac{\omega_{\Delta} \mat J}{\eta} + (\mat S_{\varphi}^{\star} - \mat Y_{\varphi})] {\vect v_{\rm s}} - \alpha \mat \Phi(\vect v_{\rm s}) {\vect v_{\rm s}}.
\end{equation}
The Lyapunov function $V$ defined in \eqref{eq:lyapunov-func} physically represents the energy storage in the oscillator. The time derivative of $V$ along the system dynamics in \eqref{eq:reduced-order-system} is derived as follows,
\begin{alignat*}{2}
{\dot V} & = &&\ \tfrac{1}{\eta} \left(\vect v - \vect v_{\rm s}\right)\trans \dot {\vect v} \\
    & = &&\ \left(\vect v - \vect v_{\rm s}\right)\trans \bigl\{[\tfrac{\omega_{\Delta}}{\eta} \mat J + (\mat S_{\varphi}^{\star} - \mat Y_{\varphi})] {\vect{v}} + \mat Y_{\varphi} \vect v_{\rm g} + \alpha \mat \Phi(\vect{v}) {\vect{v}} \bigr\}\\
    & = &&\ \left(\vect v - \vect v_{\rm s}\right)\trans [\tfrac{\omega_{\Delta}}{\eta} \mat J + (\mat S_{\varphi}^{\star} - \mat Y_{\varphi}) + \alpha \mat I_2] \left(\vect{v} - \vect v_{\rm s} \right) \\
    & &&\ - \alpha \tfrac{1}{v^{\star 2}} [\left(\vect v - \vect v_{\rm s}\right)\trans (\norm{\vect v}^2 \vect v - \norm{\vect v_{\rm s}}^2 \vect v_{\rm s})] \\
    & = && \ (\kappa_{\rm r} + \alpha) \left(\vect v - \vect v_{\rm s}\right)\trans \left(\vect v - \vect v_{\rm s}\right)\\
    & &&\ - \alpha \tfrac{1}{v^{\star 2}} [\left(\vect v - \vect v_{\rm s}\right)\trans (\norm{\vect v}^2 \vect v - \norm{\vect v_{\rm s}}^2 \vect v_{\rm s})] \\
    & \leq && \ (\kappa_{\rm r} + \alpha - \alpha \tfrac{1}{v^{\star 2}} \tfrac{1}{2} \norm{\vect v_{\rm s}}^2) \left(\vect v - \vect v_{\rm s}\right)\trans \left(\vect v - \vect v_{\rm s}\right),
\end{alignat*}
where the last inequality holds due to Proposition~\ref{prop:inequality}. It follows that $\dot V$ is negative definite w.r.t. $\vect v_{\rm s}$, if it holds that $\kappa_{\rm r} + \alpha < \tfrac{1}{2}\tfrac{\alpha}{v^{\star 2}} \norm {\vect v_{\rm s}}^2$, which is equivalent to \eqref{condi:globally-stable} by recalling the definition of $\kappa_{\rm r}$ in Table~\ref{tab:variables}. This completes the proof.
\end{proof}

In this theorem, we require the uniqueness of the equilibrium point to ensure that $\vect v_{\rm s}$ in \eqref{eq:auxiliary-steady-state-equation} is unique and consistent with the equilibrium point used in the Lyapunov function. We refer to Proposition~\ref{prop:uniqueness-ep} for an identified necessary and sufficient condition as well as a sufficient one for the uniqueness of an equilibrium point.

\begin{proposition}
\label{prop:inequality}
$\forall \vect x, \vect y \in \mathbb{R}^2$, it holds that
\begin{equation*}
    (\vect x - \vect y) \trans (\norm{\vect x}^2 \vect x - \norm{\vect y}^2 \vect y) \geq \tfrac{1}{2} \norm{\vect y}^2 (\vect x - \vect y) \trans (\vect x - \vect y).
\end{equation*}
\end{proposition}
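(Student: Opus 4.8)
\emph{Proof proposal.} The plan is to reduce this vector inequality to a one‑variable polynomial inequality in the two norms and then exhibit an explicit sum‑of‑squares factorization. Write $p \coloneqq \norm{\vect x}^2$, $q \coloneqq \norm{\vect y}^2$, and $c \coloneqq \vect x\trans \vect y$, so that $\norm{\vect x - \vect y}^2 = p - 2c + q$. First I would expand the left-hand side,
\begin{equation*}
    (\vect x - \vect y)\trans (\norm{\vect x}^2 \vect x - \norm{\vect y}^2 \vect y) = p^2 + q^2 - (p+q)\,c,
\end{equation*}
and the right-hand side, $\tfrac12 q(p - 2c + q)$. Subtracting, the terms linear in $c$ combine to $-p\,c$, and the claimed inequality becomes equivalent to
\begin{equation*}
    p^2 + \tfrac12 q^2 - p\,c - \tfrac12 p\,q \ \geq\ 0.
\end{equation*}

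Since the coefficient of $c$ here is $-p \le 0$, I would bound $c = \vect x\trans \vect y$ from above by the Cauchy–Schwarz inequality, $c \le \norm{\vect x}\,\norm{\vect y}$. Writing $a \coloneqq \norm{\vect x} \ge 0$ and $b \coloneqq \norm{\vect y} \ge 0$ (so $p = a^2$, $q = b^2$), it then suffices to prove
\begin{equation*}
    a^4 - a^3 b - \tfrac12 a^2 b^2 + \tfrac12 b^4 \ \geq\ 0,
    \quad\text{equivalently}\quad
    2a^4 - 2a^3 b - a^2 b^2 + b^4 \ \geq\ 0 .
\end{equation*}
The decisive step is the observation that this quartic factors as
\begin{equation*}
    2a^4 - 2a^3 b - a^2 b^2 + b^4 = (a - b)^2\,(2a^2 + 2ab + b^2) = (a - b)^2\bigl(a^2 + (a+b)^2\bigr),
\end{equation*}
which one verifies by polynomial division (note $a=b$ is a double root). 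Both factors are nonnegative for $a,b \ge 0$, which closes the argument; equality holds precisely when $\vect x = \vect y$.

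The only real obstacle is finding that factorization — i.e. recognizing that after using Cauchy–Schwarz the resulting quartic in $a,b$ is a perfect square times a manifestly nonnegative quadratic form. Everything else is bookkeeping: being careful to bound $\vect x\trans\vect y$ from \emph{above} (not below) because its coefficient in the reduced expression is negative, and tracking the constant $\tfrac12$ correctly through the expansion. I would also remark that the argument is dimension‑independent, so it holds in any $\mathbb{R}^n$, not just $\mathbb{R}^2$.
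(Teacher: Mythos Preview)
Your proof is correct, but it takes a different route from the paper. After the common first step of expanding the left-hand side to $\norm{\vect x}^4 + \norm{\vect y}^4 - (\norm{\vect x}^2 + \norm{\vect y}^2)\,\vect x\trans\vect y$, the paper does not touch the cross term $\vect x\trans\vect y$ at all: instead it applies the elementary bound $\norm{\vect x}^4 + \norm{\vect y}^4 \ge \tfrac12(\norm{\vect x}^2+\norm{\vect y}^2)^2$, factors the result as $\tfrac12(\norm{\vect x}^2+\norm{\vect y}^2)\norm{\vect x-\vect y}^2$, and then simply drops $\norm{\vect x}^2$. By contrast, you first subtract the right-hand side, then invoke Cauchy--Schwarz to eliminate $\vect x\trans\vect y$, and finish with the quartic factorization $2a^4-2a^3b-a^2b^2+b^4=(a-b)^2\bigl(a^2+(a+b)^2\bigr)$. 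The paper's argument is shorter and, as a bonus, yields the symmetric intermediate inequality $(\vect x-\vect y)\trans(\norm{\vect x}^2\vect x-\norm{\vect y}^2\vect y)\ge\tfrac12(\norm{\vect x}^2+\norm{\vect y}^2)\norm{\vect x-\vect y}^2$; your route is more computational but entirely self-contained and, as you note, equally dimension-independent.
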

\begin{proof}
With the following argument,
\begin{align*}
    & (\vect x - \vect y) \trans \bigl(\norm{\vect x}^2 \vect x - \norm{\vect y}^2 \vect y \bigr) \\
    & = \norm{\vect x}^4 + \norm{\vect y}^4 - \vect x \trans \vect y \bigl(\norm{\vect x}^2 + \norm{\vect y}^2 \bigr) \\
    & \geq \tfrac{1}{2} \bigl(\norm{\vect x}^2 + \norm{\vect y}^2 \bigr)^2 - \vect x \trans \vect y \bigl (\norm{\vect x}^2 + \norm{\vect y}^2 \bigr) \\
    & = \tfrac{1}{2} \bigl(\norm{\vect x}^2 + \norm{\vect y}^2 \bigr) (\vect x - \vect y) \trans (\vect x - \vect y)\\
    & \geq \tfrac{1}{2} \norm{\vect y}^2 (\vect x - \vect y) \trans (\vect x - \vect y),
\end{align*}
the proof is completed.
\end{proof}

The condition in \eqref{condi:globally-stable} for global stability is a proper subset of the condition in \eqref{condi:locally-stable} for local stability. This is true because global stability suffices local stability. It is interesting to observe that the sufficient condition in \eqref{condi:uniqueness} for the uniqueness of the equilibrium point is consistent with the equilibrium point-independent sufficient condition in \eqref{eq:stability-condition-without-ep} for global stability. Moreover, the stability condition in \eqref{condi:globally-stable} for the general case resembles those for the degenerated results in \eqref{condi:grid-following} and \eqref{condi:offgrid}. This resemblance implies that the condition for the general can be seen as an extension of the degenerated. However, the former is sufficient whereas the latter ones are necessary and sufficient.

\subsection{Transient Instability Results}

\begin{proposition}
\label{prop:bounded}
The trajectories of the system in \eqref{eq:reduced-order-system} are bounded, and the ultimate voltage upper bound is given as 
\begin{equation}
\label{eq:ultimate-bound}
    v_{\rm m} \coloneqq \max \bigl\{v_{\rm g},\, v^{\star} \bigl(1 + \tfrac{\kappa_{\rm r} + \lvert \pha{y} \rvert}{\alpha} \bigr)^{1/2} \bigr\}.
\end{equation}
\end{proposition}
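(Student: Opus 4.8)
The plan is to use the squared voltage magnitude $W(\vect v) \coloneqq \tfrac{1}{2}\norm{\vect v}^{2}$ as a radially unbounded function and to show that $\dot W < 0$ whenever $\norm{\vect v}$ exceeds the claimed threshold $v_{\rm m}$, so that the ball of radius $v_{\rm m}$ absorbs every trajectory. First I would differentiate $W$ along \eqref{eq:reduced-order-system}. Using that $\mat J$ is skew-symmetric (hence $\vect v\trans\mat J\vect v = 0$), that the symmetric part of $\mat S_{\varphi}^{\star} - \mat Y_{\varphi}$ equals $\kappa_{\rm r}\mat I_{2}$ (cf.\ Table~\ref{tab:variables}, hence $\vect v\trans(\mat S_{\varphi}^{\star} - \mat Y_{\varphi})\vect v = \kappa_{\rm r}\norm{\vect v}^{2}$), and that the $\mat\Phi$-term contributes $\alpha\norm{\vect v}^{2} - \tfrac{\alpha}{v^{\star 2}}\norm{\vect v}^{4}$, one obtains
\begin{equation*}
\dot W \;=\; \eta\,\norm{\vect v}^{2}\Bigl(\kappa_{\rm r} + \alpha - \tfrac{\alpha}{v^{\star 2}}\norm{\vect v}^{2}\Bigr) \;+\; \eta\,\vect v\trans\mat Y_{\varphi}\vect v_{\rm g}.
\end{equation*}

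Next I would bound the grid-voltage forcing term by Cauchy--Schwarz, $\vect v\trans\mat Y_{\varphi}\vect v_{\rm g} \le \norm{\vect v}\,\norm{\mat Y_{\varphi}}\,\norm{\vect v_{\rm g}} = \lvert\pha{y}\rvert\,v_{\rm g}\,\norm{\vect v}$, using that $\mat R_{\varphi}$ is orthogonal and that $\mat Y$ (a matrix of the form $[\begin{smallmatrix}a&-b\\b&a\end{smallmatrix}]$ representing $\pha y$) has spectral norm $\lvert\pha y\rvert$, together with $\norm{\vect v_{\rm g}} = v_{\rm g}$. The key observation is elementary: whenever $\norm{\vect v}^{2} > v^{\star 2}\bigl(1 + (\kappa_{\rm r} + \lvert\pha y\rvert)/\alpha\bigr)$ one has $\kappa_{\rm r} + \alpha - \tfrac{\alpha}{v^{\star 2}}\norm{\vect v}^{2} < -\lvert\pha y\rvert$, so that $\dot W < \eta\lvert\pha y\rvert\,\norm{\vect v}\,(v_{\rm g} - \norm{\vect v})$; and if moreover $\norm{\vect v} > v_{\rm g}$ then $\dot W < 0$. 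Taking the larger of the two thresholds recovers exactly $v_{\rm m}$ as in \eqref{eq:ultimate-bound}, and the same estimate gives $\dot W \le 0$ on the sphere $\norm{\vect v} = v_{\rm m}$.

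Finally I would convert this into the boundedness and ultimate-bound claims by a standard invariance/comparison argument: the closed ball $\mathcal{B}\coloneqq\{ \vect v : \norm{\vect v}\le v_{\rm m} \}$ is forward invariant because $\dot W \le 0$ on its boundary; for any initial condition outside $\mathcal{B}$, $W$ is non-increasing as long as $\norm{\vect v}\ge v_{\rm m}$, so the solution stays in the compact set $\{ \norm{\vect v}\le\max\{v_{\rm m},\norm{\vect v(0)}\} \}$ and is therefore bounded and complete; and since $\dot W$ is continuous and, on any annulus $\{ v'\le\norm{\vect v}\le\norm{\vect v(0)} \}$ with $v'>v_{\rm m}$, uniformly bounded away from zero, the monotone quantity $W$ would have to decrease without bound if the trajectory remained outside $\mathcal{B}$ forever, a contradiction; hence $\limsup_{t\to\infty}\norm{\vect v(t)}\le v_{\rm m}$. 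I do not expect a genuine obstacle: the argument is essentially a single Lyapunov-derivative estimate. The only points needing mild care are the two-branch bookkeeping that produces the $\max\{\cdot,\cdot\}$ (the $v_{\rm g}$-branch dominates the cross term, while the $v^{\star}(\cdot)^{1/2}$-branch dominates the quartic dissipation), and the degenerate case $\kappa_{\rm r} + \lvert\pha y\rvert < -\alpha$, in which the second branch is vacuous, one reads $v_{\rm m} = v_{\rm g}$, and this still works because then $\kappa_{\rm r} + \alpha - \tfrac{\alpha}{v^{\star 2}}\norm{\vect v}^{2} \le \kappa_{\rm r} + \alpha < -\lvert\pha y\rvert$ for every $\norm{\vect v}$.
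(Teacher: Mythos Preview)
Your proposal is correct and follows essentially the same route as the paper: it uses the squared magnitude as a Lyapunov-like function, computes its derivative along \eqref{eq:reduced-order-system}, bounds the grid-voltage term by $\lvert\pha y\rvert\,v_{\rm g}\,\norm{\vect v}$, and combines the two thresholds into the $\max$ defining $v_{\rm m}$. The only cosmetic differences are your factor of $\tfrac12$ in $W$ and the order in which you absorb the two conditions (the paper first uses $\norm{\vect v}>v_{\rm g}$ to bound $\lvert\pha y\rvert\,v_{\rm g}\,\norm{\vect v}<\lvert\pha y\rvert\,\norm{\vect v}^{2}$, then invokes the second threshold); your invariance/comparison wrap-up and handling of the degenerate case $\kappa_{\rm r}+\lvert\pha y\rvert<-\alpha$ are more explicit than what the paper spells out.
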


\begin{proof}
We define a Lyapunov candidate function as
\begin{equation}
    W \coloneqq \norm {\vect v}^2 = \vect v \trans \vect v,
\end{equation}
which represents the distance square of the voltage to the origin. The time derivative of $W$ along the system dynamics in \eqref{eq:reduced-order-system} is derived as follows,
\begin{equation}
\begin{aligned}
    \dot W &= 2 \vect v \trans \dot {\vect v} \\
    &= 2\eta \bigl(\kappa_{\rm r} + \alpha - \alpha \tfrac{\norm {\vect v}^2}{v^{\star 2}} \bigr) \norm {\vect v}^2 + 2\eta \vect v \trans \mat Y_{\varphi} \vect v_{\rm g}\\
    &\leq 2\eta \bigl(\kappa_{\rm r} + \alpha - \alpha \tfrac{\norm {\vect v}^2}{v^{\star 2}} \bigr) \norm {\vect v}^2 + 2\eta \lvert \pha{y} \rvert \norm {\vect v} v_{\rm g}.    
\end{aligned}
\end{equation}
When $\norm {\vect v} > v_{\rm g}$ and $\norm {\vect v} > v^{\star} \bigl(1 + \tfrac{\kappa_{\rm r} + \lvert \pha{y} \rvert}{\alpha} \bigr)^{1/2}$, it holds that
\begin{equation}
\begin{aligned}
    \dot W < 2\eta \bigl(\kappa_{\rm r} + \alpha + \lvert \pha{y} \rvert - \alpha \tfrac{\norm {\vect v}^2}{v^{\star 2}} \bigr) \norm {\vect v}^2 < 0.
\end{aligned}    
\end{equation}
With $v_{\rm m} = \max \bigl\{v_{\rm g},\, v^{\star} \bigl(1 + \tfrac{\kappa_{\rm r} + \lvert \pha{y} \rvert}{\alpha} \bigr)^{1/2} \bigr\}$ defined in \eqref{eq:ultimate-bound}, it directly follows that $\dot W < 0$ if $\norm {\vect v} > v_{\rm m}$. This suggests that $W$ will decrease monotonically outside the range $\{ \vect v\, \vert\, W(\vect v) \leq v_{\rm m}^2\}$. Therefore, each trajectory starting outside the range will decay until it enters the range, and each trajectory starting within the range will remain therein for all future time since $\dot W \leq 0$ on the boundary $W(\vect v) = v_{\rm m}^2$.
\end{proof}

\begin{theorem}
\label{thm:transient-instability}
If the system in \eqref{eq:reduced-order-system} cannot converge to an equilibrium point, then it must converge to limit cycles.
\end{theorem}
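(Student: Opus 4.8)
The plan is to exploit that \eqref{eq:reduced-order-system} is a \emph{planar} autonomous system: its state is $\vect v \in \bbR^2$ and its vector field $\vect f(\vect v) = [\omega_{\Delta} \mat J + \eta(\mat S_{\varphi}^{\star} - \mat Y_{\varphi})]\vect v + \eta \mat Y_{\varphi}\vect v_{\rm g} + \eta\alpha \mat \Phi(\vect v)\vect v$ is polynomial (cubic) in $\vect v$, hence $C^1$, so solutions exist and are unique. First I would record the two structural facts already in hand: (i) by Proposition~\ref{prop:bounded} every forward trajectory is bounded, so the closure of its positive orbit is a compact subset of $\bbR^2$ and its $\omega$-limit set $\Omega$ is nonempty, compact, connected and invariant; and (ii) by the graphical argument underlying Proposition~\ref{prop:existence-of-ep} (see Fig.~\ref{fig:graphic-proof}) the equilibrium set is finite, with at most three points, hence isolated.

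Next I would invoke the Poincaré--Bendixson theorem in its standard form for bounded planar flows with isolated equilibria: the $\omega$-limit set $\Omega$ of any trajectory is one of (a) a single equilibrium point, (b) a periodic orbit (limit cycle), or (c) a finite union of equilibria together with heteroclinic/homoclinic orbits joining them (a separatrix polycycle, i.e.\ a ``graphic''). Case (a) is precisely the trajectory converging to an equilibrium point, which is excluded by the hypothesis of the theorem. Hence the trajectory's $\omega$-limit set must be of type (b) or (c); in either case the trajectory asymptotically approaches a closed curve in the plane along which it circulates, i.e.\ it exhibits limit-cycle (periodic-orbit) behavior rather than divergence, which is the conclusion of the theorem.

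The main obstacle is the degenerate case (c). A separatrix polycycle requires at least one saddle-type equilibrium, which cannot occur when the equilibrium is unique---an isolated equilibrium of a bounded planar flow has index $+1$, incompatible with a saddle's index $-1$---so under the uniqueness condition of Proposition~\ref{prop:uniqueness-ep} the limit set in case (c) degenerates and $\Omega$ is forced to be a genuine periodic orbit; this is exactly the sharpening recorded in Corollary~\ref{cor:limit-cycle} (combined with the instability condition of Proposition~\ref{prop:unstable-ep}). For the general situation with possibly three equilibria, I would either note that the trajectory still approaches a bounded closed invariant curve, so the qualitative statement ``bounded oscillatory, non-divergent behavior'' stands, or add a short index-theoretic argument (the indices of the three equilibria must sum to $+1$, forcing the configuration $+1,+1,-1$) together with a Bendixson-type check that no persistent separatrix loop survives under the stated parameter regime. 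Beyond this case analysis the proof is essentially a citation of Poincaré--Bendixson glued to the boundedness from Proposition~\ref{prop:bounded} and the finiteness of the equilibrium set; no new estimates are required.
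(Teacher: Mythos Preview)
Your proposal is correct and follows essentially the same route as the paper: boundedness of trajectories (Proposition~\ref{prop:bounded}) plus the planar nature of \eqref{eq:reduced-order-system}, then an appeal to Poincar\'e--Bendixson. The paper's own proof is in fact a two-line citation of exactly these two ingredients and does not explicitly separate out or handle the separatrix-polycycle case (c); your treatment of that case via finiteness of equilibria and index-theoretic remarks is more careful than what the paper provides.
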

\begin{proof}
    Consider that the system is second-order and all the trajectories are bounded, cf. Proposition~\ref{prop:bounded}. The Poincar\'e-Bendixson Theorem \cite[Theorem 7.2.5]{miller1982ordinary} reveals that any bounded trajectory of second-order systems converges either to an equilibrium point or to a limit cycle.
\end{proof}

\begin{corollary}
\label{cor:limit-cycle}
For the system in \eqref{eq:reduced-order-system}, if there is only one equilibrium point and it is unstable, then all the initial states, except for the equilibrium point, converge to limit cycles.
\end{corollary}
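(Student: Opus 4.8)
The plan is to feed the unique, unstable equilibrium into the Poincar\'e--Bendixson argument already invoked in Theorem~\ref{thm:transient-instability}. I would fix an arbitrary initial state $\vect v_0 \neq \vect v_{\rm s}$; by Proposition~\ref{prop:bounded} its forward orbit stays in the compact trapping set $\{\vect v : \norm{\vect v} \le v_{\rm m}\}$, so the $\omega$-limit set $\Omega(\vect v_0)$ is nonempty, compact, connected and invariant. Since $\vect v_{\rm s}$ is the only equilibrium, the Poincar\'e--Bendixson theorem leaves exactly three possibilities: (i) $\Omega(\vect v_0)$ is a periodic orbit; (ii) $\Omega(\vect v_0) = \{\vect v_{\rm s}\}$; or (iii) $\Omega(\vect v_0)$ is the union of $\vect v_{\rm s}$ with one or more homoclinic orbits based at $\vect v_{\rm s}$. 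Case (i) is precisely the assertion, so everything reduces to ruling out (ii) and (iii).

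The central step is to identify the local type of $\vect v_{\rm s}$. Taking the circle $\norm{\vect v} = v_{\rm m} + \delta$ for small $\delta > 0$, the vector field of \eqref{eq:reduced-order-system} points strictly inward (as in the proof of Proposition~\ref{prop:bounded}), so this curve has Poincar\'e index $+1$; enclosing only the zero $\vect v_{\rm s}$, it forces $i_{\vect v_{\rm s}} = +1$, which excludes $\vect v_{\rm s}$ from being a saddle. Equivalently -- and this is worth recording -- uniqueness of the equilibrium makes the ``$a_0 < 0$'' branch of Proposition~\ref{prop:unstable-ep} vacuous, so instability must occur through a positive trace with positive determinant, i.e.\ $\vect v_{\rm s}$ is a hyperbolic unstable node or focus. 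Consequently its local stable set is just $\{\vect v_{\rm s}\}$, so no nonconstant trajectory tends to $\vect v_{\rm s}$ as $t \to +\infty$. This disposes of (ii) at once (the orbit through $\vect v_0 \neq \vect v_{\rm s}$ is never equal to $\vect v_{\rm s}$), and also of (iii), because a homoclinic orbit is itself a nonconstant trajectory whose $\omega$-limit set is $\{\vect v_{\rm s}\}$.

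With (ii) and (iii) eliminated, $\Omega(\vect v_0)$ is necessarily a periodic orbit for every $\vect v_0 \neq \vect v_{\rm s}$; since the $\omega$-limit set is a single periodic orbit, the trajectory converges to it, i.e.\ to a limit cycle, which is the corollary. I expect the boundedness-plus-Poincar\'e--Bendixson portion to be routine given the results already in hand; the genuine obstacle is the second step -- confirming that a \emph{unique} unstable equilibrium cannot be a saddle and that no homoclinic loop can persist. That is exactly the point at which a purely qualitative argument is not enough, and I would lean on the Poincar\'e index computation (backed up, if desired, by an explicit sign inspection of the characteristic polynomial from the proof of Proposition~\ref{prop:local-stability}).
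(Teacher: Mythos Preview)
Your argument is correct and, at the level of the overall strategy, matches the paper: both rely on Theorem~\ref{thm:transient-instability} (boundedness plus Poincar\'e--Bendixson) and then argue that no nonconstant trajectory can approach the unique unstable equilibrium. The paper, however, dispatches the corollary in a single line --- ``a trajectory except for the equilibrium point itself cannot converge to an unstable equilibrium point'' --- without justifying why ``unstable'' here excludes a nontrivial stable set.

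Your proposal goes further in exactly the right place. The Poincar\'e-index computation (inward field on the trapping circle forces the unique equilibrium to have index $+1$, hence $\det \mat A > 0$) is the clean way to see that the saddle branch --- condition~(b) of Proposition~\ref{prop:unstable-ep} --- is vacuous once uniqueness is assumed; this is content the paper leaves implicit. Likewise, you explicitly dispose of the homoclinic alternative in the Poincar\'e--Bendixson trichotomy, which the paper does not mention at all. So your version is a strict refinement of the paper's: same backbone, but with the saddle/homoclinic obstructions actually addressed rather than asserted away. The only residual loose end is the non-hyperbolic boundary case $\det \mat A = 0$, where the index-equals-sign-of-determinant identification breaks down; this is generically empty and can be handled separately if needed.
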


This corollary is straightforward since a trajectory except for the equilibrium point itself cannot converge to an unstable equilibrium point. We note that this limit-cycle orbit is not necessarily a circle, see Fig.~\ref{fig:case-c1}, in contrast to the circular limit cycle in the off-grid case, see Example 2. This difference is attributed to the dominant effect of the grid voltage input on the oscillatory behavior.

\section{Stability Results of the Full-Order System}
\label{sec:appendix-b}

For ease of observation, the full-order multi-time-scale dynamic system in \eqref{eq:full-order-system-0} is given in the following again,
\begin{subequations}
\label{eq:full-order-system}
    \begin{align}
        \tfrac{\rm d}{{\rm d}t} \underbrace{\hat{\vect{v}}}_{=:\vect x_1} &= \underbrace{\omega_{\Delta} \mat J \hat{\vect{v}} + \eta  \mat S_{\varphi}^{\star} \hat{\vect{v}} - \eta \mat R_{\varphi} \vect i + \eta \alpha \mat \Phi(\hat{\vect{v}}) \hat{\vect{v}}}_{=:\, \vect f_1(\vect x_1,\vect x_2)},\\
        \tfrac{\rm d}{{\rm d}t} \underbrace{\vect{i}}_{=:\vect x_2} &= \underbrace{l_{\rm g}^{-1} (-\mat Z \vect{i} + \vect v - \vect v_{\rm g})}_{=:\, \vect f_2(\vect x_1, \vect x_2, \vect x_3)},\\
        \tfrac{\rm d}{{\rm d}t} \underbrace{\begin{bmatrix} {\vect{v}} \\  {\vect \zeta} _{\rm v} \end{bmatrix}}_{=:\vect x_3} &= \underbrace{\begin{bmatrix} c_{\rm f}^{-1} (- \mat Y_{\rm f} \vect{v} - \vect{i} + \vect{i}_{\rm f}) \\ \omega_{\Delta} \mat J \vect \zeta _{\rm v} + \vect v - \hat{\vect v} \end{bmatrix}}_{=:\, \vect f_3(\vect x_1, \vect x_2, \vect x_3, \vect x_4)},\\
        \tfrac{\rm d}{{\rm d}t} \underbrace{\begin{bmatrix} {\vect{i}}_{\rm f} \\  {\vect \zeta} _{\rm c} \end{bmatrix}}_{=:\vect x_4} &= \underbrace{\begin{bmatrix} l_{\rm f}^{-1} (- k_{\rm p}^{\rm c} (\vect i_{\rm f} - \vect i_{\rm f}^{\star}) - k_{\rm r}^{\rm c} \vect \zeta _{\rm c}) \\ \omega_{\Delta} \mat J \vect \zeta _{\rm c} + \vect i_{\rm f} - \vect i_{\rm f}^{\star} \end{bmatrix}}_{=:\, \vect f_4(\vect x_1, \vect x_2, \vect x_3, \vect x_4)},\\
        \vect i_{\rm f}^{\star} &= - k_{\rm p}^{\rm v} (\vect v - \hat{\vect v}) - k_{\rm r}^{\rm v} \vect \zeta _{\rm v} + \mat Y_{\rm f} \vect v + \vect i.
    \end{align}
\end{subequations}

\begin{theorem}
\label{thm:stability-of-full}
Consider the full-order system in \eqref{eq:full-order-system} with $\omega_{\Delta} = 0$ \footnote{The case where $\omega_{\Delta} \neq 0$ can be treated similarly (but with more lengthy algebraic manipulations), where $\mat P_3$ and $\mat P_4$ need to be adjusted accordingly.}, and let the conditions in \eqref{eq:condition-for-full-order} hold. Then, the equilibrium point of the system is asymptotically stable.
\end{theorem}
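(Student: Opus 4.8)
\emph{Proof proposal.} The plan is to invoke the nested singular-perturbation framework of \cite{subotic2021lyapunov}, viewing the twelfth-order system \eqref{eq:full-order-system} as an interconnection of four blocks ordered from slow to fast: the GFM state $\vect x_1 = \hat{\vect v}$, the network current $\vect x_2 = \vect i$, the capacitor/voltage-loop state $\vect x_3$, and the filter-inductor/current-loop state $\vect x_4$. First I would pass to error coordinates $\vect y_i$, $i\in\{2,3,4\}$, each measuring the deviation of a fast block from its quasi-steady-state manifold obtained by zeroing that block's (and all faster blocks') right-hand sides: with $\omega_{\Delta}=0$ these manifolds are $\vect i_{\rm f}=\vect i_{\rm f}^{\star}$, $\vect\zeta_{\rm c}=\mathbbb{0}_2$ (for $\vect x_4$); $\vect v=\hat{\vect v}$, $\vect\zeta_{\rm v}=\mathbbb{0}_2$ (for $\vect x_3$); and $\vect i=\mat Y(\hat{\vect v}-\vect v_{\rm g})$ (for $\vect x_2$). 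Substituting these into \eqref{eq:full-order-system}, the $\vect x_1$-dynamics become exactly the reduced second-order model \eqref{eq:reduced-order-system} (with $\vect v$ replaced by $\hat{\vect v}$) plus an interconnection term $-\eta\mat R_{\varphi}\vect y_2$; each $\vect y_i$-block becomes a boundary-layer subsystem that is exponentially stable in isolation and is forced only by the adjacent slower error and, at the top level, by $\dot{\hat{\vect v}}$.

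Next I would build a composite Lyapunov function $V = V_1 + \sum_{i=2}^{4} d_i V_i$ with positive weights $d_i$ to be fixed. For $V_1$ I reuse the candidate $\tfrac{1}{2\eta}\norm{\hat{\vect v}-\vect v_{\rm s}}^2$ of Theorem~\ref{thm:globally-stable}; along the full system, $\dot V_1$ equals the negative-definite expression derived there (using Proposition~\ref{prop:inequality} to handle the cubic $\mat\Phi$ term) plus a cross-term $-(\hat{\vect v}-\vect v_{\rm s})\trans\mat R_{\varphi}\vect y_2$. For $V_2$ I take a quadratic form whose derivative along the network boundary layer is negative definite because the symmetric part of $\mat Z$ is positive definite, picking up cross-terms linking $\vect y_2$ to $\vect y_3$ and (through the forcing $-l_{\rm g}\mat Y\dot{\hat{\vect v}}$) to $\hat{\vect v}-\vect v_{\rm s}$ and to the cubic term. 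For $V_3,V_4$ I take quadratic forms $\vect y_i\trans\mat P_i\vect y_i$ with $\mat P_i$ chosen — exploiting $\omega_{\Delta}=0$, which removes the skew-symmetric $\mat J$-coupling of the resonant integrators — so that each cascaded proportional-resonant loop is dissipative provided $k_{\rm r}^{\rm v}/k_{\rm p}^{\rm v}$, $k_{\rm r}^{\rm v}/c_{\rm f}$, $k_{\rm r}^{\rm c}/k_{\rm p}^{\rm c}$, $k_{\rm r}^{\rm c}/l_{\rm f}$ are large enough, contributing cross-terms $\vect y_3\leftrightarrow\vect y_2$ and $\vect y_4\leftrightarrow\vect y_3$.

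The core of the argument is then a Young-inequality accounting of all inter-scale cross-terms: each coupling $\vect y_i\leftrightarrow\vect y_{i-1}$ is split so that a controllable fraction is absorbed into the fast block's own negative term while the remainder is dominated by the slower block's margin. This is precisely where the margin constants $c_1,c_2,c_3$ and the tunable quantity $c_{\epsilon}$ (associated with the scaling $\epsilon$ entering the weights $d_i$) arise, and matching the bounds term by term yields \eqref{eq:condition-a}--\eqref{eq:condition-d}: \eqref{eq:condition-a} is Theorem~\ref{thm:globally-stable}'s condition strengthened by the extra margin $c_1$ that pays for the $\vect y_2$ coupling; \eqref{eq:condition-b} makes $\eta$ small so the GFM dynamics are slow relative to the network (taming the $-l_{\rm g}\mat Y\dot{\hat{\vect v}}$ forcing, which contains the $\eta\alpha\mat\Phi(\hat{\vect v})\hat{\vect v}$ term); \eqref{eq:condition-c} makes the voltage loop fast relative to the network; and \eqref{eq:condition-d} makes the current loop fast relative to the voltage loop. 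With $d_2,d_3,d_4,\epsilon$ chosen compatibly, $\dot V$ is negative definite in $(\hat{\vect v}-\vect v_{\rm s},\vect y_2,\vect y_3,\vect y_4)$, so the equilibrium is asymptotically stable, with region of attraction the largest sublevel set of $V_1$ inside the ball $B_r$ of Lemma~\ref{lem:neighborhood}.

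I expect the main obstacle to be twofold, both stemming from the nonlinearity $\mat\Phi(\hat{\vect v})\hat{\vect v}$. First, unlike in the reduced-order proof where this cubic term only helps, here it also enters the interconnection (via $\dot{\hat{\vect v}}$ forcing $\vect y_2$), so its quadratic growth must be bounded; this is exactly why the estimates hold only on a bounded neighbourhood rather than globally, and pinning down $r$ through the cubic equation $\tfrac{(1+r/\norm{\vect v_{\rm s}})^3-1}{1/\norm{\vect v_{\rm s}}}=\epsilon$ is the delicate point. Second, the sheer bookkeeping: four coupled Lyapunov inequalities, each carrying several cross-terms, must be combined so that one consistent choice of weights and of $\epsilon$ simultaneously satisfies every domination requirement, and showing that \eqref{eq:condition-a}--\eqref{eq:condition-d} are precisely the compatibility conditions for such a choice is where the lengthy algebra lives; the details are deferred to Appendix~\ref{sec:appendix-b}.
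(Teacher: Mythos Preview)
Your proposal is correct and mirrors the paper's proof: the same four-block nested singular-perturbation decomposition via \cite{subotic2021lyapunov}, the same error coordinates and quasi-steady-state manifolds, the same Lyapunov candidates ($V_1$ from Theorem~\ref{thm:globally-stable} and quadratic $V_i=\tfrac{1}{2}\vect y_i\trans\mat P_i\vect y_i$ for $i\geq 2$), and the same composite-Lyapunov cross-term accounting that yields \eqref{eq:condition-a}--\eqref{eq:condition-d}. The only minor imprecision is that $\epsilon$ does not enter the Lyapunov weights (these are fixed as $\mu_i=\prod_{j<i}\beta_{j(j+1)}/\beta_{(j+1)j}$ in the paper) but rather the constant $c_{\epsilon}$ bounding the cubic term $\norm{\hat{\vect v}}^2\hat{\vect v}-\norm{\vect v_{\rm s}}^2\vect v_{\rm s}$ over the neighbourhood $B_r$, which is exactly where Lemma~\ref{lem:neighborhood} enters; otherwise your outline matches Appendix~\ref{sec:appendix-b} closely.
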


\begin{proof}
We perform a nested singular perturbation analysis on the full-order system given in \eqref{eq:full-order-system} through the subsequent procedures.

\textit{1) Steady-State Maps}: From the fast to slow dynamics in \eqref{eq:full-order-system}, we obtain the steady-state maps of the fourth, third, and second subsystems as
\begin{subequations}
    \begin{align*}
        \vect \phi_4 (\vect x_1, \vect x_2, \vect x_3) &= \begin{bmatrix}
            \vect i_{\rm f}^{\star} \\ \mathbbb{0}_2
        \end{bmatrix} = \begin{bmatrix}
            - k_{\rm p}^{\rm v} (\vect v - \hat{\vect v}) - k_{\rm r}^{\rm v} \vect \zeta _{\rm v} + \mat Y_{\rm f} \vect v + \vect i \\ \mathbbb{0}_2
        \end{bmatrix},\\
        \vect \phi_3 (\vect x_1, \vect x_2) &= \begin{bmatrix}
            \hat{\vect v} \\ \mathbbb{0}_2
        \end{bmatrix},\\
        \vect \phi_2 (\vect x_1) &= \mat Y (\hat{\vect {v}} - \vect {v}_{\rm g}),
    \end{align*}
\end{subequations}
by solving $\vect f_4(\vect x_1, \vect x_2, \vect x_3, \vect \phi_4) = 0$, $\vect f_3(\vect x_1, \vect x_2, \vect \phi_3, \vect \phi_4) = 0$, and $\vect f_2(\vect x_1, \vect \phi_2, \vect \phi_3) = 0$ successively. Furthermore, the equilibrium point of the full-order system can be defined as
\begin{equation*}
\begin{aligned}
    \begin{bmatrix}
        \vect x_{1\rm s} \\ 
        \vect x_{2\rm s} \\
        \vect x_{3\rm s} \\
        \vect x_{4\rm s}
    \end{bmatrix} &\coloneqq 
    \begin{bmatrix}
        \vect v_{\rm s} \\ 
        \vect \phi_2(\vect x_{1\rm s}) \\
        \vect \phi_3(\vect x_{1\rm s},\vect x_{2\rm s}) \\
        \vect \phi_4(\vect x_{1\rm s},\vect x_{2\rm s}, \vect x_{3\rm s})
    \end{bmatrix} = 
    \begin{bmatrix}
        \vect v_{\rm s} \\ 
        \mat Y ({\vect v_{\rm s}} - \vect {v}_{\rm g}) \\
        \begin{bmatrix}
            \vect v_{\rm s} \\ \mathbbb{0}_2
        \end{bmatrix}  \\
        \begin{bmatrix}
            \mat Y_{\rm f} \vect v_{\rm s} + \mat Y ({\vect v_{\rm s}} - \vect {v}_{\rm g}) \\ \mathbbb{0}_2
        \end{bmatrix}
    \end{bmatrix}.
\end{aligned}
\end{equation*}

\textit{2) Individual Reduced-Order Subsystem Representing Each Single Time-Scale of Dynamics:} With the steady-state maps, we define the reduced-order maps as
\begin{subequations}
\label{eq:vector-field}
    \begin{align}
        \vect f_1^{\rm s}(\vect x_1) &\coloneqq \vect f_1(\vect x_1, \vect \phi_2),\\
        \vect f_2^{\rm s}(\vect x_1,\vect x_2) &\coloneqq \vect f_2(\vect x_1, \vect x_2, \vect \phi_3),\\
        \vect f_3^{\rm s}(\vect x_1,\vect x_2,\vect x_3) &\coloneqq \vect f_3(\vect x_1, \vect x_2, \vect x_3, \vect \phi_4),
    \end{align}
\end{subequations}
which respectively represents the vector field of $i$th subsystem where the \textit{faster state} in $\vect f_i$ is in its steady states, i.e., $\vect x_{i+1} = \vect \phi_{i+1}$ for $i \in \{1,2,3\}$. For the purpose of analysis, we further define the error coordinates as
\begin{subequations}
\label{eq:error-coodinates}
    \begin{align}
        \vect y_2 &\coloneqq \vect x_2 - \vect \phi_2 = \vect i - \mat Y \left(\hat{\vect {v}} - \vect {v}_{\rm g}\right),\\
        \vect y_3 &\coloneqq \vect x_3 - \vect \phi_3 = \begin{bmatrix} {\vect{v}} - \hat{\vect v} \\  {\vect \zeta} _{\rm v} \end{bmatrix},\\
        \vect y_4 &\coloneqq \vect x_4 - \vect \phi_4 = \begin{bmatrix}
            \vect i_{\rm f} - \vect i_{\rm f}^{\star} \\ {\vect \zeta} _{\rm c}
        \end{bmatrix}.
    \end{align}
\end{subequations}

When additionally considering that all \textit{slower states} in $\vect f_2^{\rm s}, \vect f_3^{\rm s}, \vect f_4$ are constant, i.e., $\tfrac{\rm d}{{\rm d}t}{\vect x}_j = 0$ for all $j < i,\, i \in \{2,3,4\}$, then $\tfrac{\rm d}{{\rm d}t} {\vect \phi}_i(\vect x_1,\cdots,\vect x_{i-1}) = 0 \Rightarrow \tfrac{\rm d}{{\rm d}t} {\vect y}_i = \tfrac{\rm d}{{\rm d}t} {\vect x}_i$. Based on this, the individual reduced-order subsystem for each single time-scale of dynamics is defined as
\begin{subequations}
\label{eq:individual-reduced}
    \begin{align}
        \tfrac{\rm d}{{\rm d}t} \vect x_1 
        &= \vect f_1^{\rm s}(\vect x_1)\\
        & = [\omega_{\Delta} \mat J + \eta  (\mat S_{\varphi}^{\star} - \mat Y_{\varphi})] \vect x_1 + \eta \mat Y_{\varphi} \vect v_{\rm g} + \eta \alpha \mat \Phi(\vect x_1) \vect x_1,\notag \\ 
        \tfrac{\rm d}{{\rm d}t} \vect y_2 &= \tfrac{\rm d}{{\rm d}t} \vect x_2 = \vect f_2^{\rm s}(\vect x_1,\vect x_2) \\
        &= \vect f_2^{\rm s}(\vect x_1,\vect y_2 + \vect \phi_2) \notag\\
        & = - l_{\rm g}^{-1} \mat Z \vect{y}_2, \notag \\
        \tfrac{\rm d}{{\rm d}t} \vect y_3 &= \tfrac{\rm d}{{\rm d}t} \vect x_3 = \vect f_3^{\rm s}(\vect x_1,\vect x_2,\vect x_3) \\
        & = \vect f_3^{\rm s}(\vect x_1,\vect x_2,\vect y_3 + \vect \phi_3), \notag\\
        &= \begin{bmatrix} -c_{\rm f}^{-1} k_{\rm p}^{\rm v}\mat I_2 & -c_{\rm f}^{-1} k_{\rm r}^{\rm v} \mat I_2 \\ \mat I_2 & \omega_{\Delta} \mat J \end{bmatrix} \vect y_3, \notag \\
        \tfrac{\rm d}{{\rm d}t} \vect y_4 &= \tfrac{\rm d}{{\rm d}t} \vect x_4 = \vect f_4(\vect x_1,\vect x_2,\vect x_3,\vect x_4)\\
        &= \vect f_4(\vect x_1,\vect x_2,\vect x_3, \vect y_4 + \vect \phi_4), \notag\\
        &= \begin{bmatrix} -l_{\rm f}^{-1} k_{\rm p}^{\rm c}\mat I_2 & -l_{\rm f}^{-1} k_{\rm r}^{\rm c} \mat I_2 \\ \mat I_2 & \omega_{\Delta} \mat J \end{bmatrix} \vect y_4. \notag
    \end{align}
\end{subequations}
In these reduced-order subsystems, all faster states are in their steady-state maps and all slower states are constant.

\textit{3) Lyapunov Functions and Bounds of Their Decrease:} We consider the following Lyapunov candidate functions for the individual reduced-order subsystems in \eqref{eq:individual-reduced}, respectively,
\begin{subequations}
    \begin{align*}
        V_1 &\coloneqq \tfrac{1}{2 \eta} \left(\vect x_1 - \vect v_{\rm s}\right)\trans \left(\vect x_1 - \vect v_{\rm s}\right), \\
        V_2 & \coloneqq \tfrac{l_{\rm g}}{2 r_{\rm g}} \vect y_2 \trans \vect y_2, \\
        V_3 &\coloneqq \tfrac{1}{2} \vect y_3 \trans \mat P_3 \vect y_3, \\
        V_4 & \coloneqq \tfrac{1}{2} \vect y_4 \trans \mat P_4 \vect y_4,\\
        \mat P_3 &\coloneqq \begin{bmatrix} c_{\rm f}/k_{\rm p}^{\rm v} \mat I_2 & c_{\rm f}/k_{\rm r}^{\rm v} \mat I_2 \\ c_{\rm f}/k_{\rm r}^{\rm v} \mat I_2 & k_{\rm p}^{\rm v}/k_{\rm r}^{\rm v} \mat I_2 + k_{\rm r}^{\rm v}/k_{\rm p}^{\rm v} \mat I_2 \end{bmatrix} ,\\
        \mat P_4 &\coloneqq \begin{bmatrix} l_{\rm f}/k_{\rm p}^{\rm c} \mat I_2 & l_{\rm f}/k_{\rm r}^{\rm c} \mat I_2 \\ l_{\rm f}/k_{\rm r}^{\rm c} \mat I_2 & k_{\rm p}^{\rm c}/k_{\rm r}^{\rm c} \mat I_2 + k_{\rm r}^{\rm c}/k_{\rm p}^{\rm c} \mat I_2 \end{bmatrix}.
    \end{align*}
\end{subequations}
We remark that $\mat P_3$ and $\mat P_4$ are established for the case where $\omega_{\Delta} = 0$, which should be adjusted for $\omega_{\Delta} \neq 0$ in general.

For each individual reduced-order subsystem, we bound the decrease of the Lyapunov functions as follows,
\begin{subequations}
\label{eq:bound-for-individual}
    \begin{align}
        \tfrac{\partial V_1}{\partial \vect x_1} \vect f_1^{\rm s} &\leq -\alpha_1 \psi_1(\vect x_1)^2, \ \alpha_1 > 0, \\
        \tfrac{\partial V_2}{\partial \vect y_2} \vect f_2^{\rm s} &\leq -\alpha_2 \psi_2(\vect y_2)^2, \ \alpha_2 > 0, \\
        \tfrac{\partial V_3}{\partial \vect y_3} \vect f_3^{\rm s} &\leq -\alpha_3 \psi_3(\vect y_3)^2, \ \alpha_3 > 0, \\
        \tfrac{\partial V_4}{\partial \vect y_4} \vect f_4 &\leq -\alpha_4 \psi_4(\vect y_4)^2, \ \alpha_4 > 0,
    \end{align}
\end{subequations}
where $\alpha_1 = -\kappa_{\rm r} - \alpha + \alpha \tfrac{1}{v^{\star 2}} \tfrac{1}{2} \norm{\vect v_{\rm s}}^2 > 0$ and $\psi_1 = \norm{\vect x_1 - \vect v_{\rm s}}$ (cf. Theorem~\ref{thm:globally-stable}), $\alpha_2 = 1$ and $\psi_2 = \norm{\vect y_2}$, $\alpha_3 = 1 - c_{\rm f}/k_{\rm r}^{\rm v} > 0$ and $\psi_3 = \norm{\vect y_3}$, and $\alpha_4 = 1 - l_{\rm f}/k_{\rm r}^{\rm c} > 0$ and $\psi_4 = \norm{\vect y_4}$.

We bound the effect of neglecting faster dynamics in these reduced-order systems as follows,
\begin{subequations}
\label{eq:bound-for-fast}
    \begin{align}
        \tfrac{\partial V_1}{\partial \vect x_1} (\vect f_1 - \vect f_1^{\rm s}) & \leq \beta_{12} \psi_1\psi_2, \ \beta_{12} > 0, \\
        \tfrac{\partial V_2}{\partial \vect y_2} (\vect f_2 - \vect f_2^{\rm s}) & \leq \beta_{23} \psi_2\psi_3, \ \beta_{23} > 0,\\
        \tfrac{\partial V_3}{\partial \vect y_3} (\vect f_3 - \vect f_3^{\rm s}) & \leq \beta_{34} \psi_3\psi_4, \ \beta_{34} > 0,
    \end{align}
\end{subequations}
where $\beta_{12} = 1$, $\beta_{23} = 1/r_{\rm g}$, and $\beta_{34} = 1/k_{\rm p}^{\rm v} + 1/k_{\rm r}^{\rm v}$.

We further bound the effect of treating the slower states in the reduced-order systems as constants as follows,
\begin{subequations}
\label{eq:bound-for-slow}
    \begin{align}
        \label{eq:bound-for-slow-a}
        -\tfrac{\partial V_2}{\partial \vect y_2} \tfrac{\partial \vect \phi_2}{\partial \vect x_1} \vect f_1 & \leq b_{211}\psi_2\psi_1 + b_{221}\psi_2^2, \\
        \label{eq:bound-for-slow-b}
        -\tfrac{\partial V_3}{\partial \vect y_3} \tfrac{\partial \vect \phi_3}{\partial \vect x_1} \vect f_1 & \leq b_{311}\psi_3\psi_1 + b_{321}\psi_3\psi_2, \\
        \label{eq:bound-for-slow-c}
        -\tfrac{\partial V_3}{\partial \vect y_3} \tfrac{\partial \vect \phi_3}{\partial \vect x_2} \vect f_2 & \leq b_{312}\psi_3\psi_1 + b_{322} \psi_3\psi_2 + b_{332} \psi_3^2, \\
        \label{eq:bound-for-slow-d}
        -\tfrac{\partial V_4}{\partial \vect y_4} \tfrac{\partial \vect \phi_4}{\partial \vect x_1} \vect f_1 & \leq b_{411}\psi_4\psi_1 + b_{421}\psi_4\psi_2, \\
        \label{eq:bound-for-slow-e}
        -\tfrac{\partial V_4}{\partial \vect y_4} \tfrac{\partial \vect \phi_4}{\partial \vect x_2} \vect f_2 & \leq b_{412}\psi_4\psi_1 + b_{422} \psi_4\psi_2 + b_{432} \psi_4\psi_3, \\
        \label{eq:bound-for-slow-f}
        -\tfrac{\partial V_4}{\partial \vect y_4} \tfrac{\partial \vect \phi_4}{\partial \vect x_3} \vect f_3 & \leq b_{413}\psi_4\psi_1 + b_{423} \psi_4\psi_2 + b_{433} \psi_4\psi_3 + b_{443} \psi_4^2,
    \end{align}
\end{subequations}
where $b_{211} = c_{\epsilon} \eta \frac{l_{\rm g}}{r_{\rm g}} \norm{\mat Y} $, $b_{221} = \eta\frac{l_{\rm g}}{r_{\rm g}}\norm{\mat Y}$, $b_{311} = c_{\epsilon} c_{\rm v} \eta $, $b_{321} = c_{\rm v}\eta $, $b_{312} = b_{322} = b_{332} = 0$, $b_{411} = c_{\epsilon} c_{\rm c} \eta k_{\rm p}^{\rm v} $, $b_{421} = c_{\rm c} \eta k_{\rm p}^{\rm v} $, $b_{412} = 0$, $b_{422} = c_{\rm c}\norm{\mat Z}/l_{\rm g}$, $b_{432} = c_{\rm c}/l_{\rm g} $, $b_{413} = b_{423} = 0$, $b_{433} = c_{\rm c} \bigl(\norm{\mat Y_{\rm f} - k_{\rm p}^{\rm v}\mat I_2} (k_{\rm p}^{\rm v} + k_{\rm r}^{\rm v})/c_{\rm f} + k_{\rm r}^{\rm v} \bigr) $, and $b_{443} = c_{\rm c}\norm{\mat Y_{\rm f} - k_{\rm p}^{\rm v}\mat I_2}/c_{\rm f}$, and the employed
auxiliary variables are defined as $c_{\epsilon} \coloneqq \norm{\mat S_{\varphi}^{\star} - \mat Y_{\varphi} + \alpha \mat {I}_2} + \alpha \epsilon \frac{\norm{\vect v_{\rm s}}^2}{v^{\star 2}} $ with a tunable parameter $\epsilon > 0$, $c_{\rm v} \coloneqq c_{\rm f}/k_{\rm p}^{\rm v} + c_{\rm f}/k_{\rm r}^{\rm v}$, and $c_{\rm c} \coloneqq l_{\rm f}/k_{\rm p}^{\rm c} + l_{\rm f}/k_{\rm r}^{\rm c}$.

We can verify that all the bounds in \eqref{eq:bound-for-individual}, \eqref{eq:bound-for-fast}, and \eqref{eq:bound-for-slow} are satisfied globally, except for those in \eqref{eq:bound-for-slow-a}, \eqref{eq:bound-for-slow-b}, and \eqref{eq:bound-for-slow-d} being only valid for $\vect x_1$ (i.e., $\hat{\vect v}$) in the range of $\bigl\Vert \norm{\hat{\vect v}}^2 \hat{\vect v} - \norm{\vect v_{\rm s}}^2 \vect v_{\rm s} \bigr\Vert \leq \epsilon \norm{\vect v_{\rm s}}^2 \norm{\hat{\vect v} - \vect v_{\rm s}} $ instead of globally. To ensure that in this range there exists a neighborhood containing the equilibrium $\vect v_{\rm s}$, we require $\epsilon > 3$, as shown later in Lemma~\ref{lem:neighborhood}.

To establish a Lyapunov function for the full-order dynamics, we further define $\beta_{21}$ to $\beta_{43}$ and $\gamma_2$ to $\gamma_4$ as
\begin{subequations}
    \begin{alignat*}{4}
        &\underbrace{b_{211}}_{\beta_{21} > 0},\ && \underbrace{b_{221}}_{\gamma_{2}}, && &&\\
        &\underbrace{b_{311} + b_{312}}_{\beta_{31} > 0}, \ && \underbrace{b_{321} + b_{322}}_{\beta_{32} > 0}, \ &&\underbrace{b_{332}}_{\gamma_{3}}, && \\
        &\underbrace{b_{411} + b_{412} + b_{413}}_{\beta_{41} > 0}, \ && \underbrace{b_{421} + b_{422} + b_{423}}_{\beta_{42} > 0}, \ &&\underbrace{b_{432} + b_{433}}_{\beta_{43} > 0},\ && \underbrace{b_{443}}_{\gamma_{4}}.
    \end{alignat*}
\end{subequations}
We now consider the following composite Lyapunov candidate function for the full-order system in \eqref{eq:full-order-system},
\begin{equation}
    \nu \coloneqq \mu_1 V_1 + \mu_2 V_2 + \mu_3 V_3 + \mu_4 V_4,
\end{equation}
where $\mu_1 = 1$ and $\mu_i \coloneqq \prod_{j = 1}^{i-1}\frac{\beta_{j(j+1)}}{\beta_{(j+1)j}} > 0$ for $i\in\{2,3,4\}$. The time derivative of $\nu$ along the full-order dynamics is derived and bounded as \cite[Theorem 3]{subotic2021lyapunov}
\begin{equation}
    \dot \nu \leq -\begin{bmatrix}
        \psi_1 \\ \psi_2 \\ \psi_3 \\ \psi_4
    \end{bmatrix} \trans \mat M
    \begin{bmatrix}
        \psi_1 \\ \psi_2 \\ \psi_3 \\ \psi_4
    \end{bmatrix} ,
\end{equation}
where $\mat M \in \mathbb{R}^{4\times 4}$ is a symmetric matrix and it is recursively defined by its leading principal minors as follows,
\begin{equation}
\label{eq:definition-m}
    \mat M_i \coloneqq \begin{bmatrix}
        \mat M_{i-1} & - \vect \beta_i \mu_i \\ -\vect \beta_i\trans \mu_i & (\alpha_i - \gamma_i) \mu_i
    \end{bmatrix},\ i\in \{2,3,4\},
\end{equation}
with $\mat M_1 = \alpha_1$ and $\vect \beta_i \coloneqq [\frac{1}{2}\beta_{i1}\ \cdots\ \frac{1}{2}\beta_{i(i-2)}\ \beta_{i(i-1)}]\trans$.

\textit{4) Stability Conditions:} 
Using \cite[Proposition 1]{subotic2021lyapunov}, it is verified that the conditions in \eqref{eq:condition-for-full-order} guarantee $\mat M_i \succeq \mu_i c_i \mat I_i \succ 0$. The stability margins $c_1 \in (0,\, \alpha_1)$, and $c_2$ and $c_3$ are defined by $c_i \coloneqq \frac{1}{2}(\alpha_i - \gamma_i + \frac{\beta_{i(i-1)}}{\beta_{(i-1)i}} c_{i-1} - \sqrt{D_i})$ and $D_i \coloneqq (\alpha_i - \gamma_i + \frac{\beta_{i(i-1)}}{\beta_{(i-1)i}} c_{i-1})^2 + 4(\vect \beta_i\trans\vect \beta_i - (\alpha_i - \gamma_i) \frac{\beta_{i(i-1)}}{\beta_{(i-1)i}} c_{i-1})$ for $i\in \{2,3\}$ \cite[Condition 2]{subotic2021lyapunov}. The other parameters in \eqref{eq:condition-for-full-order} are given as $\Tilde{\beta}_{41} = \beta_{41}/c_{\rm c}$, $\Tilde{\beta}_{42} = \beta_{42}/c_{\rm c}$, $\Tilde{\beta}_{43} = \beta_{43}/c_{\rm c}$, $\Tilde{\gamma}_{4} = \gamma_{4}/c_{\rm c}$. Based on the foregoing results, we conclude by leveraging \cite[Theorem 3]{subotic2021lyapunov} that the equilibrium point is asymptotically stable.
\end{proof}

\begin{lemma}
\label{lem:neighborhood}
For a given $\vect x_{\rm s} \in \mathbb{R}^2 \backslash \{\mathbbb{0}_2$\}, there exists a non-empty neighborhood, $B_{r} \coloneqq \{\vect x \in \mathbb{R}^2 \, \big\vert\, \norm{\vect x - \vect x_{\rm s}} < r\}$, contained in the domain $D_{\epsilon} \coloneqq \bigr\{ \vect x \in \mathbb{R}^2 \, \big\vert\, \bigl\Vert \norm{\vect x}^2 {\vect x} - \norm{\vect x_{\rm s}}^2 \vect x_{\rm s} \bigr\Vert \leq \epsilon \norm{\vect x_{\rm s}}^2 \norm{\vect x - \vect x_{\rm s}} \bigl\}$, provided that $\epsilon > 3$. Moreover, given an $\epsilon > 3$, the boundary condition for the largest neighborhood $B_{r}$ is given by the positive real root of $\frac{(1+r/\norm{\vect x_{\rm s}})^3-1}{r/\norm{\vect x_{\rm s}}} = \epsilon$, where a larger $\epsilon$ allows a wider $B_{r}$.
\end{lemma}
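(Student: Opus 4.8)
The plan is to turn the set inclusion $B_r \subseteq D_\epsilon$ into a single scalar inequality by expanding the cubic map $\vect x \mapsto \norm{\vect x}^2 \vect x$ around $\vect x_{\rm s}$. Writing $\vect h := \vect x - \vect x_{\rm s}$ and using $\norm{\vect x}^2 = \norm{\vect x_{\rm s}}^2 + 2\vect x_{\rm s}\trans \vect h + \norm{\vect h}^2$, I would expand
\[
\norm{\vect x}^2 \vect x - \norm{\vect x_{\rm s}}^2 \vect x_{\rm s} = \norm{\vect x_{\rm s}}^2 \vect h + 2(\vect x_{\rm s}\trans\vect h)\vect x_{\rm s} + \norm{\vect h}^2 \vect x_{\rm s} + 2(\vect x_{\rm s}\trans\vect h)\vect h + \norm{\vect h}^2 \vect h ,
\]
and bound the right-hand side by the triangle inequality together with $\lvert \vect x_{\rm s}\trans\vect h\rvert \le \norm{\vect x_{\rm s}}\norm{\vect h}$. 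This gives $\norm{\norm{\vect x}^2 \vect x - \norm{\vect x_{\rm s}}^2 \vect x_{\rm s}} \le 3\norm{\vect x_{\rm s}}^2\norm{\vect h} + 3\norm{\vect x_{\rm s}}\norm{\vect h}^2 + \norm{\vect h}^3$, and the key algebraic observation is that this telescopes to the shifted cube $(\norm{\vect x_{\rm s}} + \norm{\vect h})^3 - \norm{\vect x_{\rm s}}^3$.

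Next I would normalize by setting $t := \norm{\vect h}/\norm{\vect x_{\rm s}} \ge 0$. Since $(1+t)^3 - 1 = t\,(t^2 + 3t + 3)$, the bound above shows that $\vect x \in D_\epsilon$ holds as soon as $g(t) := t^2 + 3t + 3 \le \epsilon$ (the case $t = 0$, i.e. $\vect x = \vect x_{\rm s}$, being trivial because both sides of the defining inequality of $D_\epsilon$ vanish). The function $g$ is continuous and strictly increasing on $[0,\infty)$ with $g(0) = 3$ and $g(t) \to \infty$, so the hypothesis $\epsilon > 3$ yields a unique positive root $t^\star$ of $g(t) = \epsilon$, equivalently the unique positive root of $\frac{(1 + r/\norm{\vect x_{\rm s}})^3 - 1}{r/\norm{\vect x_{\rm s}}} = \epsilon$ with $r := t^\star \norm{\vect x_{\rm s}}$. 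For every $\vect x$ with $\norm{\vect h} < r$ one then has $t < t^\star$, hence $g(t) < \epsilon$, hence $\vect x \in D_\epsilon$; this proves $B_r \subseteq D_\epsilon$, and $B_r$ is non-empty since $t^\star > 0$. Monotonicity in $\epsilon$ follows because $g$ is increasing (implicitly, $\mathrm{d}r/\mathrm{d}\epsilon = \norm{\vect x_{\rm s}}/(3 + 2t^\star) > 0$), so a larger $\epsilon$ gives a larger $r$.

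Finally, to justify that this $B_r$ is the \emph{largest} admissible neighborhood, I would check that the bound from the first paragraph is attained along the ray from $\vect x_{\rm s}$ pointing away from the origin: for $\vect x = (1+t)\vect x_{\rm s}$ with $t > 0$ one computes $\norm{\vect x}^2\vect x = (1+t)^3\norm{\vect x_{\rm s}}^2\vect x_{\rm s}$, so the defining inequality of $D_\epsilon$ becomes exactly $g(t) \le \epsilon$. Hence for any $r' > r$ one can pick $t$ with $t^\star < t < r'/\norm{\vect x_{\rm s}}$, producing a point of $B_{r'}$ outside $D_\epsilon$, so no ball larger than $B_r$ fits. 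I expect this tightness observation to be the only delicate point — it is what upgrades the sufficient bound into an exact characterization of the radius — whereas the remaining steps are the triangle-inequality expansion and elementary analysis of the scalar cubic $g$.
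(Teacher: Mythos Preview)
Your proposal is correct and follows essentially the same approach as the paper: both arguments bound $\bigl\Vert \norm{\vect x}^2\vect x - \norm{\vect x_{\rm s}}^2\vect x_{\rm s}\bigr\Vert$ by $(\norm{\vect x_{\rm s}}+\norm{\vect h})^3 - \norm{\vect x_{\rm s}}^3$, observe tightness along the ray through the origin and $\vect x_{\rm s}$, and then analyze the scalar function $t\mapsto \frac{(1+t)^3-1}{t}$ with limit $3$ at $t\to 0$. The only cosmetic difference is that the paper normalizes to $\pha x_{\rm s}=1$ in $\mathbb{C}$ and uses the two-term split $\lvert\pha x\rvert^2\pha x - \lvert\pha x\rvert^2 + \lvert\pha x\rvert^2 - 1$, whereas you stay in $\mathbb{R}^2$ and fully expand in $\vect h$; your treatment of the ``largest $B_r$'' claim is in fact a bit more explicit than the paper's.
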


\begin{proof}
For ease of exposition, we provide an equivalent proof in $\mathbb{C}$, where $\pha x_{\rm s} \in \mathbb{C} \backslash\{0\}$, $B_{r} = \{\pha x \in \mathbb{C} \, \big\vert\, \vert \pha x - \pha x_{\rm s}\vert < r\}$, and $D_{\epsilon} = \bigr\{ \pha x \in \mathbb{C} \, \big\vert\, \bigl\vert \vert \pha x \vert^2 {\pha x} - \vert \pha x_{\rm s} \vert ^2 \pha x_{\rm s} \bigr\vert \leq \epsilon \vert \pha x_{\rm s} \vert ^2 \vert \pha x - \pha x_{\rm s} \vert \bigl\}$.

We consider $\pha x_{\rm s} = 1$ without loss of generality because if $\pha x_{\rm s} \neq 1$, we can define
\begin{equation}
\label{eq:scaling}
    {\pha x}^{\prime} = \pha x/\pha x_{\rm s},\quad r^{\prime} \coloneqq r/\vert \pha x_{\rm s} \vert
\end{equation}
to obtain $\vert \pha x^{\prime} - 1\vert < r^{\prime}$ and  $\bigl\vert \vert {\pha x}^{\prime} \vert^2 {{\pha x}^{\prime}} - 1 \bigr\vert \leq \epsilon \vert {\pha x}^{\prime} - 1 \vert $, where the scaled $r^{\prime}$ serves as a different radius from $r$. For $\pha x_{\rm s} = 1$, we define the boundary of $B_r$ as
\begin{equation*}
    \pha x_{\theta} \coloneqq 1 + r \cos\theta + jr\sin\theta, \quad \theta \in [0,\, 2\pi).
\end{equation*}
We aim to find the lower bound of $\epsilon$ that guarantees $\pha x_{\theta},\, \forall \theta \in [0,\, 2\pi)$ to lie within $D_{\epsilon}$, i.e., $\bigl\vert \vert {\pha x}_{\theta} \vert^2 {{\pha x}_{\theta}} - 1 \bigr\vert \leq \epsilon \vert \pha x - \pha x_{\rm s} \vert = \epsilon r$.

Next, we show $\bigl\vert \vert {\pha x}_{\theta} \vert^2 {{\pha x}_{\theta}} - 1 \bigr\vert \leq (1+r)^3-1$ and the equality holds when $\theta = 0$. We have $\abs{\pha x_{\theta}}^2 = 1 + 2r\cos\theta + r^2$, which gives $\abs{\pha x_{\theta}}^2 \leq (1 + r)^2$ and $\big \vert\abs{\pha x_{\theta}}^2 - 1 \big \vert = r\abs{2\cos\theta + r} \leq r(r + 2)$. Then,
\begin{align*}
    \bigl\vert \vert {\pha x}_{\theta} \vert^2 {{\pha x}_{\theta}} - 1 \bigr\vert &= \bigl\vert \vert {\pha x}_{\theta} \vert^2 {{\pha x}_{\theta}} - \vert {\pha x}_{\theta} \vert^2 + \vert {\pha x}_{\theta} \vert^2 - 1 \bigr \vert \\
    &\leq \vert {\pha x}_{\theta} \vert^2 \vert {\pha x}_{\theta} - 1 \vert  + \bigl \vert \vert {\pha x}_{\theta} \vert^2 - 1 \bigr \vert \\
    &\leq (1+r)^2 r + r(r + 2) \\
    &= (1+r)^3-1.
\end{align*}
The equality holds when $\theta = 0$, where ${\pha x}_{\theta} = 1 + r$.

Given that $\bigl\vert \vert {\pha x}_{\theta} \vert^2 {{\pha x}_{\theta}} - 1 \bigr\vert \leq (1+r)^3-1$ holds, the condition $\frac{(1+r)^3-1}{r} \leq \epsilon$, i.e., $(1+r)^3-1 \leq \epsilon r$, suffices to guarantee $\bigl\vert \vert {\pha x}_{\theta} \vert^2 {{\pha x}_{\theta}} - 1 \bigr\vert \leq \epsilon r $, that is, to ensure that $\pha x_{\theta},\, \forall \theta \in [0,\, 2\pi)$ lie within $D_{\epsilon}$. We, then, use $f(r) \coloneqq \frac{(1+r)^3-1}{r}$ to identify the lower bound of $\epsilon$ such that there exists $r > 0$ satisfying $f(r) \leq \epsilon$. Since $f(r) $ is increasing in $\mathbb{R}_{>0}$ and $\lim_{r\to 0} f(r) = 3$, it follows that if $\epsilon > 3$, there will exist $r > 0$ such that $f(r) \leq 3$. As a consequence, $\epsilon > 3$ guarantees that $\pha x_{\theta},\, \forall \theta \in [0,\, 2\pi)$, the boundary of $B_r$, is contained in $D_{\epsilon}$.

Given an $\epsilon > 3$, the largest $B_{r}$ cooresponds to the positive real root of $f(r) = \frac{(1+r)^3-1}{r} = \epsilon$. In the general case, where $\pha x_{\rm s} \neq 1$, we replace $r$ with the scaled $r^{\prime} = r/\vert \pha x_{\rm s} \vert $ as shown in \eqref{eq:scaling}, giving the boundary condition as $\frac{(1+r/\vert \pha x_{\rm s} \vert)^3-1}{r/\vert \pha x_{\rm s} \vert} = \epsilon$. Since $\frac{(1+r/\vert \pha x_{\rm s} \vert)^3-1}{r/\vert \pha x_{\rm s} \vert}$ is strictly increasing for $r \in \mathbb{R}_{>0}$, a larger $\epsilon$ allows a wider neighbourhood $B_{r}$.
\end{proof}

\begin{example}[Region of attraction for the high-order systems]
\label{exp:roa}
Consider the parameters in Section~\ref{sec:case-study-a}. We have verified that with the control gains of $\alpha = 1$ pu, $\eta = 0.02\omega_0$ rad/s, $k_{\rm p}^{\rm v} = 1$ pu, and $k_{\rm r}^{\rm v} = 10$ pu/s, the stability conditions in \eqref{eq:condition-a} to \eqref{eq:condition-c} are easily satisfied for $3 < \epsilon \leq 11.4$. To satisfy \eqref{eq:condition-d}, unrealistically high gains for $k_{\rm p}^{\rm c}$ and $k_{\rm r}^{\rm c}$ are required. The current control loop with regular gains, however, remains stable in all case studies in Section~\ref{sec:case-studies}, where $k_{\rm p}^{\rm c} = 2$ pu, $k_{\rm r}^{\rm v} = 20$ pu/s are employed. This implies that \eqref{eq:condition-d} may be considerably conservative. To mitigate the conservatism, one can alternatively use $\mat M \succ 0$ ($\mat M$ defined in \eqref{eq:definition-m}) to evaluate the stability or choose alternative $\mat P_3$ and $\mat P_4$ in the Lyapunov functions. In this example, where we have identified the allowable range of $3 < \epsilon \leq 11.4$, the region of attraction for the high-order systems is illustrated in Fig.~\ref{fig:case-exp4}. It is seen that a large $\epsilon$ allows a wider region of attraction, where the largest region of attraction is identified with $\epsilon = 11.4$. The largest region of attraction is sufficient to cover the typical transient operating range of converters transitioning from a pre-fault point to a fault-on steady state.
\end{example}

\begin{remark}[Potential for wide application of the nested singular perturbation approach]
We finally remark that the nested singular perturbation analysis may apply to other GFM controls, which differ only in the GFM control dynamics while the other dynamics are largely the same. More generally, the application is widely possible for any nested model form of converter systems in the power electronics field, such as DC/DC and DC/AC converters and their hybrid interconnected microgrids or power grids.
\end{remark}

\begin{figure}
  \begin{center}
  \includegraphics{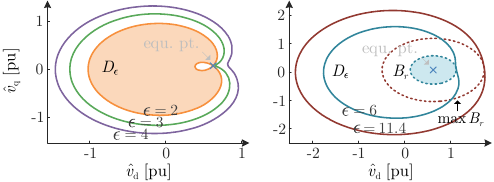}
  \caption{The neighborhood $B_r$, as an estimate of region of attraction within $D_{\epsilon}$, exists only if $\epsilon > 3$ is chosen. The largest region of attraction is identified with $\epsilon = 11.4$.}
  \label{fig:case-exp4}
  \end{center}
\end{figure}

\bibliographystyle{IEEEtran}
\bibliography{IEEEabrv,Bibliography}

\begin{IEEEbiography}
[{\includegraphics[width=1in,height=1.25in,clip,keepaspectratio]{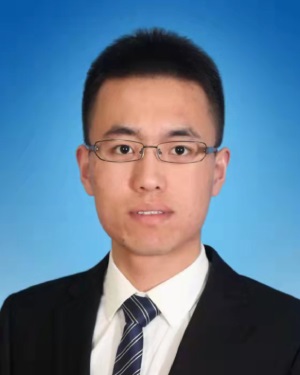}}]{Xiuqiang He (Member, IEEE)}
received his B.S. degree and Ph.D. degree in control science and engineering from Tsinghua University, China, in 2016 and 2021, respectively. Since 2021, he has been a Postdoctoral Researcher with the Automatic Control Laboratory, ETH Zürich, Switzerland. His current research interests include power system dynamics, stability, and control, involving multidisciplinary expertise in automatic control, power systems, power electronics, and renewable energy sources. Dr. He was the recipient of the Beijing Outstanding Graduates Award and the Outstanding Doctoral Dissertation Award from Tsinghua University.
\end{IEEEbiography}

\begin{IEEEbiography}
[{\includegraphics[width=1in,height=1.25in,clip,keepaspectratio]{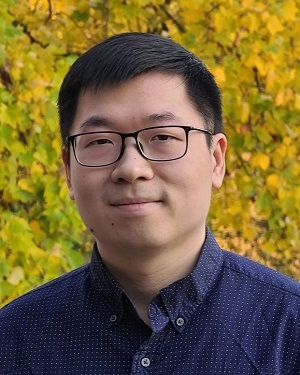}}]{Linbin Huang (Member, IEEE)}
received the B.Eng. and Ph.D. degrees from Zhejiang University, Hangzhou, China, in 2015 and 2020, respectively. Currently, he is a senior scientist with the Automatic Control Laboratory at ETH Zürich, 8092 Zürich, Switzerland. His research interests include power system stability, optimal control of power electronics, and data-driven control.
\end{IEEEbiography}

\begin{IEEEbiography}
[{\includegraphics[width=1in,height=1.25in,clip,keepaspectratio]{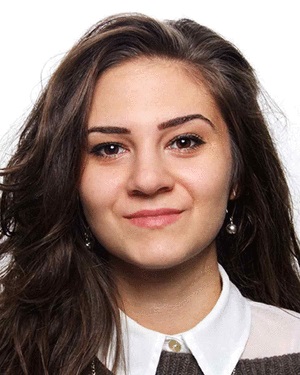}}]{Irina Subotić}
received the bachelor's degree in electrical engineering from the University of Belgrade, Belgrade, Serbia in 2016, and the master’s degree in robotics, systems and control from ETH Zürich, Zürich, Switzerland in 2018, and she has done her M.S. thesis in collaboration with the National Renewable Energy Laboratory (NREL), Golden, CO, USA. She received her Ph.D. degree from ETH Zürich, Zürich, Switzerland in 2024. She currently serves as a research scientist at ABB. Her research covers the control and optimization of converter-dominated power systems.
\end{IEEEbiography}

\begin{IEEEbiography}
[{\includegraphics[width=1in,height=1.25in,clip,keepaspectratio]{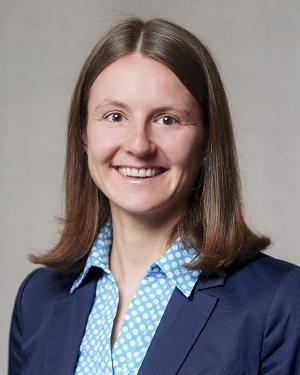}}]{Verena Häberle (Graduate Student Member, IEEE)}
received the B.Sc. and M.Sc. degrees in electrical engineering and information technology from ETH Zürich, Switzerland, in 2018 and 2020, respectively, where she is currently pursuing the Ph.D. degree with the Automatic Control Laboratory. For her outstanding academic achievements during her Master’s thesis with the Automatic Control Laboratory, ETH Zürich, under Prof. F. Dörfler, she was honored with ETH Medal and the SGA Award from the Swiss Society of Automatic Control. Her research focuses on the control design of dynamic virtual power plants in future power systems.
\end{IEEEbiography}

\begin{IEEEbiography}
[{\includegraphics[width=1in,height=1.25in,clip,keepaspectratio]{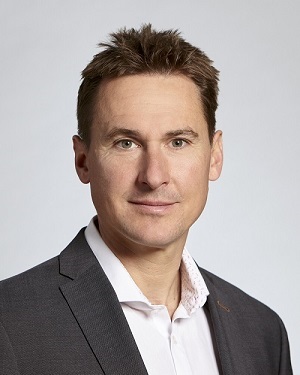}}]{Florian Dörfler (Senior Member, IEEE)}
is a Professor at the Automatic Control Laboratory at ETH Zürich. He received his Ph.D. degree in Mechanical Engineering from the University of California at Santa Barbara in 2013, and a Diplom degree in Engineering Cybernetics from the University of Stuttgart in 2008. From 2013 to 2014 he was an Assistant Professor at the University of California Los Angeles. He has been serving as the Associate Head of the ETH Zürich Department of Information Technology and Electrical Engineering from 2021 until 2022. His research interests are centered around automatic control, system theory, and optimization. His particular foci are on network systems, data-driven settings, and applications to power systems. He is a recipient of the distinguished young research awards by IFAC (Manfred Thoma Medal 2020) and EUCA (European Control Award 2020). His students were winners or finalists for Best Student Paper awards at the European Control Conference (2013, 2019), the American Control Conference (2016), the Conference on Decision and Control (2020), the PES General Meeting (2020), the PES PowerTech Conference (2017), the International Conference on Intelligent Transportation Systems (2021), and the IEEE CSS Swiss Chapter Young Author Best Journal Paper Award (2022). He is furthermore a recipient of the 2010 ACC Student Best Paper Award, the 2011 O. Hugo Schuck Best Paper Award, the 2012-2014 Automatica Best Paper Award, the 2016 IEEE Circuits and Systems Guillemin-Cauer Best Paper Award, the 2022 IEEE Transactions on Power Electronics Prize Paper Award, and the 2015 UCSB ME Best PhD award. He is currently serving on the council of the European Control Association and as a senior editor of Automatica.
\end{IEEEbiography}
\end{document}